\let\proof\@undefined\let\endproof\@undefined\makeatother
\renewcommand{\thepage}{}
\renewcommand{\appendix}{\footnotesize\parindent 0cm\setcounter{equation}{0} 
\renewcommand{\theequation}{A.\arabic{equation}}
\setcounter{lemma}{0}\renewcommand{\thelemma}{A.\arabic{lemma}}}
\newcommand{\oy}{\overline{Y}}
\newtheorem{assumption}{Assumption}
\newtheorem{proposition}{Proposition}
\newtheorem{lemma}{Lemma}
\newtheorem{theorem}{Theorem}
\newtheorem{remark}{Remark}
\newcommand{\mme}{\mathbb{E}}
\newcommand{\pder}[3][]{\frac{\partial^{#1} #2}{\partial #3^{#1}}}
\DeclareMathOperator{\var}{var}
\def\monthname{\ifcase\month\or
January\or February\or March\or April\or May\or June\or
July\or August\or September\or October\or November\or December\fi}
\renewcommand{\appendix}{\small\parindent 0cm\setcounter{equation}{0} 
\renewcommand{\theequation}{A.\arabic{equation}}
\setcounter{lemma}{0}\renewcommand{\thelemma}{A.\arabic{lemma}}
\setcounter{theorem}{0}\renewcommand{\thetheorem}{A.\arabic{theorem}}}
\title[Semiparametric Estimation of Treatment Effects in Randomized Experiments]{Semiparametric Estimation of Treatment Effects in Randomized Experiments}
\author{Susan Athey}
\address{Graduate School of Business, Stanford University, and NBER, USA.}
\email{athey@stanford.edu}
\author{Peter J. Bickel}
\address{Department of Statistics, University of California, Berkeley, USA.}
\email{bickel@stat.berkeley.edu}
\author{Aiyou Chen}
\address{Google LLC, Mountain View, USA.}
\email{aiyouchen@google.com}
\author{Guido W. Imbens}
\address{Professor of Economics, and Amman Mineral Faculty Fellow, Graduate School of Business and Department of Economics, Stanford University, and NBER, USA.}
\email{imbens@stanford.edu}
\author[Athey {\it et al.}] {Michael Pollmann}
\address{Department of Economics, Duke University, USA.}
\email{michael.pollmann@duke.edu}
\begin{document}

\begin{abstract}
We develop new semiparametric methods for estimating treatment effects. We focus on settings where the outcome distributions may be thick tailed, where treatment effects may be small, where sample sizes are large and where assignment is completely random. This setting is of particular interest in recent online experimentation. We propose using parametric models for the treatment effects, leading to semiparametric models for the outcome distributions. We derive the semiparametric efficiency bound for the treatment effects for this setting, and propose efficient estimators. In the leading case with constant quantile treatment effects one of the proposed efficient estimators has an interesting interpretation as a weighted average of quantile treatment effects, with the weights proportional to minus the second derivative of the log of the density of the potential outcomes. Our analysis also suggests an extension of Huber's model and trimmed mean to include asymmetry.
\end{abstract}

\textbf{Keywords: Potential Outcomes, Average Treatment Effects, Quantile Treatment Effects, Semiparametric Efficiency Bound}

\baselineskip=15pt\newpage \setcounter{page}{1}\renewcommand{\thepage}{[
\arabic{page}]}\renewcommand{\theequation}{\arabic{section}.
\arabic{equation}}

\newpage

\section{Introduction}
\label{section:introduction}

Historically, randomized experiments were often carried out in medical and agricultural settings. In these settings,  sample sizes were often modest, typically on the order of hundreds or (more rarely) thousands of units. Outcomes commonly studied included  mortality or crop yield, and were characterized by relatively well-behaved distributions with thin tails. 
Standard analyses in those settings typically  involved  estimating the average effect of the treatment using the difference in average outcomes by treatment group, followed by constructing  confidence intervals using Normal distribution based
approximations. These methods originated in the 1920s, {\it e.g.},  \citet{neyman1923, fisher1937design}, but they continue to be the standard in modern applications. See  \citet{wu2011experiments} for a recent discussion.

More recently many experiments are conducted online  (see \cite{kohavi2020trustworthy} for an overview), leading to substantially different settings.  
 \citet[p.20]{gupta2019top} claim that  ``Together these organizations [Airbnb, Amazon,
Booking.com, Facebook, Google, LinkedIn, Lyft, Microsoft, Netflix, Twitter, Uber, and Yandex] tested more than one hundred thousand experimental treatments last year."
The settings for these online experiments are substantially different from those in biomedical and agricultural settings.
First, the experiments are often on a vastly different scale, with the number of units on the order of millions to tens of  millions. Second, the outcomes of interest, variables such as time spent by a consumer, sales per consumer or payments per service provider, are characterized by distributions with extremely thick tails. Third, the treatment effects are often extremely small relative to the standard deviation of the outcomes, even if their magnitude remains substantively important.
For example, \citet{lewis2015unfavorable} analyze challenges with statistical power in experiments designed to measure the effect of digital advertising on consumer expenditures. They discuss a hypothetical experiment where the average expenditure per potential customer is \$7 with a standard deviation of \$75, and where an average treatment effect of \$0.35 (0.005 of a standard deviation) would be substantial in the sense of being highly profitable for the company given the cost of advertising. In the Lewis and Rao  example, an experiment with power 0.8 for a treatment effect of \$0.35, and  a significance level for the two-sided test of means of 0.05, would require a sample size of 1.4 million customers.
As a result 
confidence intervals for the average treatment effect  are likely to include zero even if the true  effects were substantively important and samples are large.
Even if a confidence interval for the average treatment effect includes zero there may be evidence about the presence of causal effects of the treatment. Using Fisher exact p-value calculations \citep{fisher1937design} with well-chosen  statistics ({\it  e.g.}, the Hodges-Lehman difference in average ranks, \citet{rosenbaum1993hodges}), one may well be able to establish conclusively that treatment effects are present. However, the magnitude of the treatment effect, rather than its presence, is typically important for decision makers. 

This sets the stage for the problem we address in this paper. In the absence of additional information, there exists no estimator for the average treatment effect that is more efficient than the difference in means. To obtain more precise estimates, we either need to change the focus away from the average treatment effect, or we need to make additional assumptions.
One approach to changing the question, at least slightly, is to transform the outcome
({\it e.g.} taking logarithms or winsorizing) followed by a standard analysis estimating the average effect of the treatment on the transformed outcome.
In this paper, like 
\citet{taddy2016scalable, tripuraneni2021meta},  we choose a different approach, namely making additional assumptions on the joint distribution of the outcomes and treatment indicator.

The key conceptual contribution is that we postulate  a semi-parametric model for the outcome distributions by treatment group. The leading example of this semi-parametric model corresponds to restricting the quantile treatment effects to be identical across quantiles, thus assuming that the two conditional outcome distributions differ only by a shift.
We do not directly use parametric  models for the  outcome distributions by treatment group,  because specifying such a  model that well approximates the full outcome distribution is more challenging than postulating a model for the treatment effects. Unlike outcomes, treatment effects tend to be small and often have little variation. For this semiparametric set-up ({\it e.g.}, \citet{bickel1993efficient}, \citet{bickel2015mathematical}), we derive the influence function, the semiparametric efficiency bound, and we propose semiparametrically efficient estimators.

It turns out that the parametrization of the treatment effect can be very informative, potentially making the asymptotic variance for the corresponding semiparametric estimators substantially smaller than the  asymptotic variance for the difference-in-means estimator. 
For example, if the potential outcomes have Cauchy distributions, the  variance bound for the average treatment effect is infinite because the moments of the Cauchy distribution do not exist. However, under the constant additive treatment effect assumption (implying that the quantile treatment effects are identical),  the semiparametric variance bound for the treatment effect is finite.

In addition, even if this model for the treatment effect is misspecified, the estimand corresponding to   proposed estimators continue to have a causal interpretation, as a weighted average of quantile treatment effects, making it an easy-to-implement and attractive choice in practice.
 
The remainder of the paper is organized as follows.
First, in Section \ref{section:example} we consider the leading case where we assume  the two potential outcome distributions differ only by a shift, so that the quantile treatment effects are all identical. 
This is implied by, but does not require,  the assumption that the treatment effect is additive and constant. 
In Section \ref{estimation_model} we  consider the
case where we have more flexible parametric models linking the two conditional outcome distributions.
In Section \ref{section:simulations} we provide some simulation evidence regarding the finite sample properties of the proposed methods in controlled settings and provide real data illustrations.
Section \ref{section:conclusion} concludes.
A software implementation for R is available at \url{https://github.com/michaelpollmann/parTreat}.

\section{Constant Quantile Treatment Effects}\label{section:example}

In this section we focus on a special
case  with constant quantile treatment effects. After  setting up the problem formally we discuss robust estimation in the one-sample case to motivate a class of weighted quantile treatment effect estimators. We then discuss the formal semiparametric problem and show adaptivity of the proposed estimators.
Finally we consider partial adaptivity and robustness.

This case is closely related to the classical two-sample problem, as discussed in \citet{hodges1963estimates}, and to problems considered in the literature on robust descriptive statistics as in \citet{bickel1975descriptiveI, bickel1975descriptiveII}, \citet{doksum1974empirical}, \citet{doksum1976plotting}, and in particular
 \citet{jaeckel1971robust}, \citet{jaeckel1971some}.  Section \ref{section:example} can be interpreted   as an extension of Jaeckel's work, in a causal inference framework, to the two sample context in the setting of semiparametric theory. In the process of doing so, we generalize Huber's model \citep{huber1964} and the estimator based on trimmed means to include asymmetry, and present a simplified version of the results of \citet{chernoff1967asymptotic} (see also \citet{bickel1967some}, \citet{govindarajulu1967generalizations} and \citet{stigler1974linear} on linear combinations of order statistics). In particular, we exhibit efficient M (maximum-likelihood type) and L (linear combination of order statistics) estimates for outcome distributions that are known up to a shift. We then analyze  fully adaptive estimates of both types, as discussed in \citet{bickel1993efficient}, and partially adaptive estimates, in particular flexible trimmed means (\citet{jaeckel1971robust}).
In this setting the problem is closely related to the literature on robust estimation of locations ({\it e.g.,} \citet{bickel1975descriptiveI, bickel1975descriptiveII, hampel2011robust, huber2011robust, 
 bickel1976descriptive, bickel2012descriptiveIV}).

\subsection{Set Up}

We consider a set up with a randomized experiment with $n$ observations drawn randomly from a large population. With probability $p\in(0,1)$ a unit is assigned to the treatment group. Let  $n_1$ and $n_0=n-n_1$ denote the number of units assigned to the treatment and control group. 
Following \citep{neyman1923, rubin1974estimating, imbens2015causal},  let $Y_i(0)$ and $Y_i(1)$ denote the two potential outcomes for unit $i$, and let the treatment be denoted by $Z_i\in\{0,1\}$.
We assume that the treatment assignment for one unit does not affect the outcomes for any other unit. For all units in the sample we observe the pair $(Z_i,Y_i)$, where
 $Y_i\equiv Y_i(Z_i)$.
The cumulative distribution functions for the two potential outcomes are $F_0(y)$ and $F_1(y)$
with  inverses $F^{-1}_{0}(u)$ and   $F^{-1}_{1}(u)$, and means and variances $\mu_0$, $\mu_1$, $\sigma_0^2$ and $\sigma_1^2$.
Note that by the random assignment assumption the distribution of the potential outcome $Y_i(z)$ is identical to the conditional distribution of the realized outcome $Y_i$ conditional on $Z_i=z$: $F_z(y)\equiv \Pr(Y_i(z)\leq y)=\Pr(Y_i\leq y|Z_i=z)$.

We are interested in the average treatment effect in the population,
\begin{equation} \tau \equiv\mme[Y_i(1)-Y_i(0)].\end{equation}
The  natural estimator for  this average treatment effect is the difference in sample averages
\begin{equation} \hat\tau=\oy_1-\oy_0,\hskip1cm
{\rm where}\ \ 
\oy_1=\frac{1}{n_1}\sum_{i=1}^{n} Z_i Y_i,\hskip0.4cm  \oy_0=\frac{1}{n_0}\sum_{i=1}^{n} (1-Z_i)Y_i,\end{equation}
are the averages of the observed outcomes by treatment group. Under standard conditions $\frac{n_{1}}{n}\stackrel{P}{\rightarrow}p$,
 $\frac{n_{0}}{n}\stackrel{P}{\rightarrow}(1-p)$, and
\begin{align}\label{vvar}
\sqrt{n}(\hat{\tau}-\tau) & \stackrel{d}{\Rightarrow} \mathcal{N}\left(0,\frac{\sigma^2_0}{1-p}+\frac{\sigma^2_1}{p}\right).
\end{align}

The concern  is that  this conventional estimator $\hat\tau$ may be imprecise.
In particular in settings where the outcome distribution is thick tailed, sometimes extremely so, confidence intervals may be wide.
We address this issue in this paper by 
imposing some restrictions on the two potential outcome distributions. Following  the semiparametric literature \citep{bickel1975descriptiveI, bickel1975descriptiveII, bickel1976descriptive, bickel2012descriptiveIV}  we exploit these restrictions to develop new estimators.
 
 \subsection{Weighted Average Quantile Treatment Effects}
 It is useful to start with quantile treatment effects  \citep{lehmann1975nonparametrics}, which play an important role in our setup. For quantile $u\in(0,1)$, define
\begin{equation}\label{response}
{\Delta}(u) \equiv F_{1}^{-1}(u)-F_{0}^{-1}(u),\ \ 0\leq u\leq 1.
\end{equation}
These quantile treatment effects are closely related to what
\citet{doksum1974empirical} and \citet{doksum1976plotting} 
  label the {\it response} function:
$R(y) \equiv
F^{-1}_{1}\left( F_{0}(y)\right)-y=\Delta(F_0(y))
.$
The natural estimate for the quantile treatment effect 
is the empirical plug-in, 
$
\hat{{\Delta}}(u)\equiv\hat{F}_{1}^{-1}(u)-\hat{F}_{0}^{-1}(u)
$
where $\hat{F}_{1}^{-1}(u)$ equals $Y_{([n_1 u])}^{(1)}$, defined as the $[n_1u]^{\rm th}$ order
statistic of $Y_{i}|Z_{i}=1$, $i=1,\dots,n_{1}$, where $n_{1}=\sum_{i=1}^{n}Z_{i}$
and similarly for $\hat{F}_{0}^{-1}(u)$.

A natural class of parameters summarizing the difference
between the $Y_i(1)$ and $Y_i(0)$ distributions consists of weighted averages of the quantile treatment effects:
\begin{align*}
\tau(F_{0},F_{1};W) & \equiv\int_{0}^{1}{\Delta}(u)dW(u)
\end{align*}
where the weights integrate to one, $W(0)=0$, $W(1)=1$. 
Different choices for the weight function correspond to different estimands.
The constant weight case, $W'(u)\equiv 1$, corresponds to
the population average treatment effect $\tau=\mathbb{E}[Y_i(1)-Y_i(0)]$.
The median corresponds to the case where  $W(\cdot)$ puts all its mass at
$1/2$. 
We thus allow $W(\cdot)$ to permit point masses.

For a given weight function $W(\cdot)$ we can estimate the parameter $\tau(F_{0},F_{1};W)$ 
using a {\it  weighted average quantile} estimator: 
\begin{equation}
\label{waqe}
\hat{\tau}_{W}  \equiv\tau(\hat F_{0},\hat F_{1};W)=\int_{0}^{1}\Bigl(\hat{F}_{1}^{-1}(u)-\hat{F}_{0}^{-1}(u)\Bigr)
dW(u)\end{equation}
\[\hskip1cm=\frac{1}{n_{1}}\sum_{i=1}^{n_{1}}w_{i}^{(1)}Y_{(i)}^{(1)}-\frac{1}{n_{0}}\sum_{i=1}^{n_{0}}w_{i}^{(0)}Y_{(i)}^{(0)},
\]
where 
\begin{align*}
w_{i}^{(z)} & \equiv W\left(\frac{i}{n_{z}}\right)-W\left(\frac{i-1}{n_{z}}\right),
\end{align*}
and $Y_{(i)}^{(z)}$ again are  the order statistics in treatment group $z$.

\subsection{Efficient estimation of Weighted Average Quantile Treatment Effects Using Influence Functions}

To understand the properties of the weighted average quantile estimator $\hat\tau_W$, we begin by considering the nonparametric model for a single sample,
with cumulative distribution $F(\cdot)$ where the interest is in the weighted
quantile $\int_{-\infty}^{\infty}F^{-1}(u)dW(u)$ for a given weight funtion $W(\cdot)$. For this one-sample
case, \cite{jaeckel1971robust,jaeckel1971some}, building on \cite{chernoff1967asymptotic},
shows that under simple conditions on $W$ and $F$, for a sample
size of $n$, 
\begin{align}
\label{een}
\int_{0}^{1}\Bigl(\hat{F}^{-1}(u)-F^{-1}(u)\Bigr)dW(u)=\int_{-\infty}^{\infty}\psi(x,F,W)d(\hat{F}(x)-F(x))+o_{P}(n^{-1/2}),
\end{align}
where the  {\it influence function} $\psi$ is related to the weight function
$W$ by
\begin{align}
\psi(x,F,W) & = -\int_{x}^{\infty}\frac{1}{f(y)}dW(F(y)) + \int_{-\infty}^{\infty}\frac{F(y)}{f(y)}dW(F(y))\label{eq:2.5}.
\end{align}
The last term ensures that $\int_{-\infty}^{\infty}\psi(x,F,W)dF(x)=0$.

Note that if the derivatives $\psi'(\cdot)$ and $W'(\cdot)=w(\cdot)$ exist, by \eqref{eq:2.5},
\begin{align*}
\psi(x,F,W) & =-\int_{x}^{\infty}w(F(y))dy+\int_{-\infty}^{\infty}F(y)w(F(y))dy
\end{align*}
so that $\psi'(x,F,W)=w(F(x))$. 
Note that for the
median, \eqref{eq:2.5} yields, $\psi(x,F,W)=\frac{sign(x-F^{-1}(\frac{1}{2}))}{2f(F^{-1}(\frac{1}{2}))}$.
Our formula \eqref{eq:2.5} is slightly more general than Jaeckel's, and in the appendix we establish sufficient conditions on the cumulative distribution function $F(\cdot)$ and the weight function $W(\cdot)$ for our version of his result to hold.

Expression (\ref{een}) in turn implies that 
\begin{align*}
\int_{0}^{1}\Bigl(\hat{F}^{-1}(u)-F^{-1}(u)\Bigr)dW(u) & \stackrel{d}{\Rightarrow}\mathcal{N}(0,\sigma^{2}(F,W)).
\end{align*}
where the variance equals the expectation of the square of the influence function:
\[
\sigma^{2}(F,\omega)=\int_{-\infty}^{\infty}\psi(x,F,W)^{2}dF(x).
\]
The results in 
\cite{jaeckel1971robust,jaeckel1971some} for the one-sample case
extend in the following way to the two-sample setting that is our primary focus.
If $\tau(F_{0},F_{1},W)$ is estimated  by $\hat{\tau}_W$ in (\ref{waqe}),
then, under regularity conditions
given in the appendix (Theorem A.1),
\begin{align}
\hat{\tau}_W& =\tau(F_{0},F_{1},W)+\frac{1}{n}\sum_{i=1}^{n}\left(Z_{i}\frac{\psi(Y_{i},F_{1},W)}{p}-(1-Z_{i})\frac{\psi(Y_{i},F_{0},W)}{1-p}\right)+o_{P}(n^{-1/2})\label{eq:2.6}
\end{align}
where $\psi(x,F,W)$ is given by \eqref{eq:2.5}.

\subsection{Constant Quantile Treatment Effects}

Now let us return to the primary focus of this section, the estimation of the average treatment effect under the constant quantile treatment effect assumption. Our key assumption in this section is that
the quantile treatment effects are all equal:
\begin{align}
\label{equal_quantile}
{\tau}(u) & =\tau,\hskip1cm \forall  \ u, 
\end{align}
and thus, for any weight functions $W(\cdot)$,
\begin{align}
\tau(F_{0},F_{1},W) & =\tau.
\end{align}
Later, in Section 3, we generalize this to allow for a more general parametric function linking the quantile treatment effects.
One way to motivate the constant quantile treatment effect assumption is to 
 assume that the unit-level treatment effects are all constant, $Y_i(1)-Y_i(0)=\tau$ for all units $i=1,\ldots,n$. This implies, but is not implied by, the assumption that all the quantile treatment effects are identical.
The assumption of constant unit-level treatment effects is very strong, implying rank-invariance, which is in fact stronger than what we need.

In this section, for expository reasons we further assume that 
we know the control outcome distribution $F_{0}(\cdot)$ up to a shift. That is, $F_{0}(x)=F(x-\eta)$ where $F(\cdot)$ (with derivative $f$) is known and  
$\eta$ unknown. Because of the constant quantile treatment effect assumption the treated potential outcome distribution is also known up to a shift, $F_{1}(x)=F(x-\eta-\tau)$.
Assuming that $F_0(\cdot)$ is known up to a shift is unrealistic in practice, and we remove this assumption below in Section 2.5, but it allows us to focus in this section on some key insights.

For this fully parametric model (with unknown parameters $\eta$ and
$\tau$), if the Fisher information $I(f)=\int\left(\frac{f'}{f}\right)^{2}(x)f(x)dx=\int \left(-\frac{f'}{f}\right)'(x)f(x)dx$ satisfies $0<I(f)<\infty$, the maximum likelihood estimator  of $\tau$, suitably
regularized ({\it e.g.,}  \citet{lecam-yang-1988}), has
influence function,
\begin{align}
\psi_{f,\eta}(Z,Y; \tau) & =-\frac{1}{I(f)}\cdot\left(\frac{Z}{p}\cdot\frac{f'}{f}(Y-\eta-\tau)-\frac{1-Z}{1-p}\cdot\frac{f'}{f}(Y-\eta)\right)\label{eq:psi_f(Z,Y)}.
\end{align}

There is an interesting alternative efficient estimator in this known $f(\cdot)$ case.
Suppose $f'/f$ is absolutely continuous. Then the weight function
\begin{equation}
w_{f}(F(x))  \equiv \frac{1}{I(f)}\left(-\frac{f'}{f}\right)'(x)  \text{ or } w_f(u) = \frac{1}{I(f)}\left(-\frac{f'}{f}\right)'(F^{-1}(u)) \label{eq:w_f(F(x))}
\end{equation}
provides an efficient $L$ estimate when substituted appropriately
in \eqref{een},
leading to \begin{align} \hat{\tau}(F_{0},F_{1},W) & =\int_{0}^{1}(\hat{F}_{1}^{-1}(u)-\hat{F}_{0}^{-1}(u)) w_f(u) du.
\label{eq:tau-hat-F0-F1-ww}
\end{align}
It is interesting to inspect the form of the weights $w_f(u)$. These weights are proportional to minus the second derivative of the logarithm of the density function. In other words, we can approximate the efficient estimator by first estimating a large number of quantile treatment effects. Under the model these quantile treatment effects are all identical. To efficiently estimate that common treatment effect we can simply use a weighted average of the estimated quantile treatment effects.  It turns out the optimal weights simplify to minus the second derivative of the logarithm of the density. For the Normal distribution, that means the weights are constant. For the double exponential distribution the weights put point mass at the median. For the Cauchy distribution the weights are proportional to 
$-\cos(2\pi u) \sin(\pi u)^{2}$. Interestingly these weights are negative for some quantiles. One can of course see this by inspecting the estimated weights. If one is concerned by the negative weights one can also modify them by restricting them to be nonnegative. Finally, note that implicitly the influence function estimator also has the negative weights in such cases because the two estimators are first order equivalent.

A final comment connects this to common methods for dealing with thick tailed distributions. In practice many researchers use winsorizing to deal with these problems. This can be interpreted as using a weighted average quantile estimator
with a particular set of weights. Specifically, with winsorizing at the $q$ and $1-q$ quantiles, the implicit weights  are constant on the interval $(q,1-q)$, and then put additional point mass $q$ on the $q$th and $(1-q)$th quantiles. As discussed in \citet{bickel1965some}, the asymptotic properties of the winsorizing estimator depend delicately on the density at the winsorizing quantiles. In our simulations this estimator does not perform particularly well. Like other settings, there is tension here between having an interpretable target that may not be precisely estimable ({\it e.g.,} the average effect of the treatment), versus a precisely estimable estimand whose interpretation is more complex ({\it e.g.} the weighted average quantile effect). This tension arises also in other settings. An example is the estimation of average treatment effects under unconfoundedness where weighting by the confounders may affect both the interpretation  of the estimand and the precision with which we can estimate it \citep{crump2009dealing, li2018balancing}.
Another setting is that discussed in \citet{vansteelandt2022assumption}.
The use of quantile methods for estimating treatment effects in thick-tailed settings has been studied in \citet{firpo2007efficient, firpo2009unconditional}, but unlike in those papers, our focus is on the overall treatment effect, rather than the effect at specific quantiles.

\subsection{Fully adaptive estimation}

As stated earlier, in practice we do not know the density $f(\cdot)$ up to location. However, in this case with constant quantile treatment effects this knowledge does not matter up to first order.
Because of the 
orthogonality of the tangent space with respect to $f$,
it follows from semiparametric theory \citep{bickel1993efficient} that even if the density $f$
is unknown, substituting a suitable estimate of $f$ (and $\eta$) in \eqref{eq:psi_f(Z,Y)}
or (\ref{eq:tau-hat-F0-F1-ww}), will yield estimators with influence functions
given by \eqref{eq:psi_f(Z,Y)}, or equivalently by
\begin{align}\label{eq:psi-additive}
\psi_{f_0}(Z,Y; \tau) & =-\frac{1}{I(f_0)}\cdot\left(\frac{Z}{p}\cdot\frac{f_0'}{f_0}(Y-\tau)-\frac{1-Z}{1-p}\cdot\frac{f_0'}{f_0}(Y)\right),
\end{align}
where $f_0(\cdot) \equiv F_0'(\cdot) \equiv f(\cdot - \eta)$.

For our proposed estimator we split the data randomly into two parts, with the two subsamples denoted by $A$, corresponding to $\{(Z_{i},Y_{i}): 1\leq i \leq \frac{n}{2}\}$, and $B$, corresponding to $\{(Z_{i},Y_{i}): \frac{n}{2} < i\leq  n\}$. 
The $M$ estimate using the estimated $\hat f(\cdot)$  is of the form,
\begin{align}
\hat{\tau}^{if} & \equiv \frac{\tilde{\tau}_{(A)} + \tilde{\tau}_{(B)}}{2} +\frac{1}{n}\left\{\sum_{i=1}^{n/2}\psi_{\hat{f}_{0(B)}}(Z_{i},Y_{i};\tilde{\tau}_{(B)})+\sum_{i=1+n/2}^{n}\psi_{\hat{f}_{0(A)}}(Z_{i},Y_{i}; \tilde{\tau}_{(A)})\right\} \label{eq:tau-if}
\end{align}
where $\hat{f}_{0(A)}$ is an estimate of $f_0$ using
$\{(Z_{i},Y_{i}): 1\leq i \leq \frac{n}{2}\}$, and $\hat{f}_{0(B)}$ using
$\{(Z_{i},Y_{i}): \frac{n}{2} < i\leq  n\}$, a one step estimate
using the sample splitting technique \citep{klaassen1987consistent}. $\tilde\tau_{(A)}$ and $\tilde\tau_{(B)}$ are initial $\sqrt n$ consistent estimates based on the two subsamples, for example based on the difference in medians or other quantiles.
Algorithm \ref{algo:eif} shows the key steps; additional details are given in the supplementary materials.

\begin{algorithm}
\caption{Influence Function Based Estimator $\hat\tau^{if}$}\label{algo:eif}
\begin{algorithmic}[1]
\\ {$\rhd$ Input: }
\\ \hskip0.6cm {$n_{1}$ treated observations $Y_1^{1},\ldots,Y_{n_1}^{(1)}$}
\\ \hskip0.6cm {$n_{0}$ control observations $Y_1^{0},\ldots,Y_{n_0}^{(0)}$}
\\
\\ {$\rhd$ Randomly split sample into halves $A$ and $B$:}
\\ \hskip0.6cm $n_{1(A)} = \lceil n_1/2\rceil$, $n_{1(B)} = \lfloor n_1/2\rfloor$, $n_{0(A)} = \lceil n_0/2\rceil$, $n_{0(B)} = \lfloor n_0/2\rfloor$
\\ \hskip0.6cm denote treated in halves $A$ and $B$ by $Y^{(1,A)}_1,\ldots,Y^{(1,A)}_{n_{1(A)}}$ and $Y^{(1,B)}_1,\ldots,Y^{(1,B)}_{n_{1(B)}}$
\\ \hskip0.6cm denote control in halves $A$ and $B$ by $Y^{(0,A)}_1,\ldots,Y^{(0,A)}_{n_{0(A)}}$ and   $Y^{(0,B)}_1,\ldots,Y^{(0,B)}_{n_{0(B)}}$
\\
\\ {$\rhd$ Calculate a preliminary consistent estimator:}
\\ \hskip0.6cm $\tilde{\tau}_{(B)} = \mathrm{median}(Y^{(1,B)}_{i}) - \mathrm{median}(Y^{(0,B)}_{i})$
\\
\\ {$\rhd$ Estimate density and its derivatives:}
\\ \hskip0.6cm $\hat{f}_{0(B)}(\cdot)$, $\hat{f}_{0(B)}'(\cdot)$, $\hat{f}_{0(B)}''(\cdot) \gets$  estimated using data $Y^{(0,B)}_1,\ldots,Y^{(0,B)}_{n_{0(B)}}$
\\
\\ {$\rhd$ Estimate the Fisher information $I$:}
\\ \hskip0.6cm $\hat{I}_{(B)} \gets - \frac{1}{n_{0(B)}} \sum_{i=1}^{n_{0(B)}} \frac{\hat{f}_{0(B)}(Y^{(0,B)}_{i})\hat{f}''_{0(B)}(Y^{(0,B)}_{i}) - \hat{f}'_{0(B)}(Y^{(0,B)}_{i})^2}{\hat{f}_{0(B)}(Y^{(0,B)}_{i})^2}$
\\
\\ {$\rhd$ Estimate the effects:}
\\ \hskip0.6cm $\hat{\tau}_{(A)} \gets \tilde{\tau}_{(B)} + \frac{1}{n_{1(A)}} \sum_{i=1}^{n_{1(A)}} \frac{-1}{p \hat{I}_{(B)}} \frac{\hat{f}'_{0(B)}(Y^{(1,A)}_{i} - \tilde{\tau}_{(B)})}{\hat{f}_{0(B)}(Y^{(1,A)}_{i} - \tilde{\tau}_{(B)})} - \frac{1}{n_{0(A)}} \sum_{i=1}^{n_{0(A)}} \frac{-1}{(1-p) \hat{I}_{(B)}} \frac{\hat{f}'_{0(B)}(Y^{(0,A)}_{i})}{\hat{f}_{0(B)}(Y^{(0,A)}_{i})}$
\\
\\ {$\rhd$ Repeat lines 10 through 20 reversing $A$ and $B$, then average:}
\\ \hskip0.6cm $\hat\tau^{if} \gets (\hat\tau_{(A)} + \hat\tau_{(B)})/2$
\end{algorithmic}
\end{algorithm}

We can also construct an $L$ estimate based on an average of the quantile differences. This estimator is obtained by 
first estimating  $F_0(\cdot)$, $f_0(\cdot)$, and $f_0'(\cdot)$, substituting that for $F(\cdot)$, $f(\cdot)$, and $f'(\cdot)$ into $w_f(u)$ in equation (\ref{eq:w_f(F(x))}), followed by using this estimated set of weights in (\ref{eq:tau-hat-F0-F1-ww}), leading to
\begin{align} \hat{\tau}^{waq} & =\int_{0}^{1}(\hat{F}_{1}^{-1}(u)-\hat{F}_{0}^{-1}(u)) \hat w_f(u) du.
\label{eq:tau-hat-F0-F1-ww_feasible}
\end{align}
Formally we would use the same sample splitting as above. Details
are in Algorithm \ref{algo:waq} and the supplementary materials.

\begin{algorithm}
\caption{Weighted Average Quantile Estimator $\hat\tau^{waq}$}\label{algo:waq}
\begin{algorithmic}[1]
\\ {$\rhd$ Input: }
\\ \hskip0.6cm {$n_{1}$ treated observations $Y_1^{1},\ldots,Y_{n_1}^{(1)}$}
\\ \hskip0.6cm {$n_{0}$ control observations $Y_1^{0},\ldots,Y_{n_0}^{(0)}$}
\\
\\ {$\rhd$ Randomly split sample into halves $A$ and $B$:}
\\ \hskip0.6cm $n_{1(A)} = \lceil n_1/2\rceil$, $n_{1(B)} = \lfloor n_1/2\rfloor$, $n_{0(A)} = \lceil n_0/2\rceil$, $n_{0(B)} = \lfloor n_0/2\rfloor$
\\ \hskip0.6cm denote treated in halves $A$ and $B$ by $Y^{(1,A)}_1,\ldots,Y^{(1,A)}_{n_{1(A)}}$ and $Y^{(1,B)}_1,\ldots,Y^{(1,B)}_{n_{1(B)}}$
\\ \hskip0.6cm denote control in halves $A$ and $B$ by $Y^{(0,A)}_1,\ldots,Y^{(0,A)}_{n_{0(A)}}$ and   $Y^{(0,B)}_1,\ldots,Y^{(0,B)}_{n_{0(B)}}$
\\
\\ {$\rhd$ Estimate density and its derivatives:}
\\ \hskip0.6cm $\hat{f}_{0(B)}(\cdot)$, $\hat{f}_{0(B)}'(\cdot)$, $\hat{f}_{0(B)}''(\cdot) \gets$  estimated using data $Y^{(0,B)}_1,\ldots,Y^{(0,B)}_{n_{0(B)}}$
\\

\\ {$\rhd$ Order and pair observations:}
\\ \hskip0.6cm $n_{(A)} = \max(n_{1(A)},n_{0(A)})$
\\ \hskip0.6cm duplicate treated or control observations as needed such that there are $n_{(A)}$ of both, evenly across the distribution, and order them (analogously for the $B$ split):
\\ \hskip1.2cm $Y^{(0,A)}_{(1)} \leq Y^{(0,A)}_{(2)} \leq \dots \leq Y^{(0,A)}_{(n_{(A)})}$; $Y^{(1,A)}_{(1)} \leq Y^{(1,A)}_{(2)} \leq \dots \leq Y^{(1,A)}_{(n_{(A)})}$
\\ \hskip1.2cm $Y^{(0,B)}_{(1)} \leq Y^{(0,B)}_{(2)} \leq \dots \leq Y^{(0,B)}_{(n_{(B)})}$; $Y^{(1,A)}_{(1)} \leq Y^{(1,B)}_{(2)} \leq \dots \leq Y^{(1,B)}_{(n_{(B)})}$
\\
\\ {$\rhd$ Estimate the weighted average quantile effect:}
\\ \hskip0.6cm weights: $w^{(B)}_{(i)} \gets - \frac{\hat{f}_{0(B)}(Y_{(i)}^{(0,B)})\hat{f}''_{0(B)}(Y_{(i)}^{(0,B)}) - \hat{f}'_{0(B)}(Y_{(i)}^{(0,B)})^2}{\hat{f}_{0(B)}(Y_{(i)}^{(0,B)})^2}$
\\ \hskip0.6cm $\hat{\tau}_{(A)} \gets \sum_{i=1}^{n_{(A)}} w^{(B)}_{(i)} (Y^{(1,A)}_{(i)} - Y^{(0,A)}_{(i)}) / \sum_{i=1}^{n_{(A)}} w^{(B)}_{(i)}$
\\
\\ {$\rhd$ Repeat lines 10 through 21 reversing $A$ and $B$, then average:}
\\ \hskip0.6cm $\hat\tau^{waq} \gets (\hat\tau_{(A)} + \hat\tau_{(B)})/2$
\end{algorithmic}
\end{algorithm}

Formally we have for the unknown $f(\cdot)$ case:
\begin{theorem}
\label{thm:efficiency}For all $f$ such that $f'$ exists and $0<I(f)<\infty$:
\begin{enumerate}
\item[$(i)$] There exist a $\sqrt{n}$-consistent estimator $\hat{\tau}$.
\item[$(ii)$] Under mild conditions (see \eqref{eq:fhat-score-convergence} and \eqref{eq:fhat-info-convergence} in the appendix), we can construct an $M$ estimate $\hat{\tau}^{if}$ such that 
\begin{align}
\sqrt{n}(\hat{\tau}^{if}-\tau) & \stackrel{d}{\Rightarrow}\mathcal{N}\left(0,\frac{1}{p(1-p)I(f)}\right).\label{eq:(A)}
\end{align}
\item[$(iii)$] Under mild conditions (see Lemma \ref{lem:linear-order-statistics} in the appendix), we can construct an $L$ estimate $\hat{\tau}^{waq}$ (weighted average quantile)
such that $\hat{\tau}^{waq}$ also satisfies \eqref{eq:(A)}.
\end{enumerate}
\end{theorem}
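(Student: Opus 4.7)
The plan is to exploit the adaptivity structure of the location-shift model. Under $F_0(x) = F(x-\eta)$ and $F_1(x) = F(x-\eta-\tau)$, the efficient score for $\tau$ at any smooth $f$ is orthogonal to the nuisance tangent space generated by varying the shape $f$ (with $\eta$ and $\tau$ held fixed), so substituting any sufficiently well-behaved estimate of $f$ into the efficient score \eqref{eq:psi_f(Z,Y)} or into the L-weights \eqref{eq:w_f(F(x))} produces the same asymptotic variance $1/(p(1-p)I(f))$ as if $f$ were known. The overall proof follows the classical one-step construction in \citet{bickel1993efficient}. For part (i), a preliminary $\sqrt n$-consistent $\hat\tau_0$ is furnished by the Hodges--Lehmann median of the pairwise differences $\{Y_i - Y_j : Z_i = 1,\, Z_j = 0\}$, which is $\sqrt n$-consistent under the shift model because $I(f) < \infty$ implies $f$ is bounded and in $L^2$; an analogous preliminary $\hat\eta_0$ (and thereby $\hat\mu$) is obtained from the control sample by a sample median or by a one-step update around a crude start.

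\textbf{Part (ii).} Using sample splitting, I would estimate $-f'/f$ on the first half by a tail-truncated kernel density estimator $\hat f$ built from the pooled, location-aligned sample (control observations together with treated observations shifted by $\hat\tau_0$), together with a plug-in $\hat I(\hat f)$. On the second half I form the one-step update
\begin{equation*}
\hat\tau^{if} = \hat\tau_0 - \frac{1}{n\,\hat I(\hat f)}\sum_{i=1}^{n}\left(\frac{Z_i}{p}\,\frac{\hat f'}{\hat f}(Y_i - \hat\tau_0 - \hat\eta_0) - \frac{1-Z_i}{1-p}\,\frac{\hat f'}{\hat f}(Y_i - \hat\eta_0)\right),
\end{equation*}
and symmetrize across the two halves. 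The standard expansion writes $\sqrt n(\hat\tau^{if} - \tau)$ as the sample average of the influence function \eqref{eq:psi_f(Z,Y)} plus three remainders: a bias term governed by the score-convergence hypothesis \eqref{eq:fhat-score-convergence}, an information-plug-in term governed by \eqref{eq:fhat-info-convergence}, and a stochastic equicontinuity term that is handled by the sample split. Nuisance-tangent-space orthogonality then makes the replacement of $f$ by $\hat f$ first-order negligible, yielding \eqref{eq:(A)}.

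\textbf{Part (iii) and main obstacle.} For (iii), I would set $\hat w(u) = \hat I(\hat f)^{-1}(-\hat f'/\hat f)'(\hat F_0^{-1}(u))$ with tail-trimming near $u = 0$ and $u = 1$, and define $\hat\tau^{waq}$ by \eqref{eq:tau-hat-F0-F1-ww} with $\hat w$ in place of $w_f$. Invoking the two-sample Chernoff--Jaeckel linearization \eqref{eq:2.6} established in the appendix (Lemma \ref{lem:linear-order-statistics} and Theorem A.1), the weighted-quantile estimator admits the same linear influence-function expansion as the M-estimator in Part (ii), and hence also satisfies \eqref{eq:(A)}. The main obstacle for both parts is controlling the nonparametric plug-in in the tails under only the hypothesis $0 < I(f) < \infty$: for (ii) one must choose a truncation level and bandwidth so that the bias and variance of $\hat f'/\hat f$ contribute remainders of order $o_P(n^{-1/2})$ after averaging; for (iii) the analogous difficulty is verifying the Chernoff--Jaeckel hypotheses uniformly in a neighborhood of $w_f$, which requires trimming because $(-f'/f)'$ need not be $F$-integrable even when $I(f) < \infty$.
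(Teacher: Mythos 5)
Your proposal is correct and takes essentially the same route as the paper: a robust $\sqrt{n}$-consistent preliminary estimator for (i) (the paper uses the difference in medians via Theorem A.1 where you use Hodges--Lehmann, an immaterial substitution), a sample-split one-step update for (ii) whose Taylor expansion is controlled exactly by \eqref{eq:fhat-score-convergence} (score term) and \eqref{eq:fhat-info-convergence} (information/derivative term), and for (iii) a plug-in of independently estimated, tail-truncated weights into the order-statistics linearization of Lemma \ref{lem:linear-order-statistics}. No gaps worth flagging.
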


The main insight is that the asymptotic variance for the proposed estimator is the same as the variance for the maximum likelihood estimator in the case where $F_0$ is known up to a shift.
Our conditions are not optimal (see \citet{stone1975adaptive} for minimal ones in
the one sample case).
The $\sqrt n$-consistent estimator for $\tau$ can be based on any quantile treatment effect estimator. 

Thus, both the estimators $\hat{\tau}^{waq}$ and $\hat{\tau}^{if}$
are adaptive to models in which the distribution of control and treated
potential outcomes is known up to location. Theorem \ref{thm:efficiency}
implies that the influence function based estimator is as efficient
as the maximum likelihood estimator based on the true distribution
function. For instance, if potential outcomes are normally distributed,
the maximum likelihood estimator is the difference in means, and the
influence function estimator has the same limiting distribution. If,
however, the potential outcomes follow a double exponential distribution,
the difference 
in medians is the efficient estimator. Under this distribution,
the influence function based estimator adapts and has the same limiting
distribution as the difference in medians. For the Cauchy distribution the optimal weights are more complicated,
\(w_f(u) \propto - \cos(2 \pi u) \sin(\pi u)^2\),
but the influence function based estimator has the same limiting distribution,
without requiring a priori knowledge about the distribution.
This influence function based estimator $\hat{\tau}^{if}$ is a special
case of the estimator developed by \citet{cuzick1992efficient,cuzick1992semiparametric}
for the partial linear regression model setting.

Although these estimators are efficient under the constant additive
treatment model, as we shall see in Section 3, if the the constant quantile treatment effect assumption is violated,
estimates of the types derived from $f$ known continue to estimate
at rate $n^{-1/2}$, meaningful measures of the treatment
effect as discussed in Section 2.1. This is unfortunately not the
case for $\hat{\tau}^{if}$ and $\hat{\tau}^{waq}$ because estimation
of $f'$ and $f''$ introduces components of variance of order larger
than $n^{-1/2}$. However, there is a partial remedy, that we discuss next.

\subsection{Partial Adaptation}\label{section:partial}

An interesting alternative to fully adaptative estimation in the closely related one-sample symmetric case was
studied by \citet{jaeckel1971some}.  See also 
\citet{yu2017robust}. The estimator proposed by \citet{jaeckel1971some} is
\begin{align*}
\hat{\mu}_{\alpha} & \equiv\frac{1}{1-2\alpha}\int_{\alpha}^{1-\alpha}\hat{F}^{-1}(u)du
\end{align*}
for a sample from $F(x-\mu)$ with $f$ symmetric. We can interpret this as restricting the class of weight functions to one indexed by a scalar parameter $\alpha$:
\[
w_{\alpha}(u)  =
\left\{\begin{array}{ll}
\frac{1}{1-2\alpha}\hskip1cm& {\rm if}\ \alpha\leq u\leq1-\alpha\\
  0& \text{ otherwise}.
  \end{array}\right.
\]
This weight function
yields the $\alpha$-trimmed mean. We can then choose the value of $\alpha$ 
that minimizes the asymptotic
variance of $\hat{\mu}_{\alpha}$. This asymptotic variance is equal to 
\begin{align*}
\sigma_{\alpha}^{2}\equiv\frac{1}{n(1-2\alpha)^{2}} & (\mathbb{E}(X-\overline{\mu})^{2}1(F^{-1}(\alpha)\leq X\leq F^{-1}(1-\alpha))\\
 & +\alpha\cdot(F^{-1}(1-\alpha)-\overline{\mu})^{2}+\alpha(F^{-1}(\alpha)-\overline{\mu})^{2}),
\end{align*}
with 
\begin{align*}
    \overline{\mu} = \int_{\alpha}^{1-\alpha} F^{-1}(u)du + \alpha (F^{-1}(\alpha) + F^{-1}(1-\alpha)),
\end{align*}
which can be estimated by replacing $F$ with the empirical distribution, denoted as $\hat{\sigma}^2_{\alpha}$.
Let $\hat\alpha=\arg\min_\alpha\hat\sigma^2_\alpha$, and let $\hat{\mu}_{\hat\alpha}$
be the corresponding estimator for $\mu$.
\citet{jaeckel1971some} shows that $\hat{\mu}_{\hat\alpha}$ is adaptive for estimating $\mu$ over a Huber family of densities. In the Huber family  $(X-\mu)/\sigma$ has density $f$ for varying $\mu$, $\sigma>0$:
\[
\log f(x)= \left\{
\begin{array}{ll}
-\frac{x^{2}}{2}-c(k),\hskip1cm & {\rm if}\ |x|\leq k\\ \\
-\frac{k|x|}{2}-c(k), & {\rm if}\ |x|>k.
\end{array}\right.
\]
where $c(k)$ makes $\int f(x)dx=1$, and
$k=-F^{-1}(\alpha)$. Adaptivity here means that
using the trimming proportion optimizing the variance estimate,
in fact yields an estimate which is efficient for the member of the Huber family generating the data. He optimizes $0<\alpha_0 \leq \alpha \leq \alpha_1 <\frac{1}{2}$.

Because this family includes among others the 
 Gaussian ($k\rightarrow \infty$) and  double
exponential ($k=0$), this family
is very flexible. For more properties, see \cite{huber2011robust}.

In the two-sample case, it is reasonable to consider asymmetric weight
functions leading to the natural generalization, 
\begin{align}
\label{taualphabeta}
\hat{\tau}_{\alpha,\beta} & =\frac{1}{1-(\alpha+\beta)}\int_{\alpha}^{1-\beta}(\hat{F}_{1}^{-1}(u)-\hat{F}_{0}^{-1}(u))du.
\end{align}
This estimator is partially adaptive, in a similar way to the symmetric trimmed mean in the one sample problem.
In the online appendix, we extend Jaeckel's result on partial adaptation for our two sample problem to a generalization of the  Huber family whose members are symmetric iff $k_{1}=k_{2}$, defined by
$(X-\mu)/\sigma \sim f$:
\begin{align}
\log f(x)=\begin{cases}
-\frac{x^{2}}{2}-c(k_{1},k_{2}), & \text{ if }-k_{1}\leq x\leq k_{2},\\
\frac{k_{1}x}{2}-c(k_{1},k_{2}), & \text{ if }x<-k_{1},\\
-\frac{k_{2}x}{2}-c(k_{1},k_{2}), & \text{ if }x>k_{2},
\end{cases} \label{eq:extended-huber}
\end{align}
where $c(k_{1},k_{2}) \equiv \log \Bigl(2\left(\frac{e^{-k_{1}/2}}{k_{1}} + \frac{e^{-k_{2}/2}}{k_{2}}\right) + \sqrt{2\pi}(\Phi(k_{2}) - \Phi(-k_{1}))\Bigr)$ and $\Phi$ is the CDF of $\mathcal{N}(0, 1)$.
See Figure~\ref{fig:generalized-Huber} for illustration.
This family can be equivalently parametrized by $F(-k_{1})$ and $F(k_{2})$. $f(\cdot)$ is symmetric if $k_1=k_2$. 

\begin{figure}
    \centering
    \includegraphics{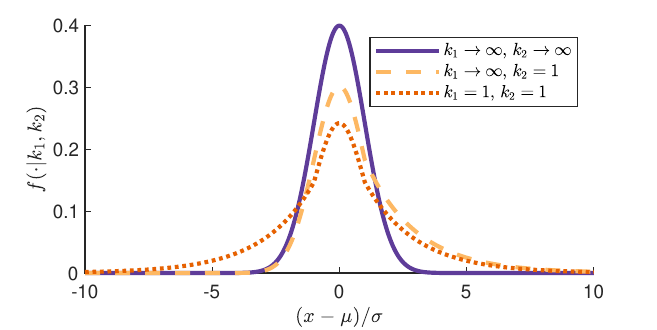}
    \caption{Example members of the generalized Huber family of distributions.}
    \label{fig:generalized-Huber}
\end{figure}

In the supplementary materials we also discuss how inference can proceed in this setting.

\section{The General Parametric Treatment Effect Case}
\label{estimation_model}

In some settings, the assumption of an additive model may be too restrictive.
In this section, we develop estimators given a general parametric model for this difference.

The starting point is a model governing the relation between the two potential outcomes:
\begin{assumption}[Parametric Model Quantile Treatment Effects]\label{as:par-outcomes}
The  potential outcome distributions satisfy
\begin{equation*}
    F_1(h(y,\theta)) = F_0(y).
\end{equation*}
\end{assumption}
The constant quantile treatment effect case is  a special case of this with $h(y,\theta)=y+\theta.$ Another important special case is the proportional treatment effect case, $h(y,\theta)=\theta y$. For the general case the weighted average quantile estimator does not directly generalize, so we focus on the influence-function-based estimator. For the general case the influence function is more complex.

This approach of modelling treatment effects has connections to the literature on structural nested models, which also imposes modeling restrictions on treatment effects, although for different reasons, largely based on the challenges in dynamic settings. See \citet{robins1986new} for an early paper, and \citet{vansteelandt2014structural} for a review.

As before we initially assume $F_0$ known.
In terms of the quantile treatment effects $\tau(u)$  Assumption \ref{as:par-outcomes} implies the restriction
\[ \tau(u)=F_1^{-1}(u)-F^{-1}_0(u)=h(F_0^{-1}(u),\theta)-F^{-1}_0(u)
.\]
Given Assumption \ref{as:par-outcomes}, the population average treatment effect can be characterized as
\begin{equation*}
\tau^{\rm pop}=\int_0^1 \bigl( h(F_0^{-1}(u),\theta)-F_0^{-1}(u)\bigr) du
\end{equation*}
In practice, however, estimating $\tau^{\rm pop}$ may still be subject to substantial sampling variance, even if $h(\cdot)$ is known. For example, suppose that $h(y,\theta)=\theta y$, so that the treatment effect is proportional. The average treatment effect is then $\theta E[Y_i|Z_i=0]$. Even if $\theta$ is known, estimating the population mean $E[Y_i|Z_i=0]$ could lead to a large standard error.
As an alternative, we therefore focus on a different estimand. Specifically, we suggest to estimate the in-sample, as opposed to population, average treatment effect. This is still a well-defined average causal effect that is useful for decision makers. It is in the spirit of the typical analysis of randomized experiments based on convenience samples where the focus is on the average effect for the particular sample.  A key insight is that estimators of this object can  have a much lower variance.
We define the in-sample average treatment effect as
\begin{equation}\label{tau_is}
\begin{aligned}
\tau^{\rm is} = \frac{1}{N} \sum_{i=1}^{N}\Bigl(Y_i(1) - Y_i(0)\Bigr) 
 = \frac{1}{N}\sum_{i=1}^{N} \Bigl\{ Z_i (Y_i - h^{-1}(Y_i,\theta)) + (1-Z_i)(h(Y_i,\theta) - Y_i)\Bigr\}.
\end{aligned}
\end{equation}
When \(h\) is not just an additive function, \(\tau^{\rm is}\) is sample-dependent, and thus stochastic.
In particular when the variance of \(Y_i\) is large because of thick tails for the potential outcome distributions, the variance of \(\tau^{\rm is}\) over repeated samples can be large, too.
To give some intuition for this, suppose that $\theta$ is known. Then the variance of $\hat\tau-\tau^{\rm is}$ is zero, but the variance of $\tau^{\rm is} - \tau^{\rm pop}$ over repeated samples can be large.
We therefore focus on the variance of estimators $\hat\tau$ relative to \(\tau^{\rm is}\) for the particular sample at hand, rather than on the variance of $\hat\tau$ relative to the population average \(\tau^{\rm pop}\). 

If $F_0$ was known, we could estimate $\theta$ efficiently by some version of maximum likelihood to get an estimate $\hat{\theta}$ and 
\begin{equation*}
    \hat{\tau} = \int_0^1 \left( h(F_0^{-1}(y), \hat{\theta}) - F_0 ^{-1}(y) \right) dy
\end{equation*}
as an estimate of $\tau$. The density of $Y$ given $Z$ is 
\begin{equation*}
   f(y|z)= \left(f_0(y)\right)^{1-z} \left(f_1(y, \theta) \right)^z.
\end{equation*}
By Assumption~\ref{as:par-outcomes}
\begin{align}
    f_1(y, \theta) = f_0(h^{-1}(y,\theta)) \frac{\partial h^{-1}(y,\theta)}{\partial y} ,\label{eq:f1(y,theta)}
\end{align}
and the score function is
\begin{equation*}
    \dot{\ell}(y,z,\theta) \equiv z \cdot \frac{\partial}{\partial \theta} \log f_1(y, \theta)
\end{equation*}
yielding,
\begin{equation*}
    \hat{\theta} = \theta + \frac {1}{n} \sum_{i=1}^n I^{-1} \dot{\ell}(Y_i,Z_i,\theta) + o_P(n^{-1/2})
\end{equation*}
where $I=E\left(\dot{\ell}(Y,Z,\theta)\right)^2$.

If $F_0$ is assumed unknown, to obtain an efficient influence function we must 
\begin{itemize}
\item[1)] Compute the tangent plane as $f_0$ varies with $\theta$ fixed. The tangent plane is 
\begin{equation*}
\begin{aligned}
\dot{P}_f = \Bigl\{ & \, u(Y,Z) = (1-Z) v(Y,\theta) + Z v(h^{-1}(Y,\theta),\theta): \\
                    & \qquad  \int v^2(y,\theta)f_0(y)dy <\infty, \int v(y,\theta) f_0(y)dy=0 \Bigr\}.
\end{aligned}
\end{equation*}
(Note that both factors of the likelihood must be varied treating $\theta$ as fixed.)

\item[2)] Project $\dot{\ell}$ on the orthocomplement of the tangent plane to get 
\begin{equation*}
    \dot{\ell}^*(Y,Z,\theta) = \dot{\ell}(Y,Z,\theta) - (Z Q(h^{-1}(Y,\theta), \theta) + (1-Z) Q(Y, \theta))
\end{equation*}
where $Z Q(h^{-1}(Y,\theta), \theta) + (1-Z) Q(Y, \theta)$ is the projection of $\dot{\ell}$ on $ \dot{P}_f$.

\item[3)] The efficient influence function is given by 
\begin{equation*}
    \psi(Y,Z,\theta) = \frac{\dot{\ell}^*(Y,Z,\theta)}{ E\left(\dot{\ell}^*(Y,Z,\theta)\right)^2.}
\end{equation*}
\end{itemize}

\begin{lemma}\label{thm:eif-general} 
The efficient influence function for $\theta$ is
\begin{equation*}
\psi_{f_0}(y, z, \theta )=I^{-1}
\left\{
\frac{z}{p}\cdot g(y,\theta ) - \frac{1-z}{1-p} \cdot g(h(y,\theta ), \theta )
\right\},
\end{equation*}
where
\begin{align*}
g(y, \theta )  &= \frac{\partial}{\partial \theta } \log f_1(y,\theta)= \frac{\partial}{\partial \theta } \log\left(f_0(h^{-1}(y,\theta))\cdot \frac{\partial h^{-1}(y,\theta)}{\partial y}\right),
\end{align*}
and 
\begin{align*}
    I = \int g^2(h(y,\theta ),\theta )f_0(y)dy.
\end{align*}
\end{lemma}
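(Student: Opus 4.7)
The plan is to carry out the three-step program already sketched in the excerpt, namely identify the nuisance tangent set $\dot P_f$, compute the orthogonal projection of the parametric score $\dot\ell$ onto $\dot P_f$, and normalize the residual by its variance. First I will verify the form of the tangent set. Perturbing $f_0$ along a one-parameter submodel $f_0(1+\varepsilon v)$ with $\int v\,f_0=0$ and $\theta$ held fixed, the score for the $Z=0$ factor is $v(Y)$, while by (\ref{eq:f1(y,theta)}) the score for the $Z=1$ factor is $v(h^{-1}(Y,\theta))$, because only the $f_0(h^{-1}(Y,\theta))$ piece of $f_1$ depends on the perturbation. Summing gives exactly the element of $\dot P_f$ displayed in step $(1)$.

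Next I will compute the projection. Since the control factor does not depend on $\theta$, the parametric score is $\dot\ell(Y,Z,\theta)=Z\,g(Y,\theta)$, where $g(y,\theta)=\partial_\theta\log f_1(y,\theta)$ agrees with the expression in the statement. To project, I minimize
\begin{equation*}
E\bigl[(\dot\ell-(1-Z)v(Y)-Z\,v(h^{-1}(Y,\theta)))^{2}\bigr]
\end{equation*}
over mean-zero $v$. Conditioning on $Z$ and changing variables $X=h^{-1}(Y,\theta)\sim f_0$ in the $Z=1$ expectation, this becomes
\begin{equation*}
p\,E_{f_0}\bigl[(g(h(X,\theta),\theta)-v(X))^{2}\bigr]+(1-p)\,E_{f_0}[v(X)^{2}],
\end{equation*}
subject to $E_{f_0}[v]=0$. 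A one-line Lagrange calculation yields $v^{*}(x)=p\bigl(g(h(x,\theta),\theta)-E_{f_0}[g(h(X,\theta),\theta)]\bigr)$, and the score identity $\int g(y,\theta)f_1(y,\theta)\,dy=0$, transported through the same change of variables, shows $E_{f_0}[g(h(X,\theta),\theta)]=0$. Hence $v^{*}(x)=p\,g(h(x,\theta),\theta)$ and the efficient score is
\begin{equation*}
\dot\ell^{*}=Z\,g(Y,\theta)-(1-Z)p\,g(h(Y,\theta),\theta)-Z\,p\,g(Y,\theta)=Z(1-p)g(Y,\theta)-(1-Z)p\,g(h(Y,\theta),\theta).
\end{equation*}

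Finally I normalize. Using $Z(1-Z)=0$ and applying the same change of variables once more,
\begin{equation*}
E[(\dot\ell^{*})^{2}]=(1-p)^{2}p\,E_{f_1}[g^{2}(Y,\theta)]+p^{2}(1-p)\,E_{f_0}[g^{2}(h(Y,\theta),\theta)]=p(1-p)\mathcal I,
\end{equation*}
with $\mathcal I$ as defined in the lemma. Dividing $\dot\ell^{*}$ by $p(1-p)\mathcal I$ gives the stated formula for $\psi_{f_0}(y,z,\theta)$. The one delicate point is the change-of-variables identity $E_{f_1}[\varphi(Y)]=E_{f_0}[\varphi(h(X,\theta))]$ used twice and the verification that $g$ has mean zero under $f_0$ after the transformation; the monotonicity hypothesis $\partial_y h>0$ in Assumption \ref{as:par-outcomes} ensures $h^{-1}$ is well defined and the Jacobian factor in (\ref{eq:f1(y,theta)}) is positive, so these manipulations are unambiguous. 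Everything else is routine Hilbert-space projection in $L_2(f)$.
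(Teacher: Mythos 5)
Your proposal is correct and follows essentially the same route as the paper's own proof: both reduce the projection onto $\dot P_f$ to the unconstrained minimization of $p\,E_{f_0}[(g(h(X,\theta),\theta)-v(X))^2]+(1-p)E_{f_0}[v(X)^2]$ via the change of variables $P(Y\le h(y,\theta)\mid Z=1)=P(Y\le y\mid Z=0)$, obtain $v^*(x)=p\,g(h(x,\theta),\theta)$, verify the mean-zero constraint from the score identity, and normalize by $E[(\dot\ell^*)^2]=p(1-p)\mathcal{I}$. The only cosmetic difference is that you additionally spell out the derivation of the tangent set and phrase the constraint handling as a Lagrange step, whereas the paper minimizes unconstrained and checks the constraint afterwards; the substance is identical.
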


To use the influence function approach we need a $\sqrt n$-consistent initial estimator $\tilde\theta$. We can do so by a fixed number of quantiles, $u_1,\ldots,u_d$, where $d$ is the dimension of $\theta$, and find the $\theta$ that solves
 \[ \hat F^{-1}_1(u)=h(\hat F_0^{-1}(u),\theta),
\]
for $u=u_1,\ldots,u_d$. 
For simplicity, we suggest using evenly-spaced quantiles, $u=\frac{1}{1+d},\frac{2}{1+d},\dots,\frac{d}{1+d}$.
Let $\tilde\theta$ denote the solution to this system of equations.
Then:
\begin{theorem}\label{thm:eif-est-general}
Under mild conditions on the estimation of density and its derivative, the estimator $\hat{\theta}^{if}$ below
is efficient for $\theta$, i.e. $\sqrt{n}(\hat{\theta} - \theta) \stackrel{d}{\Rightarrow} \mathcal{N}(0,\frac{1}{p(1-p)} I^{-1})$:
\begin{align*}
\hat{\theta}^{if} & \equiv\tilde{\theta}+\frac{1}{n}\left\{\sum_{i=1}^{n/2}\psi_{\hat{f}_{0(2)}}(Y_i,Z_{i};\tilde{\theta})+\sum_{i=1+n/2}^{n}\psi_{\hat{f}_{0(1)}}(Y_i,Z_{i}; \tilde{\theta})\right\} 
\end{align*}
where $\hat{f}_{0(1)}$ is the estimate of $f_0$ using $\{{(Y_i,Z_i): 1\leq i \leq \frac{n}{2}}\}$, and $\hat{f}_{0(2)}$ is the estimate of $f_0$  using $\{(Y_i,Z_i): \frac{n}{2} < i \leq  n\}$, again a one step estimate using the sample splitting technique \citep{klaassen1987consistent}, similar to \eqref{eq:tau-if}.
\end{theorem}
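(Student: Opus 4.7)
The plan is a standard one-step M-estimator analysis with sample splitting \citep{klaassen1987consistent, bickel1993efficient}, applied to the efficient influence function $\psi_{f_0}$ of Lemma \ref{thm:eif-general}. The target expansion is
\[
\hat\theta^{if}-\theta = \frac{1}{n}\sum_{i=1}^n \psi_{f_0}(Y_i,Z_i;\theta) + o_P(n^{-1/2}),
\]
after which the CLT plus a direct variance computation deliver the stated limit. First, I would verify that the initial $\tilde\theta$ is $\sqrt n$-consistent. Its defining system $h(\hat F_0^{-1}(u_j),\tilde\theta)=\hat F_1^{-1}(u_j)$, $j=1,\dots,d$, is a smooth function of $\sqrt n$-consistent empirical quantiles; provided $f_0,f_1$ are bounded away from $0$ at the anchor quantiles and the Jacobian of $\theta\mapsto(h(F_0^{-1}(u_j),\theta))_j$ at the true $\theta$ is nonsingular (local identifiability), the implicit function theorem and delta method give $\tilde\theta-\theta=O_P(n^{-1/2})$.

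Next, partition $\{1,\dots,n\}$ into halves $\mathcal S_1,\mathcal S_2$ and Taylor expand each half-sum in the definition of $\hat\theta^{if}$ about $\theta$:
\[
\frac{2}{n}\sum_{i\in\mathcal S_k}\psi_{\hat f_{0(k)}}(Y_i,Z_i;\tilde\theta) = \frac{2}{n}\sum_{i\in\mathcal S_k}\psi_{\hat f_{0(k)}}(Y_i,Z_i;\theta) + (\tilde\theta-\theta)\cdot\frac{2}{n}\sum_{i\in\mathcal S_k}\partial_\theta\psi_{\hat f_{0(k)}}(Y_i,Z_i;\theta^*),
\]
with $\theta^*$ between $\tilde\theta$ and $\theta$. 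The derivative averages converge in probability to $\mathbb{E}[\partial_\theta\psi_{f_0}(Y,Z;\theta)]$, and differentiating the identity $\mathbb{E}_\theta[\psi_{f_0}(Y,Z;\theta)]=0$ yields $\mathbb{E}[\partial_\theta\psi_{f_0}]=-\mathbb{E}[\psi_{f_0}\,\dot\ell^*]=-1$ since $\psi_{f_0}=\dot\ell^*/\mathcal I$ and $\mathbb{E}[(\dot\ell^*)^2]=\mathcal I$. Hence the Taylor contribution equals $-(\tilde\theta-\theta)+o_P(n^{-1/2})$, which combines with the initial $\tilde\theta$ to leave $\theta+n^{-1}\sum_i\psi_{\hat f_{0(\cdot)}}(Y_i,Z_i;\theta)+o_P(n^{-1/2})$. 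What remains is to control the nuisance-plug-in error
\[
\frac{1}{\sqrt n}\sum_{i\in\mathcal S_k}\bigl\{\psi_{\hat f_{0(k)}}(Y_i,Z_i;\theta)-\psi_{f_0}(Y_i,Z_i;\theta)\bigr\}=o_P(1),\qquad k=1,2.
\]

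Sample splitting renders this routine once a Neyman orthogonality built into the formula of Lemma \ref{thm:eif-general} is noted: for \emph{any} density $\hat f$, the change of variable $y\mapsto h(y,\theta)$ applied to the two terms of $\psi_{\hat f}$ gives $\mathbb{E}[\hat g(Y,\theta)\mid Z=1]=\mathbb{E}[\hat g(h(Y,\theta),\theta)\mid Z=0]$, so $\mathbb{E}[\psi_{\hat f}(Y,Z;\theta)]=0$ under the true law regardless of $\hat f$. Conditioning on the half that produced $\hat f_{0(k)}$, the summands are i.i.d. with mean exactly zero, and the conditional variance is bounded by the $L^2(f_0)$ error $\int(\hat g_{(k)}-g)^2(h(y,\theta),\theta)f_0(y)\,dy\to_P 0$ together with $\hat{\mathcal I}\to_P\mathcal I$, which is the content of the mild conditions; Chebyshev delivers the $o_P(1)$ bound. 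The CLT for $n^{-1/2}\sum_i\psi_{f_0}(Y_i,Z_i;\theta)$ then gives a normal limit with variance
\[
\mathrm{Var}(\psi_{f_0})=\mathcal I^{-2}\left\{\frac{1}{p}\mathbb{E}[g^2(Y,\theta)\mid Z=1]+\frac{1}{1-p}\mathbb{E}[g^2(h(Y,\theta),\theta)\mid Z=0]\right\}=\frac{\mathcal I^{-1}}{p(1-p)},
\]
where both conditional expectations equal $\mathcal I$ via the same change of variable, completing the argument.

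The main obstacle is the nuisance-consistency step. Because $g(y,\theta)$ involves $\partial_y\log f_0$ through the Jacobian term $\partial h^{-1}/\partial y$, one is really being asked to estimate a density derivative at an $L^2(f_0)$ rate, together with consistency of $\hat{\mathcal I}$. Sample splitting removes any Donsker-type requirement on the plug-in class, but thick-tailed $f_0$—the regime that motivates the paper—still forces careful truncation of $\hat f_{0(k)}$ away from low-density regions. These are the mild conditions alluded to in the theorem and parallel those supporting Theorem \ref{thm:efficiency}.
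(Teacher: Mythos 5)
Your proposal is correct and follows essentially the same route as the paper: the paper omits this proof entirely, stating only that it parallels the sample-splitting one-step argument given for Theorem \ref{thm:efficiency} (the expansion $\hat\theta^{if}-\theta = A + (\tilde\theta-\theta)(1+B)$ with $B\to_P -1$ and the conditional mean-zero/variance bound for the plug-in score under \eqref{eq:fhat-score-convergence}--\eqref{eq:fhat-info-convergence}), which is precisely what you carry out in the general $h(\cdot,\theta)$ setting. Your added observations — exact unbiasedness of $\psi_{\hat f}$ under the true law via the change of variables $y\mapsto h(y,\theta)$, and the $\sqrt{n}$-consistency of the quantile-matching $\tilde\theta$ via the delta method — are details the paper leaves implicit, and they are correct.
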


Note that, unlike the constant treatment effect setting, the efficient influence function is not necessarily the one corresponding to $F_0$ known.

Given inference for $\hat\theta$, inference for $\hat\tau$ as an estimator of the in-sample average treatment effect,  is straightforward based on the Delta method and the representation in (\ref{tau_is}), taken as given the potential outcomes. The asymptotic variance for $\hat\tau$ is equal to the variance  for $\hat\theta$, pre and post multiplied by the derivative of the expression in  (\ref{tau_is}) with respect to $\theta.$

\section{Simulations}\label{section:simulations}

We evaluate the performance of the proposed estimators and conventional estimators in a Monte Carlo study.
Throughout most of these simulations, the true unit-level treatment effects are all zero.
We estimate the treatment effect using the proposed efficient estimators based on an additive model.
We consider seven estimators:
The (standard) difference in means, the difference in medians, the Hodges-Lehman (\citet{hodges1963estimates}) estimator,\footnote{The Hodges-Lehmann estimator is equal to the median of all pairwise differences between treated and control observations.}
the adaptively trimmed mean, the adaptively winsorized mean,\footnote{We apply the ideas of \citet{jaeckel1971some} for the optimal trimmed mean to choose the parameters for the estimators in Section~\ref{section:partial}, see Theorem A.3 in the Appendix. We allow anywhere between no trimming (difference in means) and the extreme of trimming all but the medians (difference in medians). While including the extremes is not covered by the theory, this approach appeared to work best in our simulations.} the estimator based on the efficient influence function (eif), and the weighted average quantiles (waq) estimator. For the latter two we report only results without sample splitting. Results for the case with sample splitting are very similar and are available in the supplementary materials.
Although in our illustrations we use relatively simple  estimators for the densities and their derivatives based on variable bandwidth kernels, an alternative would be to use methods directy aimed at estimating derivatives  of the logarithm of the density as in \citet{pinkse2021estimates}.
Detailed descriptions of how the new estimators are implemented, as well as R code implementing all estimators with performance optimizations for these simulations, are available in the supplemental materials.\footnote{The fully documented R package  is available at \url{https://github.com/michaelpollmann/parTreat}. Details on the empirical implementation of our estimators are in the supplementary materials. For a sample of 1,000 treated and 1,000 control observations, the R package computes estimates and standard errors practically instantaneously.
With very large samples, the derivatives of the log density can be precomputed on a random subsample of the data for similarly fast computation.}
We present results for three sets of simulations, one with   a range of known distributions for the potential outcomes, so we can directly assess the ability of the proposed methods to adapt to different distributions, and two with simulations based on real data: one  based on housing prices and one based on medical expenditures, both
with thick tailed distribution.

\subsection{Simulations with Known Distributions}

We simulate samples of \(n=\)20,000 observations, half of which are treated, and report summary statistics based on 10,001 simulated samples (using an odd number so that the median is unique).
We repeat the simulation study for standardized Normal, Double Exponential (Laplace), and Cauchy distributions for the potential outcomes.
The difference in means is the maximum likelihood estimator for Normally distributed data, and so will do well there, but may perform poorly for thicker tailed distributions such as the Double Exponential distribution and in particular 
 the Cauchy distribution.
The difference in medians is the maximum likelihood estimator for the Double Exponential distribution, and relatively robust to thick tails and outliers, and so  is  expected to perform reasonably well across all specifications, but not as well as the efficient estimators for the Normal.

For the simulations with known distributions we can derive the functional form for the optimal weights for the quantile-based estimator.
The optimal weights for the waq estimator are proportional to the (estimated) second derivative of the log density.
For the Normal distribution, \(\pder[2]{\ln f}{y}(y) = - \frac{1}{\sigma^2}\), implying the optimal weights are constant.
The density of the double exponential distribution is such that the optimal
 weights asymptotically place all weight close to the median.
For the standard Cauchy distribution the efficient weights \(w\) on the difference in \(u\in(0,1)\) quantiles of treated and control distributions are \(w_f(u) \propto - \cos(2 \pi u) \sin(\pi u)^2\), shown in Figure~\ref{fig:weight-cauchy}.
Most of the weight is concentrated around the median, with strictly {negative} weights  outside the $[0.25,0.75]$  quantile range.

\begin{figure}
\centering
\includegraphics[width=0.49\linewidth]{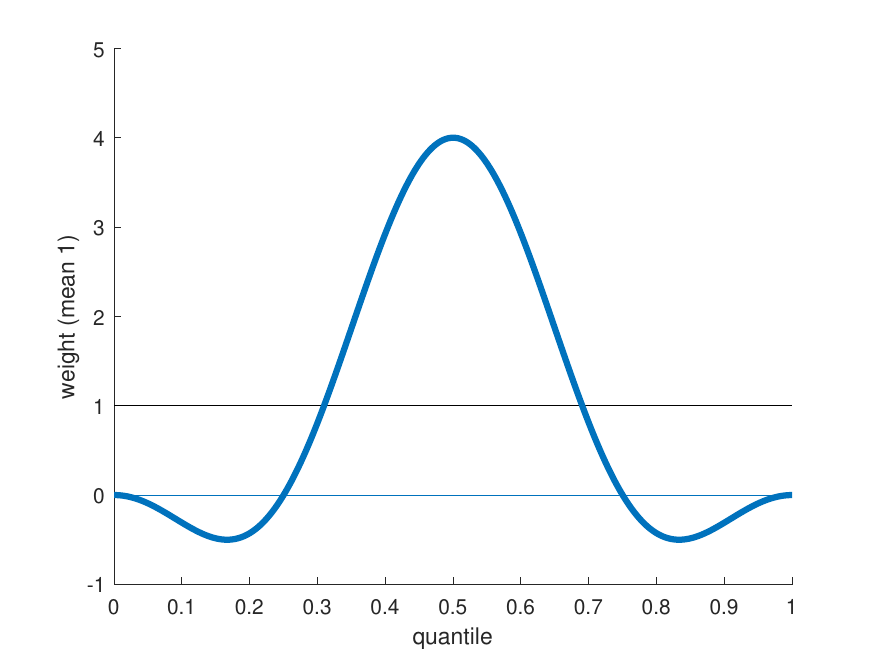}
\hfill
\includegraphics[width=0.49\linewidth]{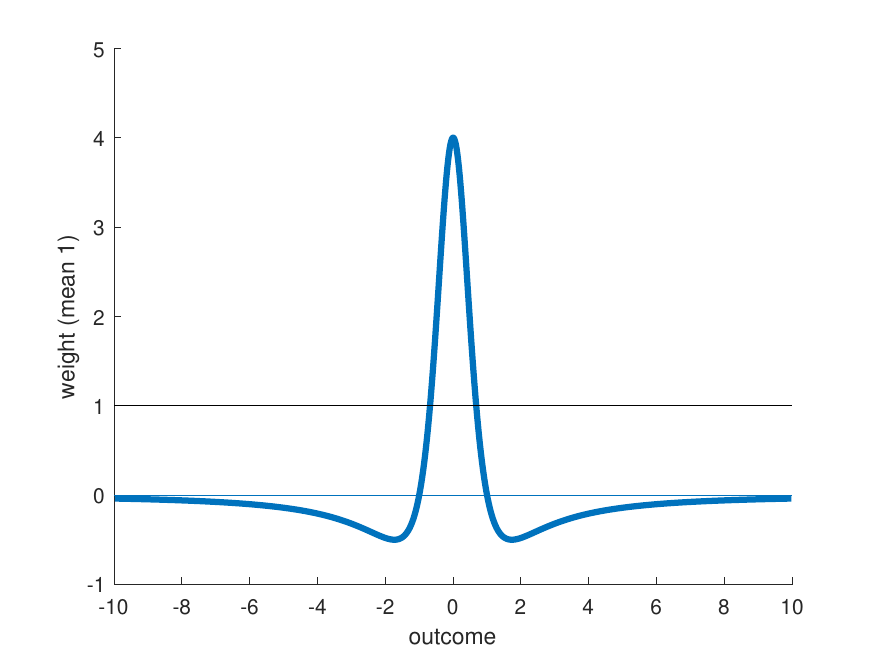}
\caption{\label{fig:weight-cauchy}The efficient weight function for the Cauchy distribution.
The weights, normalized to have mean 1, are plotted against the quantile (left) and against the value of the observations (right).
For the figure on the right, we only plot the range from \(-10\) to \(10\), which corresponds to approximately the \(0.03\) to \(0.97\) quantile.
At this point, the weight is approximately \(-0.04\), and weights for more extreme quantiles are closer to 0.}
\end{figure}

The efficient estimators perform well across distributions, and confidence intervals based either on estimates of the analytic variance formulas or on the bootstrap achieve their nominal coverage levels, as shown in Table~\ref{tab:simulations-parametric}.
Their standard deviations are close to the theoretical efficiency bound, as shown in column 4, labelled relative efficiency, where values larger than one imply standard deviations of the estimator in excess of the efficiency bound.
The efficient influence function and weighted average quantiles estimators are close to the most efficient estimator for the normal, double exponential, and Cauchy distribution.
The last columns show that the confidence intervals are close to their nominal coverage for each distribution.
For computational convenience in the simulations, the confidence intervals based on the bootstrap variance use the \(m\)-out-of-\(n\) bootstrap (\citet{bickel2012resampling}), with \(m= \)2,000 (half treated, half control), to estimate the variance of the estimators.
Even with these smaller sample sizes, the density estimates calculated within each bootstrap sample appear to be sufficiently good to yield reasonable confidence intervals for the estimators.

\begin{table}
    \caption{\label{tab:simulations-parametric}Summary statistics 
    for simulations with different distributions: 
    Normal, Double Exponential, Cauchy. Statistics 
    shown are based on 10,001 simulated samples. 
    Each sample has 10,000 treated observations 
    and 10,000 control observations. Relative 
    efficiency is the ratio of standard deviation 
    to the square root of the efficiency bound. 
    Columns labeled length show the median length 
    of the confidence intervals.}
    \centering
\resizebox{1\linewidth}{!}{
\begin{tabular}{l rrrrr rr rr}
\toprule
& & & & & & \multicolumn{2}{c}{95\% C.I.} & \multicolumn{2}{c}{boot. var. C.I.} \\
\cmidrule(lr){7-8} \cmidrule(lr){9-10}
estimator & bias & \makecell{standard \\ deviation} & \makecell{relative \\ efficiency} & RMSE & MAD & coverage & \makecell{median \\ length} & coverage & \makecell{median \\ length}  \\
\midrule
\multicolumn{10}{l}{Normal Distribution:} \\
diff. in means & 0.000 & 0.014 & 1.01 & 0.014 & 0.010 & 0.95 & 0.055 & 0.95 & 0.055 \\ 
diff. in medians & -0.000 & 0.018 & 1.26 & 0.018 & 0.012 & 0.95 & 0.069 & 0.95 & 0.069 \\ 
Hodges-Lehmann & 0.000 & 0.015 & 1.03 & 0.015 & 0.010 & 0.95 & 0.057 & 0.95 & 0.057 \\ 
adaptive trim & 0.000 & 0.015 & 1.03 & 0.015 & 0.010 & 0.94 & 0.055 & 0.95 & 0.057 \\ 
adaptive wins. & 0.000 & 0.014 & 1.01 & 0.014 & 0.010 & 0.95 & 0.055 & 0.95 & 0.055 \\ 
eif & 0.000 & 0.014 & 1.02 & 0.014 & 0.010 & 0.95 & 0.056 & 0.95 & 0.057 \\ 
waq & 0.000 & 0.014 & 1.02 & 0.014 & 0.010 & 0.95 & 0.056 & 0.95 & 0.056 \\ 
\\
\multicolumn{10}{l}{Double Exponential Distribution:} \\
diff. in means & -0.000 & 0.020 & 1.43 & 0.020 & 0.014 & 0.95 & 0.078 & 0.95 & 0.078 \\ 
diff. in medians & 0.000 & 0.014 & 1.01 & 0.014 & 0.010 & 0.97 & 0.061 & 0.95 & 0.057 \\ 
Hodges-Lehmann & -0.000 & 0.016 & 1.17 & 0.016 & 0.011 & 0.95 & 0.064 & 0.95 & 0.064 \\ 
adaptive trim & 0.000 & 0.014 & 1.02 & 0.014 & 0.010 & 0.95 & 0.059 & 0.95 & 0.057 \\ 
adaptive wins. & -0.000 & 0.018 & 1.29 & 0.018 & 0.012 & 0.96 & 0.076 & 0.95 & 0.069 \\ 
eif & -0.000 & 0.015 & 1.06 & 0.015 & 0.010 & 0.95 & 0.060 & 0.96 & 0.060 \\ 
waq & -0.000 & 0.015 & 1.07 & 0.015 & 0.010 & 0.95 & 0.060 & 0.96 & 0.061 \\ 
\\
\multicolumn{10}{l}{Cauchy Distribution:} \\
diff. in means & 0.462 & 127.149 & 6357.47 & 127.144 & 2.047 & 0.98 & 9.324 & 0.98 & 9.292 \\ 
diff. in medians & 0.000 & 0.022 & 1.11 & 0.022 & 0.015 & 0.97 & 0.093 & 0.95 & 0.087 \\ 
Hodges-Lehmann & 0.000 & 0.026 & 1.28 & 0.026 & 0.018 & 0.95 & 0.101 & 0.95 & 0.101 \\ 
adaptive trim & 0.000 & 0.022 & 1.08 & 0.022 & 0.015 & 0.95 & 0.092 & 0.95 & 0.086 \\ 
adaptive wins. & 0.000 & 0.024 & 1.19 & 0.024 & 0.016 & 0.98 & 0.111 & 0.97 & 0.100 \\ 
eif & 0.000 & 0.020 & 1.01 & 0.020 & 0.014 & 0.97 & 0.085 & 0.97 & 0.088 \\ 
waq & 0.000 & 0.021 & 1.05 & 0.021 & 0.014 & 0.96 & 0.085 & 0.98 & 0.099 \\
\bottomrule
\end{tabular}
}
\end{table}

\subsection{Simulations with House Price Data}
\label{section:applications}

In the second set of simulations 
we use house price data from the replication files of \citet{linden2008estimates} available at \citet{linden2019replication}.
They obtained property sales data for Mecklenburg County, North Carolina, between January 1994 and December 2004.
They dropped sales below \$5,000 and above \$1,000,000, such that 170,239 observations remain, which we take as our population of interest.
Despite the trimming  the distribution is noticeably skewed (skewness \(2.2\)) and thick tailed (kurtosis \(9.5\)).
Even after taking logs, the distribution is heavy-tailed with kurtosis equal to \(5.1\).
Figure~\ref{fig:house-prices-weights} plots a histogram for house prices, both in levels and in logs, along with the estimated optimal weights (minus the second derivative of the log density) based on all 170,239 observations.

\begin{figure}
\centering
\includegraphics[width=0.49\linewidth]{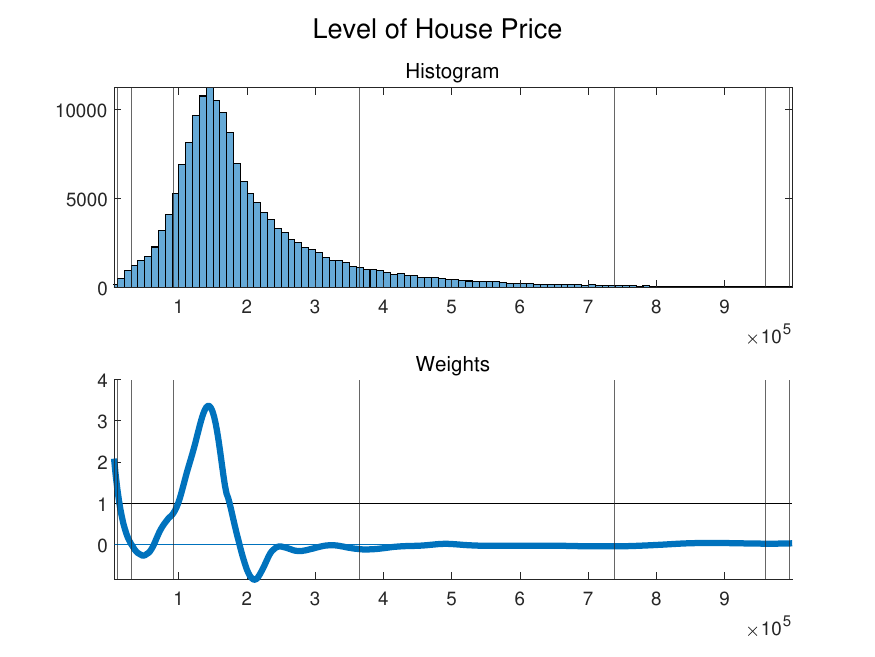}
\hfill
\includegraphics[width=0.49\linewidth]{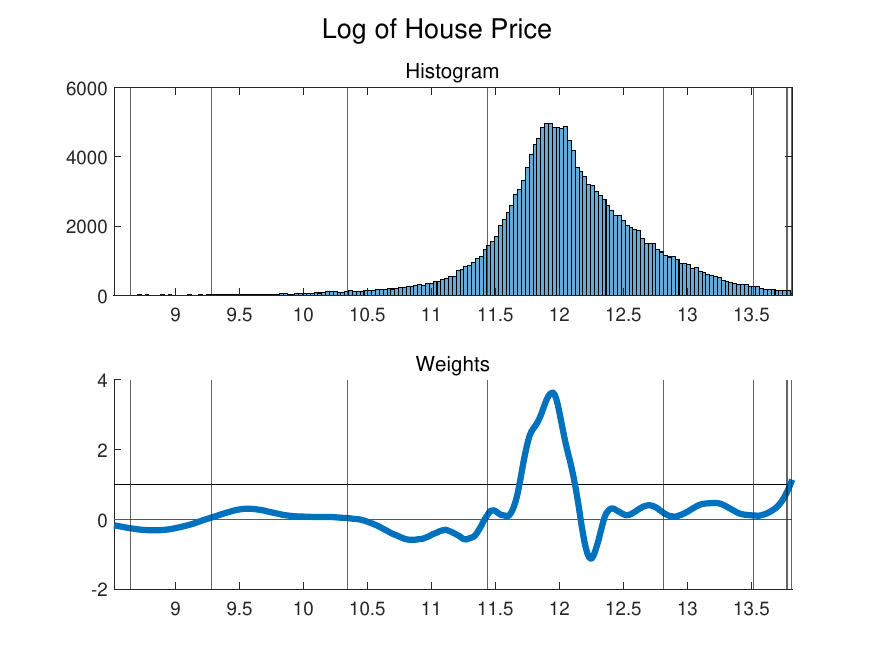}
\caption{\label{fig:house-prices-weights}Histogram of house prices, in levels and logs, as well as estimated optimal weights (minus the second derivative of the log density) based on all 170,239 observations.
The weights are normalized to be mean 1 (black horizontal line). Some weights are below 0 (blue horizontal line).
Vertical lines indicate the \(0.0001\), \(0.001\), \(0.01\), \(0.1\), \(0.9\), \(0.99\), \(0.999\), \(0.9999\) quantiles.}
\end{figure}

We base simulations on this data by drawing samples of size \(n=\)20,000, and randomly assigning exactly half of each sample to the treatment group and the remaining half to the control group, with a zero treatment effect.
Within each sample, observations are drawn from the population without replacement, but sampling is independent across samples, such that observations may appear in multiple samples. We estimate the efficiency bound using density estimates based on all observations.
We compute the same estimators as in the simulations of the previous section, with a small adjustment to the adaptively trimmed and winsorized means where we
 fix the trimming and winsorizing percentiles on the left to 0\% (no trimming/winsorizing), and only adaptively choose the threshold on the right.

Table~\ref{tab:house-prices-simulations} summarizes the simulation results based on 10,001 simulated samples.
When the house prices are in levels, the standard deviation of the difference in means estimator is twice as large as that of efficient estimators.
For the difference in medians and the Hodges-Lehman estimators, which are less affected by outliers in the data, the standard deviation is larger by approximately 30\% and 20\%, respectively.
Confidence intervals, based on estimated variances and asymptotic normal approximations, have close to nominal coverage throughout, and are meaningfully shorter for the efficient estimators we propose.

In Figure~\ref{fig:house-prices-heterogeneous-effects} we show the root mean squared error and coverage of 95\% confidence intervals \emph{both relative to the average treatment effect} under deviations from the constant treatment effect model.\footnote{The design of these simulations was kindly suggested by one of the referees.}
In the top panel, the unit-level treatment effects are independent draws from a normal distribution with mean equal to 0.1 standard deviations of the (population) standard deviation of the potential outcomes in the absence of treatment.
On the horizontal axis, we vary the standard deviation of the normal distribution as a fraction \(q\) of the (population) standard deviation of the control potential outcomes, simulating 10,001 samples for each value.
When \(q>0\), the constant treatment effect model is misspecified.
In the bottom panel, the unit-level treatment effects are \(0\) with probability \(q\) and \(t\) with probability \(1-q\), where \(t\) is chosen as a function of \(q\) such that the average treatment effect is constant across all simulations and the same as in the top panel.
The difference in means estimator is unbiased for the average treatment effect regardless of the value of \(q\), so its large root mean squared error is due to its variance.
In both panels, the influence function-based and weighted average quantile estimators are only (asymptotically) unbiased for the average treatment effect when \(q=0\).

\begin{figure}
\centering
\includegraphics[width=\linewidth]{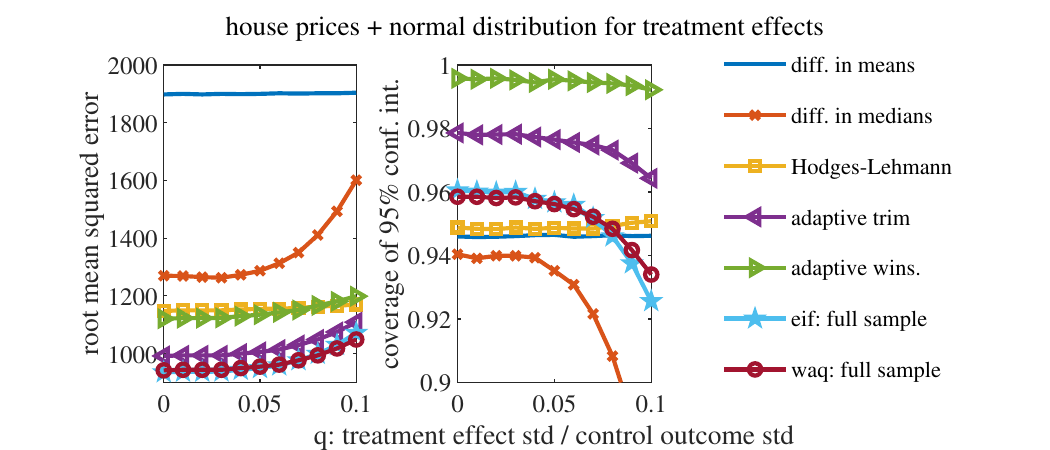}

\vspace{1em}

\includegraphics[width=\linewidth]{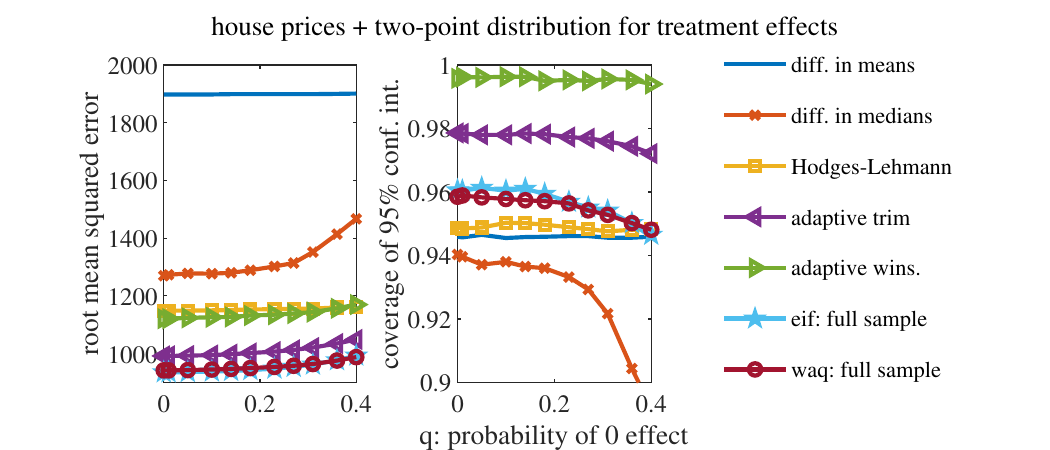}

\caption{\label{fig:house-prices-heterogeneous-effects}Root mean squared error and coverage of 95\% confidence intervals relative to the average treatment effect (fixed at 0.1 standard deviations of the outcome in the absence of treatment) in simulations with heterogeneous treatment effects in the house price data. The horizontal axis varies the amount of heterogeneity. When \(q=0\), there is no heterogeneity such that the constant additive treatment effect model is correctly specified.}
\end{figure}

We also estimate a proportional treatment effect (multiplicative) model and translate the estimated coefficients into level effects.
Under the multiplicative model, \(Y_{i}(1) = \theta Y_{i}(0)\).
When outcomes are strictly positive, this is identical to an additive model for log outcomes; \(\log(Y_{i}(1)) = \log(\theta) + \log(Y_{i}(0))\).
Given an estimate \(\hat{\tau}_{\rm log}\) of the additive model with the outcome in logs, the estimate of the level treatment effect is then
\begin{equation*}
\hat{\tau} = \frac{1}{n} \Bigl(
\sum_{i=1}^{n}
(1-Z_{i}) \bigl((\exp(\hat{\tau}_{\rm log}) - 1) Y_{i} \bigr)
+
Z_{i} \bigl((1 - \exp(-\hat{\tau}_{\rm log})) Y_{i} \bigr)
\Bigr).
\end{equation*}
For estimates of the in-sample treatment effect, we therefore apply estimates \(\hat{\tau}_{\rm log}\) to a population of interest with known means \(\mu_{Y_{0}}\) and \(\mu_{Y_{1}}\) and fixed treatment probability \(p\) as
\begin{equation*}
\hat{\tau} =
(1-p) \bigl((\exp(\hat{\tau}_{\rm log}) - 1) \mu_{Y_{0}} \bigr)
+
p \bigl((1 - \exp(-\hat{\tau}_{\rm log})) \mu_{Y_{1}} \bigr)
.
\end{equation*}
Using the Delta method, if \(V\) is the asymptotic variance of \(\hat{\tau}_{\rm log}\), then the asymptotic variance of \(\hat{\tau}\), holding \(\mu_{Y_{0}}\), \(\mu_{Y_{1}}\), and \(p\) fixed, is
\begin{equation*}
\Bigl((1-p) \exp(\tau_{\rm log}) \mu_{Y_{0}}
+
p \exp(-\tau_{\rm log}) \mu_{Y_{1}}
\Bigr)^{2}
V
\end{equation*}
which we estimate by replacing \(\tau_{\rm log}\) with \(\hat{\tau}_{\rm log}\) and \(V\) by the estimate of the variance, \(\hat{V}\).
For the purpose of these simulations, we set \(\mu_{Y_{0}} = \mu_{Y_{1}}\) equal to the population mean of house prices, and \(p = 1/2\).

When treatment effects are assumed to be proportional to potential outcomes, the proposed estimators for this multiplicative model are still more efficient than alternative estimators, but the gains are smaller.
The middle panel of Table~\ref{tab:house-prices-simulations} shows the quality of estimates of the multiplicative parameter obtained by transforming outcomes into logs, \(\log(\theta)\).
As can be seen in panel~(b) of Figure~\ref{fig:house-prices-weights}, the distribution of log house prices appears closer to the normal distribution with fewer ``outliers.''
Consequently, the difference in means estimator, which is efficient for the normal distribution, comes noticeably closer to the efficiency bound than when outcomes are in levels.
Nevertheless, the efficient estimators further reduce the variance.
The bottom panel of Table~\ref{tab:house-prices-simulations} shows the same summary statistics when the multiplicative parameter is translated back into a level effect.
The efficient estimators of the multiplicative parameter lead to treatment effects with smaller variance and shorter confidence intervals than the difference in means, irrespective of whether the latter is estimated in levels (top panel) or in logs and then translated into levels (bottom panel).

\begin{table}
    \caption{\label{tab:house-prices-simulations}Summary statistics for simulations based on house price data from \citet{linden2008estimates}.
    Statistics shown are based on 10,001 simulated samples.
    Each sample has 10,000 treated observations and 10,000 control observations.
    Relative efficiency is the ratio of standard deviation to the square root of the efficiency bound, which we calculate from density estimates based on the full sample.
    Columns labeled length show the median length of the confidence intervals.
    The adaptively trimmed mean and adaptively winsorized mean only apply trimming at the top, but not at the bottom.
    }
    \centering
\resizebox{1\linewidth}{!}{
    \begin{tabular}{l rrrrr rr rr}
    \toprule
    & & & & & & \multicolumn{2}{c}{95\% C.I.} & \multicolumn{2}{c}{boot. var. C.I.} \\
    \cmidrule(lr){7-8} \cmidrule(lr){9-10}
    estimator & bias & \makecell{standard \\ deviation} & \makecell{relative \\ efficiency} & RMSE & MAD & coverage & length & coverage & length \\
    \midrule
    \multicolumn{10}{l}{effect in levels based on additive model in levels} \\
    diff. in means & -28 & 1871 & 1.92 & 1871 & 1251 & 0.95 & 7343 & 0.95 & 7339 \\ 
    diff. in medians & 8 & 1259 & 1.30 & 1259 & 855 & 0.95 & 4989 & 0.95 & 4893 \\ 
    Hodges-Lehmann & -5 & 1138 & 1.17 & 1138 & 757 & 0.95 & 4487 & 0.95 & 4487 \\ 
    adaptive trim & 5 & 989 & 1.02 & 989 & 651 & 0.98 & 4560 & 0.95 & 3914 \\ 
    adaptive wins. & 4 & 1114 & 1.15 & 1114 & 738 & 0.99 & 6378 & 0.97 & 4760 \\ 
    eif & 13 & 941 & 0.97 & 941 & 628 & 0.95 & 3819 & 0.95 & 3768 \\ 
    waq & 13 & 946 & 0.97 & 946 & 626 & 0.95 & 3819 & 0.95 & 3859 \\ 
    \\
    \multicolumn{10}{l}{multiplicative parameter: additive model in logs} \\
    diff. in means & -0.0000 & 0.0083 & 1.35 & 0.0083 & 0.0055 & 0.95 & 0.033 & 0.95 & 0.032 \\ 
    diff. in medians & 0.0000 & 0.0075 & 1.23 & 0.0075 & 0.0051 & 0.95 & 0.030 & 0.95 & 0.029 \\ 
    Hodges-Lehmann & -0.0000 & 0.0072 & 1.17 & 0.0072 & 0.0048 & 0.95 & 0.028 & 0.95 & 0.028 \\ 
    adaptive trim & -0.0000 & 0.0083 & 1.35 & 0.0083 & 0.0055 & 0.95 & 0.033 & 0.95 & 0.032 \\ 
    adaptive wins. & -0.0000 & 0.0083 & 1.35 & 0.0083 & 0.0055 & 0.95 & 0.033 & 0.95 & 0.032 \\ 
    eif & -0.0000 & 0.0067 & 1.08 & 0.0067 & 0.0045 & 0.95 & 0.026 & 0.95 & 0.026 \\ 
    waq & -0.0000 & 0.0067 & 1.08 & 0.0067 & 0.0044 & 0.95 & 0.026 & 0.95 & 0.027 \\ 
    \\
    \multicolumn{10}{l}{effect in levels based on additive model in logs} \\
    diff. in means & -7 & 1695 & 1.35 & 1695 & 1125 & 0.95 & 6658 & 0.95 & 6657 \\ 
    diff. in medians & 9 & 1545 & 1.23 & 1545 & 1048 & 0.95 & 6128 & 0.95 & 6001 \\ 
    Hodges-Lehmann & -6 & 1468 & 1.17 & 1468 & 976 & 0.95 & 5786 & 0.95 & 5783 \\ 
    adaptive trim & -7 & 1695 & 1.35 & 1695 & 1125 & 0.95 & 6658 & 0.95 & 6657 \\ 
    adaptive wins. & -7 & 1695 & 1.35 & 1695 & 1125 & 0.95 & 6658 & 0.95 & 6657 \\ 
    eif & -5 & 1363 & 1.08 & 1363 & 914 & 0.95 & 5374 & 0.95 & 5415 \\ 
    waq &-3 & 1364 & 1.08 & 1364 & 910 & 0.95 & 5374 & 0.95 & 5462 \\
    \bottomrule
    \end{tabular}
}
\end{table}

\subsection{Medical Expenditures Data}

Next, we present simulation results based on confidential medical expenditure data from the IBM MarketScan Research Database, following the sample construction of \citet[Figure~2]{koenecke2021alpha}.
We restrict the sample to males, age 45--64, with pneumonia inpatient diagnosis and at least 1 year of continuous medical enrollment.
For each patient, we consider the first inpatient admission only to abstract away from any dynamics.
We focus on medical expenditure 
as the outcome variable.
For each patient, we sum the payments recorded by MarketScan for this admission.
In total, we use data on 103,662 admissions.\footnote{The sample size differs slightly from that reported by \citet{koenecke2021alpha} due to missing expenditure data for a small number of admissions.}
Figure~\ref{fig:medical-expenditures-weights} plots a histogram for medical expenditure in levels and in logs along with the estimated optimal weights (minus the second derivative of the log density).
We also observe a treatment variable in this data set,
the (prior) use of alpha blockers, which \citet{koenecke2021alpha} find may improve health outcomes during respiratory distress by preventing hyperinflammation.

\begin{figure}
\includegraphics[trim={0 0 0 2.3in},clip,width=0.49\linewidth]{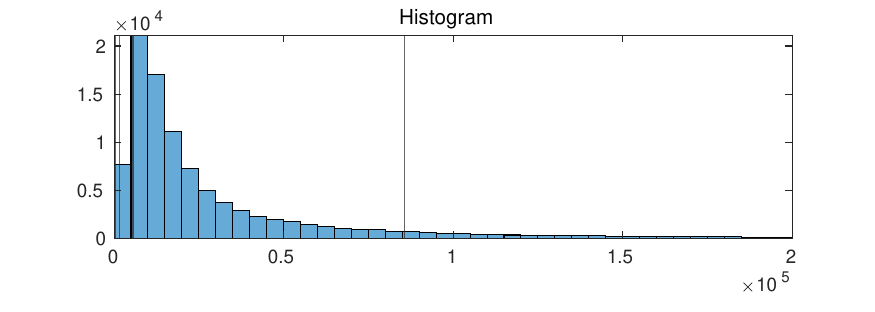}
\includegraphics[trim={0 2.05in 0 0.4in},clip,width=0.49\linewidth]{fig_weight_level_health-2021-06-01.pdf}
\hfill
\includegraphics[trim={0 0 0 2.3in},clip,width=0.49\linewidth]{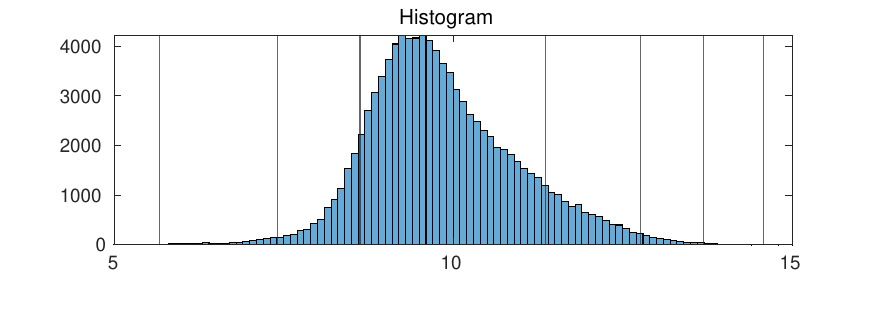}
\includegraphics[trim={0 2.05in 0 0.4in},clip,width=0.49\linewidth]{fig_weight_log_health-2021-06-01.pdf}
\caption{\label{fig:medical-expenditures-weights}Histogram of medical expenditures per admission, in levels and logs, as well as estimated optimal weights (minus the second derivative of the log density), based on the 98,155 observations in the control group.
For the figure in levels, vertical lines indicate the \(0.001\), \(0.01\), \(0.1\), and \(0.9\) quantiles; the figure is limited to below \$200,000, such that the \(0.99\) and higher quantiles do not appear.
For the figure in logs, vertical lines indicate the \(0.001\), \(0.01\), \(0.1\), \(0.99\), \(0.999\), \(0.9999\) quantiles.
The weights are normalized to be mean 1 (black horizontal line).
Some weights are below 0 (blue horizontal line).}
\end{figure}

We design a simulation study similar to those of the previous sections, treating the receipt of alpha blockers as randomly assigned in our population.
This allows us to study (coverage) properties of the estimators and inference procedures in settings where the parametric treatment effect model is not (necessarily) correctly specified.
For these simulations, each sample is a draw, without replacement, of 200 of the 5,507 observations in the treatment group and 3,565 of the 98,155 observations in the control group.
While the control group is smaller than in our other simulations, it remains sufficiently large to estimate the density and its derivatives required for our estimators.
In this simulation design, \(F_{0}\) is given by the empirical distribution of the control group, and \(F_{1}\) is given by the empirical distribution of the treatment group, of the full sample.
Although it is not necessarily correct, the additive treatment effect model may offer a reasonable approximation, and we are interested in the performance of inference methods when the conditions for our theoretical results are not (quite) met in this particular application.
The simulation results reported in Table~\ref{tab:health-simulations-log}, for the same estimators as in Section~\ref{section:applications}, suggest that the proposed estimators perform reasonably well in this setting despite mis-specification.

\begin{table}
\caption{\label{tab:health-simulations-log} Summary statistics for simulations based on the medical expenditures data in logs, across 1,000 simulated samples.
Each sample has 200 treated observations and 3,565 control observations.
Coverage and median length refer to 95\% confidence intervals.}
\centering
    \begin{tabular}{l rrrrr rr rr}
    \toprule
    & & & & & \multicolumn{2}{c}{95\% C.I.} & \multicolumn{2}{c}{boot. var. C.I.} \\
    \cmidrule(lr){6-7} \cmidrule(lr){8-9}
    estimator & bias & \makecell{standard \\ deviation} & RMSE & MAD & coverage & length & coverage & length \\
    \midrule
    
\multicolumn{9}{l}{effect in levels based on additive model in levels} \\
    \multicolumn{9}{l}{error and coverage relative to population difference in means} \\
    diff. in means & -108 & 4567 & 4566 & 3016 & 0.93 & 16935 & 0.93 & 16888 \\
    diff. in medians & 3345 & 1271 & 3578 & 3222 & 0.46 & 6161 & 0.28 & 5259 \\
    Hodges-Lehmann & 3490 & 891 & 3602 & 3464 & 0.03 & 3664 & 0.01 & 3534 \\
    adaptive trim & 3877 & 641 & 3929 & 3860 & 0.00 & 3052 & 0.00 & 2778 \\
    adaptive wins. & 2816 & 1384 & 3138 & 2786 & 0.52 & 5668 & 0.57 & 6023 \\
    eif & 3928 & 682 & 3987 & 3922 & 0.01 & 4878 & 0.00 & 3240 \\
    waq & 3846 & 792 & 3927 & 3848 & 0.03 & 4878 & 0.01 & 4018 \\
    \\
\multicolumn{9}{l}{multiplicative parameter: additive model in logs} \\
    \multicolumn{9}{l}{error and coverage relative to population difference in means of outcomes in logs} \\
    diff. in means & -0.001 & 0.077 & 0.077 & 0.052 & 0.95 & 0.30 & 0.95 & 0.30 \\
    diff. in medians & 0.005 & 0.080 & 0.080 & 0.051 & 0.97 & 0.33 & 0.95 & 0.32 \\
    Hodges-Lehmann & 0.008 & 0.071 & 0.071 & 0.048 & 0.95 & 0.29 & 0.95 & 0.28 \\
    adaptive trim & 0.016 & 0.074 & 0.076 & 0.053 & 0.95 & 0.29 & 0.95 & 0.29 \\
    adaptive wins. & -0.002 & 0.078 & 0.078 & 0.052 & 0.94 & 0.30 & 0.95 & 0.31 \\
    eif & 0.029 & 0.066 & 0.072 & 0.049 & 0.95 & 0.27 & 0.94 & 0.26 \\
    waq & 0.028 & 0.066 & 0.071 & 0.047 & 0.95 & 0.27 & 0.95 & 0.27 \\
    \\
\multicolumn{9}{l}{effect in levels based on additive model in logs} \\
    \multicolumn{9}{l}{error and coverage relative to population difference in means} \\
    diff. in means & 2423 & 2871 & 3756 & 2522 & 0.90 & 11167 & 0.91 & 11138 \\ 
    diff. in medians & 2647 & 3009 & 4006 & 2499 & 0.92 & 12318 & 0.92 & 11951 \\ 
    Hodges-Lehmann & 2747 & 2673 & 3832 & 2672 & 0.88 & 10730 & 0.87 & 10390 \\ 
    adaptive trim & 3062 & 2809 & 4154 & 2983 & 0.86 & 11055 & 0.85 & 10850 \\ 
    adaptive wins. & 2365 & 2915 & 3752 & 2544 & 0.90 & 11084 & 0.92 & 11389 \\ 
    eif & 3532 & 2525 & 4341 & 3481 & 0.78 & 10452 & 0.76 & 10014 \\ 
    waq & 3464 & 2523 & 4285 & 3413 & 0.79 & 10441 & 0.77 & 10147 \\
    \bottomrule
    \end{tabular}
\end{table}

\section{Conclusion}
\label{section:conclusion}

In many modern settings where randomized experiments are used to estimate treatment effects the presence of heavy-tailed distributions can lead to larger standard errors. 
Often researchers use winsorizing with {\it ad hoc} thresholds to address this. 
Here we develop systematic methods for obtaining more precise inferences using parametric models for the treatment effects, while avoiding the specification of models for the potential outcomes. We present results for semiparametric effiency bounds, suggest efficient estimators, and show in simulations that these methods can be effective in realistic settings. 

In particular we recommend the semiparametrically efficient estimator under the constant additive treatment effect model. Although one may not think the constant additive treatment effect assumption holds exactly, the fact that the estimator can be interpreted as estimating a weighted average of the quantile treatment effects make this an attractive choice.

In this discussion we do not incorporate covariates or pretreatment variables. One could combine the ideas exposited in the current paper with models for the control potential outcome. One may also wish to incorporate covariates in the model for the quantile treatment effects and thus allow for heterogenous treatment effects  {\it e.g., } \citet{chen2022robust, wager2018estimation}. Another interesting avenue is to consider alternative estimators for the current model by first estimating the unrestricted quantile functions, followed by minimum distance methods, as in \citet{alvarez2022inference, alvarezsemiparametric}.

\newpage

\emph{Conflict of interest:} We have no conflicts of interest to disclose.

\section*{Data availability}

The source for the house price data is \citet{linden2019replication}. The relevant columns of the data are also included in the supplementary materials of this paper.
The medical data are proprietary and confidential. Researchers at some institutions, in particular medical schools, may have access to the IBM MarketScan Research Database from which the data is drawn following \citet[Figure 2]{koenecke2021alpha}.

\section*{Funding}

Generous support from the Office of Naval Research through ONR grants N00014-17-1-2131 and N00014-19-1-2468 is gratefully acknowledged. Data for the health application were accessed using the Stanford Center for Population Health Sciences Data Core.

\newpage

\bibliographystyle{chicago}

\begin{thebibliography}{}

\bibitem[\protect\citeauthoryear{Alvarez, Chiann, and Morettin}{Alvarez
  et~al.}{2022}]{alvarez2022inference}
Alvarez, L., C.~Chiann, and P.~Morettin (2022).
\newblock Inference in parametric models with many l-moments.
\newblock {\em arXiv preprint arXiv:2210.04146\/}.

\bibitem[\protect\citeauthoryear{Alvarez and Biderman}{Alvarez and
  Biderman}{2022}]{alvarezsemiparametric}
Alvarez, L.~A. and C.~Biderman (2022).
\newblock Semiparametric analysis of randomised experiments using l-moments.
\newblock {\em Working Paper\/}.

\bibitem[\protect\citeauthoryear{Batir}{Batir}{2017}]{Batir2017}
Batir, N. (2017).
\newblock Bounds for the gamma function.
\newblock {\em Results in Mathematics\/}~{\em 72}, 865--874.

\bibitem[\protect\citeauthoryear{Bickel}{Bickel}{1965}]{bickel1965some}
Bickel, P.~J. (1965).
\newblock On some robust estimates of location.
\newblock {\em The Annals of Mathematical Statistics\/}~{\em 36\/}(3),
  847--858.

\bibitem[\protect\citeauthoryear{Bickel}{Bickel}{1967}]{bickel1967some}
Bickel, P.~J. (1967).
\newblock Some contributions to the theory of order statistics.
\newblock In {\em Proceedings of the Fifth Berkeley Symposium on Mathematical
  Statistics and Probability, Volume 1: Statistics}, pp.\  575--591. University
  of California Press.

\bibitem[\protect\citeauthoryear{Bickel}{Bickel}{1982}]{bickel1982adaptive}
Bickel, P.~J. (1982).
\newblock On adaptive estimation.
\newblock {\em The Annals of Statistics\/}~{\em 10\/}(3), 647--671.

\bibitem[\protect\citeauthoryear{Bickel and Doksum}{Bickel and
  Doksum}{2015}]{bickel2015mathematical}
Bickel, P.~J. and K.~A. Doksum (2015).
\newblock {\em Mathematical statistics: basic ideas and selected topics},
  Volume~2.
\newblock CRC Press.

\bibitem[\protect\citeauthoryear{Bickel, G{\"o}tze, and van Zwet}{Bickel
  et~al.}{2012}]{bickel2012resampling}
Bickel, P.~J., F.~G{\"o}tze, and W.~R. van Zwet (2012).
\newblock Resampling fewer than n observations: gains, losses, and remedies for
  losses.
\newblock In {\em Selected works of Willem van Zwet}, pp.\  267--297. Springer.

\bibitem[\protect\citeauthoryear{Bickel, Klaassen, Ritov, and Wellner}{Bickel
  et~al.}{1993}]{bickel1993efficient}
Bickel, P.~J., C.~A. Klaassen, Y.~Ritov, and J.~A. Wellner (1993).
\newblock {\em Efficient and adaptive estimation for semiparametric models},
  Volume~4.
\newblock Johns Hopkins University Press Baltimore.

\bibitem[\protect\citeauthoryear{Bickel and Lehmann}{Bickel and
  Lehmann}{1975a}]{bickel1975descriptiveI}
Bickel, P.~J. and E.~L. Lehmann (1975a).
\newblock Descriptive statistics for nonparametric models i. introduction.
\newblock {\em The Annals of Statistics\/}~{\em 3\/}(5), 1038--1044.

\bibitem[\protect\citeauthoryear{Bickel and Lehmann}{Bickel and
  Lehmann}{1975b}]{bickel1975descriptiveII}
Bickel, P.~J. and E.~L. Lehmann (1975b).
\newblock Descriptive statistics for nonparametric models ii. location.
\newblock {\em The Annals of Statistics\/}~{\em 3\/}(5), 1045--1069.

\bibitem[\protect\citeauthoryear{Bickel and Lehmann}{Bickel and
  Lehmann}{1976}]{bickel1976descriptive}
Bickel, P.~J. and E.~L. Lehmann (1976).
\newblock Descriptive statistics for nonparametric models. iii. dispersion.
\newblock {\em Annals of Statistics\/}~{\em 4\/}(6), 1139--1158.

\bibitem[\protect\citeauthoryear{Bickel and Lehmann}{Bickel and
  Lehmann}{2012}]{bickel2012descriptiveIV}
Bickel, P.~J. and E.~L. Lehmann (2012).
\newblock Descriptive statistics for nonparametric models iv. spread.
\newblock In {\em Selected Works of EL Lehmann}, pp.\  519--526. Springer.

\bibitem[\protect\citeauthoryear{Chen and Au}{Chen and
  Au}{2022}]{chen2022robust}
Chen, A. and T.~C. Au (2022).
\newblock Robust causal inference for incremental return on ad spend with
  randomized paired geo experiments.
\newblock {\em The Annals of Applied Statistics\/}~{\em 16\/}(1), 1--20.

\bibitem[\protect\citeauthoryear{Chernoff, Gastwirth, and Johns}{Chernoff
  et~al.}{1967}]{chernoff1967asymptotic}
Chernoff, H., J.~L. Gastwirth, and M.~V. Johns (1967).
\newblock Asymptotic distribution of linear combinations of functions of order
  statistics with applications to estimation.
\newblock {\em The Annals of Mathematical Statistics\/}~{\em 38\/}(1), 52--72.

\bibitem[\protect\citeauthoryear{Crump, Hotz, Imbens, and Mitnik}{Crump
  et~al.}{2009}]{crump2009dealing}
Crump, R.~K., V.~J. Hotz, G.~W. Imbens, and O.~A. Mitnik (2009).
\newblock Dealing with limited overlap in estimation of average treatment
  effects.
\newblock {\em Biometrika\/}~{\em 96\/}(1), 187--199.

\bibitem[\protect\citeauthoryear{Csorgo and Revesz}{Csorgo and
  Revesz}{1978}]{Csorgo-Revesz-1978}
Csorgo, M. and P.~Revesz (1978).
\newblock Strong approximations of the quantile process.
\newblock {\em The Annals of Statistics\/}~{\em 6\/}(4), 882--894.

\bibitem[\protect\citeauthoryear{Cuzick}{Cuzick}{1992a}]{cuzick1992efficient}
Cuzick, J. (1992a).
\newblock Efficient estimates in semiparametric additive regression models with
  unknown error distribution.
\newblock {\em The Annals of Statistics\/}~{\em 20\/}(2), 1129--1136.

\bibitem[\protect\citeauthoryear{Cuzick}{Cuzick}{1992b}]{cuzick1992semiparametric}
Cuzick, J. (1992b).
\newblock Semiparametric additive regression.
\newblock {\em Journal of the Royal Statistical Society. Series B
  (Methodological)\/}~{\em 45\/}(3), 831--843.

\bibitem[\protect\citeauthoryear{Doksum}{Doksum}{1974}]{doksum1974empirical}
Doksum, K. (1974).
\newblock Empirical probability plots and statistical inference for nonlinear
  models in the two-sample case.
\newblock {\em The annals of statistics\/}~{\em 2\/}(2), 267--277.

\bibitem[\protect\citeauthoryear{Doksum and Sievers}{Doksum and
  Sievers}{1976}]{doksum1976plotting}
Doksum, K.~A. and G.~L. Sievers (1976).
\newblock Plotting with confidence: Graphical comparisons of two populations.
\newblock {\em Biometrika\/}~{\em 63\/}(3), 421--434.

\bibitem[\protect\citeauthoryear{Firpo}{Firpo}{2007}]{firpo2007efficient}
Firpo, S. (2007).
\newblock Efficient semiparametric estimation of quantile treatment effects.
\newblock {\em Econometrica\/}~{\em 75\/}(1), 259--276.

\bibitem[\protect\citeauthoryear{Firpo, Fortin, and Lemieux}{Firpo
  et~al.}{2009}]{firpo2009unconditional}
Firpo, S., N.~M. Fortin, and T.~Lemieux (2009).
\newblock Unconditional quantile regressions.
\newblock {\em Econometrica\/}~{\em 77\/}(3), 953--973.

\bibitem[\protect\citeauthoryear{Fisher}{Fisher}{1937}]{fisher1937design}
Fisher, R.~A. (1937).
\newblock {\em The design of experiments}.
\newblock Oliver And Boyd; Edinburgh; London.

\bibitem[\protect\citeauthoryear{Govindarajulu, Le~Cam, and
  Raghavachari}{Govindarajulu et~al.}{1967}]{govindarajulu1967generalizations}
Govindarajulu, Z., L.~Le~Cam, and M.~Raghavachari (1967).
\newblock Generalizations of theorems of {Chernoff} and {Savage} on the
  asymptotic normality of test statistics.
\newblock In {\em Proceedings of the Fifth Berkeley Symposium on Mathematical
  Statistics and Probability, Volume 1: Statistics}, pp.\  609--638. University
  of California Press.

\bibitem[\protect\citeauthoryear{Gupta, Kohavi, Tang, Xu, Andersen, Bakshy,
  Cardin, Chandran, Chen, Coey, et~al.}{Gupta et~al.}{2019}]{gupta2019top}
Gupta, S., R.~Kohavi, D.~Tang, Y.~Xu, R.~Andersen, E.~Bakshy, N.~Cardin,
  S.~Chandran, N.~Chen, D.~Coey, et~al. (2019).
\newblock Top challenges from the first practical online controlled experiments
  summit.
\newblock {\em ACM SIGKDD Explorations Newsletter\/}~{\em 21\/}(1), 20--35.

\bibitem[\protect\citeauthoryear{Hampel, Ronchetti, Rousseeuw, and
  Stahel}{Hampel et~al.}{2011}]{hampel2011robust}
Hampel, F.~R., E.~M. Ronchetti, P.~J. Rousseeuw, and W.~A. Stahel (2011).
\newblock {\em Robust statistics: the approach based on influence functions},
  Volume 196.
\newblock John Wiley \& Sons.

\bibitem[\protect\citeauthoryear{Hodges~Jr and Lehmann}{Hodges~Jr and
  Lehmann}{1963}]{hodges1963estimates}
Hodges~Jr, J.~L. and E.~L. Lehmann (1963).
\newblock Estimates of location based on rank tests.
\newblock {\em The Annals of Mathematical Statistics\/}~{\em 34\/}(2),
  598--611.

\bibitem[\protect\citeauthoryear{Huber}{Huber}{1964}]{huber1964}
Huber, P.~J. (1964).
\newblock Robust estimation of a location parameter.
\newblock {\em The Annals of Mathematical Statistics\/}~{\em 35\/}(1), 73--101.

\bibitem[\protect\citeauthoryear{Huber}{Huber}{2011}]{huber2011robust}
Huber, P.~J. (2011).
\newblock {\em Robust statistics}.
\newblock Springer.

\bibitem[\protect\citeauthoryear{Imbens and Rubin}{Imbens and
  Rubin}{2015}]{imbens2015causal}
Imbens, G.~W. and D.~B. Rubin (2015).
\newblock {\em Causal Inference in Statistics, Social, and Biomedical
  Sciences}.
\newblock Cambridge University Press.

\bibitem[\protect\citeauthoryear{Jaeckel}{Jaeckel}{1971a}]{jaeckel1971robust}
Jaeckel, L.~A. (1971a).
\newblock Robust estimates of location: Symmetry and asymmetric contamination.
\newblock {\em The Annals of Mathematical Statistics\/}~{\em 42\/}(3),
  1020--1034.

\bibitem[\protect\citeauthoryear{Jaeckel}{Jaeckel}{1971b}]{jaeckel1971some}
Jaeckel, L.~A. (1971b).
\newblock Some flexible estimates of location.
\newblock {\em The Annals of Mathematical Statistics\/}~{\em 45\/}(2),
  1540--1552.

\bibitem[\protect\citeauthoryear{Klaassen}{Klaassen}{1987}]{klaassen1987consistent}
Klaassen, C.~A. (1987).
\newblock Consistent estimation of the influence function of locally
  asymptotically linear estimators.
\newblock {\em The Annals of Statistics\/}~{\em 15\/}(4), 1548--1562.

\bibitem[\protect\citeauthoryear{Koenecke, Powell, Xiong, Shen, Fischer, Huq,
  Khalafallah, Trevisan, Sparen, Carrero, Nishimura, Caffo, Stuart, Bai,
  Staedtke, Thomas, Papadopoulos, Kinzler, Vogelstein, Zhou, Bettegowda, Konig,
  Mensh, Vogelstein, and Athey}{Koenecke et~al.}{2021}]{koenecke2021alpha}
Koenecke, A., M.~Powell, R.~Xiong, Z.~Shen, N.~Fischer, S.~Huq, A.~M.
  Khalafallah, M.~Trevisan, P.~Sparen, J.~J. Carrero, A.~Nishimura, B.~Caffo,
  E.~A. Stuart, R.~Bai, V.~Staedtke, D.~L. Thomas, N.~Papadopoulos, K.~W.
  Kinzler, B.~Vogelstein, S.~Zhou, C.~Bettegowda, M.~F. Konig, B.~D. Mensh,
  J.~T. Vogelstein, and S.~Athey (2021).
\newblock Alpha-1 adrenergic receptor antagonists to prevent hyperinflammation
  and death from lower respiratory tract infection.
\newblock {\em eLife\/}~{\em 10}, e61700.

\bibitem[\protect\citeauthoryear{Kohavi, Tang, and Xu}{Kohavi
  et~al.}{2020}]{kohavi2020trustworthy}
Kohavi, R., D.~Tang, and Y.~Xu (2020).
\newblock {\em Trustworthy Online Controlled Experiments: A Practical Guide to
  A/B Testing}.
\newblock Cambridge University Press.

\bibitem[\protect\citeauthoryear{Le~Cam and Yang}{Le~Cam and
  Yang}{1988}]{lecam-yang-1988}
Le~Cam, L. and G.~L. Yang (1988).
\newblock On the preservation of local asyptotic normality under information
  loss.
\newblock {\em The Annals of Statistics\/}~{\em 16\/}(2), 483--520.

\bibitem[\protect\citeauthoryear{Lehmann and D'Abrera}{Lehmann and
  D'Abrera}{1975}]{lehmann1975nonparametrics}
Lehmann, E.~L. and H.~J. D'Abrera (1975).
\newblock {\em Nonparametrics: statistical methods based on ranks.}
\newblock Holden-day.

\bibitem[\protect\citeauthoryear{Lewis and Rao}{Lewis and
  Rao}{2015}]{lewis2015unfavorable}
Lewis, R.~A. and J.~M. Rao (2015).
\newblock The unfavorable economics of measuring the returns to advertising.
\newblock {\em The Quarterly Journal of Economics\/}~{\em 130\/}(4),
  1941--1973.

\bibitem[\protect\citeauthoryear{Li, Morgan, and Zaslavsky}{Li
  et~al.}{2018}]{li2018balancing}
Li, F., K.~L. Morgan, and A.~M. Zaslavsky (2018).
\newblock Balancing covariates via propensity score weighting.
\newblock {\em Journal of the American Statistical Association\/}~{\em
  113\/}(521), 390--400.

\bibitem[\protect\citeauthoryear{Linden and Rockoff}{Linden and
  Rockoff}{2008}]{linden2008estimates}
Linden, L. and J.~E. Rockoff (2008).
\newblock Estimates of the impact of crime risk on property values from megan's
  laws.
\newblock {\em American Economic Review\/}~{\em 98\/}(3), 1103--27.

\bibitem[\protect\citeauthoryear{Linden and Rockoff}{Linden and
  Rockoff}{2019}]{linden2019replication}
Linden, L. and J.~E. Rockoff (2019).
\newblock Replication data for: Estimates of the impact of crime risk on
  property values from megan's laws.
\newblock {\em American Economic Association [publisher], Inter-university
  Consortium for Political and Social Research [distributor]\/}.
\newblock \url{https://doi.org/10.3886/E113243V1}.

\bibitem[\protect\citeauthoryear{Neyman}{Neyman}{1990}]{neyman1923}
Neyman, J. (1923/1990).
\newblock On the application of probability theory to agricultural experiments.
  essay on principles. section 9.
\newblock {\em Statistical Science\/}~{\em 5\/}(4), 465--472.

\bibitem[\protect\citeauthoryear{Pinkse and Schurter}{Pinkse and
  Schurter}{2021}]{pinkse2021estimates}
Pinkse, J. and K.~Schurter (2021).
\newblock Estimates of derivatives of (log) densities and related objects.
\newblock {\em Econometric Theory\/}, 1--36.

\bibitem[\protect\citeauthoryear{Robins}{Robins}{1986}]{robins1986new}
Robins, J. (1986).
\newblock A new approach to causal inference in mortality studies with a
  sustained exposure period application to control of the healthy worker
  survivor effect.
\newblock {\em Mathematical modelling\/}~{\em 7\/}(9-12), 1393--1512.

\bibitem[\protect\citeauthoryear{Rosenbaum}{Rosenbaum}{1993}]{rosenbaum1993hodges}
Rosenbaum, P.~R. (1993).
\newblock Hodges-lehmann point estimates of treatment effect in observational
  studies.
\newblock {\em Journal of the American Statistical Association\/}~{\em
  88\/}(424), 1250--1253.

\bibitem[\protect\citeauthoryear{Rubin}{Rubin}{1974}]{rubin1974estimating}
Rubin, D.~B. (1974).
\newblock Estimating causal effects of treatments in randomized and
  nonrandomized studies.
\newblock {\em Journal of educational Psychology\/}~{\em 66\/}(5), 688.

\bibitem[\protect\citeauthoryear{Stigler}{Stigler}{1974}]{stigler1974linear}
Stigler, S.~M. (1974).
\newblock Linear functions of order statistics with smooth weight functions.
\newblock {\em The Annals of Statistics\/}~{\em 2\/}(4), 676--693.

\bibitem[\protect\citeauthoryear{Stone}{Stone}{1975}]{stone1975adaptive}
Stone, C.~J. (1975).
\newblock Adaptive maximum likelihood estimators of a location parameter.
\newblock {\em The Annals of Statistics\/}~{\em 3\/}(2), 267--284.

\bibitem[\protect\citeauthoryear{Taddy, Lopes, and Gardner}{Taddy
  et~al.}{2016}]{taddy2016scalable}
Taddy, M., H.~F. Lopes, and M.~Gardner (2016).
\newblock Scalable semiparametric inference for the means of heavy-tailed
  distributions.
\newblock {\em arXiv preprint arXiv:1602.08066\/}.

\bibitem[\protect\citeauthoryear{Tripuraneni, Madeka, Foster, Perrault-Joncas,
  and Jordan}{Tripuraneni et~al.}{2021}]{tripuraneni2021meta}
Tripuraneni, N., D.~Madeka, D.~Foster, D.~Perrault-Joncas, and M.~I. Jordan
  (2021).
\newblock Meta-analysis of randomized experiments with applications to
  heavy-tailed response data.
\newblock {\em arXiv e-prints\/}, arXiv--2112.

\bibitem[\protect\citeauthoryear{Vansteelandt and Dukes}{Vansteelandt and
  Dukes}{2022}]{vansteelandt2022assumption}
Vansteelandt, S. and O.~Dukes (2022).
\newblock Assumption-lean inference for generalised linear model parameters.
\newblock {\em Journal of the Royal Statistical Society Series B: Statistical
  Methodology\/}~{\em 84\/}(3), 657--685.

\bibitem[\protect\citeauthoryear{Vansteelandt and Joffe}{Vansteelandt and
  Joffe}{2014}]{vansteelandt2014structural}
Vansteelandt, S. and M.~Joffe (2014).
\newblock Structural nested models and g-estimation: the partially realized
  promise.
\newblock {\em Statistical Science\/}~{\em 29\/}(4), 707--731.

\bibitem[\protect\citeauthoryear{Wager and Athey}{Wager and
  Athey}{2018}]{wager2018estimation}
Wager, S. and S.~Athey (2018).
\newblock Estimation and inference of heterogeneous treatment effects using
  random forests.
\newblock {\em Journal of the American Statistical Association\/}~{\em
  113\/}(523), 1228--1242.

\bibitem[\protect\citeauthoryear{Wu and Hamada}{Wu and
  Hamada}{2011}]{wu2011experiments}
Wu, C.~J. and M.~S. Hamada (2011).
\newblock {\em Experiments: planning, analysis, and optimization}, Volume 552.
\newblock John Wiley \& Sons.

\bibitem[\protect\citeauthoryear{Yu and Yao}{Yu and Yao}{2017}]{yu2017robust}
Yu, C. and W.~Yao (2017).
\newblock Robust linear regression: A review and comparison.
\newblock {\em Communications in Statistics-Simulation and Computation\/}~{\em
  46\/}(8), 6261--6282.

\end{thebibliography}

\newpage

\section*{List of Figure Legends}

\paragraph{Figure 1.} Example members of the generalized Huber family of distributions.

\paragraph{Figure 2.} The efficient weight function for the Cauchy distribution. The weights, normalized to have mean 1, are plotted against the quantile (left) and against the value of the observations (right). For the figure on the right, we only plot the range from \(-10\) to \(10\), which corresponds to approximately the \(0.03\) to \(0.97\) quantile. At this point, the weight is approximately \(-0.04\), and weights for more extreme quantiles are closer to 0.

\paragraph{Figure 3.} Histogram of house prices, in levels and logs, as well as estimated optimal weights (minus the second derivative of the log density) based on all 170,239 observations. The weights are normalized to be mean 1 (black horizontal line). Some weights are below 0 (blue horizontal line). Vertical lines indicate the \(0.0001\), \(0.001\), \(0.01\), \(0.1\), \(0.9\), \(0.99\), \(0.999\), \(0.9999\) quantiles.

\paragraph{Figure 4.} Root mean squared error and coverage of 95\% confidence intervals relative to the average treatment effect (fixed at 0.1 standard deviations of the outcome in the absence of treatment) in simulations with heterogeneous treatment effects in the house price data. The horizontal axis varies the amount of heterogeneity. When \(q=0\), there is no heterogeneity such that the constant additive treatment effect model is correctly specified.

\paragraph{Figure 5.} Histogram of medical expenditures per admission, in levels and logs, as well as estimated optimal weights (minus the second derivative of the log density), based on the 98,155 observations in the control group. For the figure in levels, vertical lines indicate the \(0.001\), \(0.01\), \(0.1\), and \(0.9\) quantiles; the figure is limited to below \$200,000, such that the \(0.99\) and higher quantiles do not appear. For the figure in logs, vertical lines indicate the \(0.001\), \(0.01\), \(0.1\), \(0.99\), \(0.999\), \(0.9999\) quantiles. The weights are normalized to be mean 1 (black horizontal line). Some weights are below 0 (blue horizontal line).

\newpage

\appendix

\setcounter{page}{1}\setcounter{section}{0}

\centerline{\sc \Large Supplementary Materials for:}

\vskip0.5cm

\centerline{\sc \Large Semiparametric Estimation of Treatment Effects }
\vskip0.5cm

\centerline{\sc \Large in Randomized Experiments}

\vskip1cm

\centerline{\sc \Large Athey, Bickel, Chen, Imbens \& Pollmann}

\vskip0.5cm 

\centerline{\sc Appendix}

\renewcommand{\thesection}{\Alph{section}}

We first provide a general theorem on linear combination of order statistics, which simplifies the results of \citet{chernoff1967asymptotic}, see also \citet{bickel1967some}, \citet{govindarajulu1967generalizations} and \citet{stigler1974linear}, and then provide the technical details for proving the main results.

\section{A general theorem on linear combination of order statistics}

Let $F$ be a distribution function, twice differentiable, and $f=F'$.
Let $W:[0,1]\rightarrow\mathcal{R}$ be a function s.t. $W(0)=0$
and $W(1)=1$. The C-R condition was first proposed by \citet{Csorgo-Revesz-1978}.
\begin{description}
\item [{(C-R)}] There exists $C_{0}>0$ and $\epsilon\in(0,\frac{1}{2})$
such that
\end{description}
\begin{enumerate}
\item $F(x)(1-F(x))\frac{|f'(x)|}{f^{2}(x)}\leq C_{0}$ if $F(x)\in(0,\epsilon)$
or $F(x)\in(1-\epsilon,1)$, and $C_{0}$ can be replaced by another
constant if $F(x)\in[\epsilon,1-\epsilon]$.
\item $f'(x)\geq0$ for $x<F^{-1}(\epsilon)$ and $f'(x)\leq0$ for $x>F^{-1}(1-\epsilon)$.
\end{enumerate}
\begin{description}
\item [{(W)}] There exists $\epsilon_{n}=o(1)$ such that
\begin{align}
\int_{0}^{\epsilon_{n}}(1+t^{1-C_{0}})|dW(t)| & =o(n^{-1/2})\text{ and }\int_{1-\epsilon_{n}}^{1}(1+(1-t)^{1-C_{0}})|dW(t)|=o(n^{-1/2})\label{eq:F-W-tail}\\
\int_{\epsilon_{n}}^{1-\epsilon_{n}}\frac{1}{f(F^{-1}(t))}dW(t) & =O(\frac{n^{1/4}}{\log n}).\label{eq:F-W-body}
\end{align}
\end{description}
\begin{theorem}\label{thm:quantile-to-mean}
Let $X_{1},\cdots,X_{n}$ be i.i.d. from $F$, and let $\hat{F}$
and $\hat{F}^{-1}$ be the empirical distribution function and empirical
quantile function respectively. Assume that the C-R condition and the above W
condition hold and let
\begin{align}
\psi(x) & =-\int_{0}^{1}\frac{1(F(x)\leq t)-t}{f(F^{-1}(t))}dW(t).\label{eq:W-to-psi}
\end{align}
Then
\begin{align*}
\sqrt{n}\int_{0}^{1}(\hat{F}^{-1}(t)-F^{-1}(t))dW(t) & =\frac{1}{\sqrt{n}}\sum_{i=1}^{n}\psi(X_{i})+o_{P}(1)
\end{align*}
which converges in distribution to $\mathcal{N}(0, \sigma_f^2)$ if further $E\psi^2(X_1)<\infty$, where
\begin{align}
    \sigma_f^2 = \int_0^1\int_0^1 \frac{\min(s,t) - st}{f(F^{-1}(s))f(F^{-1}(t))}dW(s)dW(t).\label{eq:sigma_f^2}
\end{align}
When $\psi'$ exists, we may also represent $W$ as below if it is well defined:
\begin{align}
W(t) & =\frac{\int_{0}^{t}\psi'(F^{-1}(u))du}{\int_{0}^{1}\psi'(F^{-1}(u))du}.\label{eq:psi-to-W}
\end{align}
\end{theorem}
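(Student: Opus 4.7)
The plan is to linearize the empirical quantile process and then use Fubini to collapse the $dW$-integral into an i.i.d.\ sum. Specifically, write
\[
\hat{F}^{-1}(t) - F^{-1}(t) = -\frac{\hat{F}(F^{-1}(t)) - t}{f(F^{-1}(t))} + R_n(t),
\]
so that the linear term, integrated against $dW$, rearranges by Fubini (using $\hat{F}(F^{-1}(t)) = n^{-1}\sum_i \mathbf{1}(F(X_i)\le t)$ and exchanging the finite sum with the signed measure $dW(t)/f(F^{-1}(t))$) to exactly $\frac{1}{n}\sum_{i=1}^n \psi(X_i)$, where $\psi$ is given by \eqref{eq:W-to-psi}. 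The task reduces to showing $\sqrt n\int_0^1 R_n(t)\, dW(t) = o_P(1)$, and then invoking the CLT for the i.i.d.\ average.

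To bound the remainder I would split $[0,1]$ into the middle interval $[\epsilon_n, 1-\epsilon_n]$ and the two tails. On the middle interval, a uniform Bahadur-Kiefer representation for the quantile process (valid under the smoothness of $F$ and the bounded-hazard part of (C-R)) yields $\sup_{\epsilon_n\le t\le 1-\epsilon_n}|R_n(t)| = O_P(n^{-3/4}\log n)$; combined with the growth bound \eqref{eq:F-W-body} on $\int_{\epsilon_n}^{1-\epsilon_n} f(F^{-1}(t))^{-1}\, dW(t)$, this contribution is $o_P(n^{-1/2})$ as required.

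The tail contributions are the main obstacle and use (C-R) and (W) in their full strength. On $[0,\epsilon_n]$, the monotonicity in (C-R) part (2) combined with the hazard bound in part (1) lets me show, with probability tending to one, that both $|\hat F^{-1}(t)-F^{-1}(t)|$ and $|\hat F(F^{-1}(t))-t|/f(F^{-1}(t))$ admit pointwise bounds involving factors of order $t^{1-C_0}$ times a stochastic term of order $n^{-1/2}$, using standard inequalities for the uniform empirical process near zero. The tail integrability condition \eqref{eq:F-W-tail} for $dW$ against such factors then makes both $\int_0^{\epsilon_n}(\hat F^{-1}-F^{-1})\,dW$ and its linearization $o_P(n^{-1/2})$; the right tail is handled symmetrically.

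With the remainder controlled, the CLT applied to $n^{-1/2}\sum_i \psi(X_i)$ gives convergence to $\mathcal N(0,\sigma_f^2)$ under $E\psi(X_1)^2<\infty$: the mean is zero by construction (the integrand in \eqref{eq:W-to-psi} is centered), and expanding $E\psi(X_1)^2$ as a double integral over $[0,1]^2$ and using $F(X_1)\sim\mathrm{Uniform}(0,1)$ to evaluate $\mathrm{Cov}(\mathbf{1}(U\le s),\mathbf{1}(U\le t))=\min(s,t)-st$ yields the formula \eqref{eq:sigma_f^2}. Finally, when $\psi'$ exists, differentiating the equivalent representation $\psi(x)=-\int_x^\infty f(y)^{-1}\,dW(F(y))+\mathrm{const.}$ gives $\psi'(x)=w(F(x))$ with $w=dW/du$; hence $W(t)=\int_0^t\psi'(F^{-1}(u))\,du$, and normalizing so that $W(1)=1$ produces the ratio formula \eqref{eq:psi-to-W}.
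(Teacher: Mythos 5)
Your overall skeleton matches the paper's proof: linearize via the Bahadur--Kiefer/Cs\"org\H{o}--R\'ev\'esz representation on $[\epsilon_n,1-\epsilon_n]$, use \eqref{eq:F-W-body} to absorb the $1/f(F^{-1}(t))$ weight, collapse the linear term by Fubini, and handle the CLT and the variance formula exactly as you describe. The middle-interval and CLT portions are fine (modulo the small imprecision that the $O_P(n^{-3/4}\log n)$ rate holds for the \emph{weighted} remainder $f(F^{-1}(t))R_n(t)$, not for $\sup_t|R_n(t)|$ itself, which is why the factor $\int_{\epsilon_n}^{1-\epsilon_n}f(F^{-1}(t))^{-1}|dW(t)|$ is needed at all).

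The genuine gap is in the tail argument. You claim that on $(0,\epsilon_n)$ both $|\hat F^{-1}(t)-F^{-1}(t)|$ and $|\hat F(F^{-1}(t))-t|/f(F^{-1}(t))$ admit pointwise bounds of order $t^{1-C_0}$ \emph{times a stochastic factor of order $n^{-1/2}$}. This is false for heavy-tailed $F$: take the Cauchy case with $C_0=2$, so $F^{-1}(t)\sim -1/(\pi t)$. At $t=1/n$ one has $\hat F^{-1}(t)=X_{(1)}=F^{-1}(U_{(1)})$ with $nU_{(1)}=O_P(1)$, so $|\hat F^{-1}(1/n)-F^{-1}(1/n)|$ is of order $n$, whereas your claimed bound would give $t^{-1}\cdot n^{-1/2}=n^{1/2}$. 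The extreme order statistics are simply not within $O_P(n^{-1/2})$ of their population counterparts on the $t^{1-C_0}$ scale. Indeed, condition \eqref{eq:F-W-tail} is calibrated so that the \emph{deterministic} integral $\int_0^{\epsilon_n}(1+t^{1-C_0})|dW(t)|$ is already $o(n^{-1/2})$; no additional $n^{-1/2}$ stochastic factor is available or needed. The paper's actual argument replaces your pointwise bounds by bounds in expectation: for the linear term it uses $E|\hat F(F^{-1}(t))-t|\le 2t$ together with $1/f(F^{-1}(t))\le K_0t^{-C_0}$ (Lemma \ref{lem:C-R-bounds}); for the term $\int_0^{\epsilon_n}\hat F^{-1}(t)\,dW(t)$ it writes $\hat F^{-1}(t)=F^{-1}(U_{(\lceil nt\rceil)})$, bounds $|F^{-1}(u)|\le K_1u^{1-C_0}+K_2$, and then proves $E\bigl(U_{(\lceil nt\rceil)}^{1-C_0}\bigr)\le C_3\,t^{1-C_0}$ via explicit negative moments of the $\mathrm{Beta}(i,n+1-i)$ distribution and gamma-function bounds. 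That moment computation is the essential technical content of the tail step and is entirely missing from your sketch; without it (or an equivalent device), the tail contribution of $\int_0^{\epsilon_n}\hat F^{-1}(t)\,dW(t)$ is not controlled.
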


\begin{remark}
Justification of the conditions for a few common scenarios.
\end{remark}
\begin{enumerate}
\item Gaussian: $f(x)=\frac{1}{\sqrt{2\pi}}e^{-x^{2}/2}$, then $\epsilon=0.5$
and $C_{0}=1$ meet the C-R condition.
\item Cauchy: $f(x)=\frac{1}{\pi(1+x^{2})}$, then $\epsilon=0.5$ and $C_{0}=2$ meet the C-R condition.
\item Equal-weight: $W'(t)=1$, then \eqref{eq:F-W-tail} implies $\epsilon_{n}=o(n^{-1/2})$
for $C_{0}\leq1$ and $\epsilon_{n}=o(n^{-\frac{1}{2(2-C_{0})}})$
for $C_{0}\in(1,2)$. So the equal weight applies to Gaussian. This
does not apply to Cauchy, but note that its sample mean does not converge.
\item Median: $W$ puts all mass at $t=\frac{1}{2}$. The W condition holds
for any $f$.
\end{enumerate}

\section{A lemma on the C-R condition}

\begin{lemma}
\label{lem:C-R-bounds}If the C-R condition holds, then for any $t\in(0,\epsilon)$
\begin{align}
\frac{1}{f(F^{-1}(t))} & \leq K_{0}t^{-C_{0}}\label{eq:CR-f-bound}\\
|F^{-1}(t)| & \leq K_{1}t^{1-C_{0}}+K_{2}\label{eq:CR-F^-1-bound}
\end{align}
where $K_{0},K_{1}$ and $K_{2}$ are positive constants, only dependent
on $C_{0}$ and $\epsilon$. Similar results hold for $t\in(1-\epsilon,1)$,
omitted for conciseness.
\end{lemma}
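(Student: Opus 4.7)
My plan is to obtain both inequalities by studying the function $g(t) \equiv 1/f(F^{-1}(t))$, which is the derivative of the quantile function, and exploiting the fact that the C-R condition is essentially a differential inequality for $\log g$. Throughout I restrict to the left tail $t \in (0,\epsilon)$; the right tail argument will be symmetric by considering $1-F$ instead.

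First I would verify the identity
\begin{align*}
\frac{d}{dt}\log g(t) &= -\frac{f'(F^{-1}(t))}{f^{2}(F^{-1}(t))},
\end{align*}
which follows by the chain rule since $(F^{-1})'(t)=g(t)$. By part~1 of the C-R condition applied at $x=F^{-1}(t)$,
\begin{align*}
\left|\frac{d}{dt}\log g(t)\right| \leq \frac{C_{0}}{t(1-t)}.
\end{align*}
Moreover, part~2 of the C-R condition gives $f'(F^{-1}(t))\ge 0$ on $(0,\epsilon)$, so $g$ is nonincreasing there and we may drop the absolute value with a definite sign. Integrating the inequality from $t$ to $\epsilon$ and splitting $\frac{1}{s(1-s)}=\frac{1}{s}+\frac{1}{1-s}$ yields
\begin{align*}
\log g(t) - \log g(\epsilon) \;\leq\; C_{0}\log\frac{\epsilon}{t} + C_{0}\log\frac{1-t}{1-\epsilon} \;\leq\; C_{0}\log\frac{\epsilon}{t} + C_{0}\log\frac{1}{1-\epsilon}.
\end{align*}
Exponentiating gives the first bound with $K_{0} = g(\epsilon)\,\epsilon^{C_{0}}(1-\epsilon)^{-C_{0}}$, which depends only on $C_{0}$, $\epsilon$, and $f$ at the fixed point $F^{-1}(\epsilon)$.

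For the second bound, I use the fundamental theorem of calculus to write $F^{-1}(\epsilon) - F^{-1}(t) = \int_{t}^{\epsilon} g(s)\,ds$ and then plug in the bound just established:
\begin{align*}
|F^{-1}(t)| \;\leq\; |F^{-1}(\epsilon)| + K_{0}\int_{t}^{\epsilon} s^{-C_{0}}\,ds.
\end{align*}
If $C_{0}<1$ the integral is uniformly bounded by $\epsilon^{1-C_{0}}/(1-C_{0})$, giving the claim with $K_{1}=0$. If $C_{0}>1$ the antiderivative is $(C_{0}-1)^{-1}(t^{1-C_{0}}-\epsilon^{1-C_{0}})$, so I can set $K_{1}=K_{0}/(C_{0}-1)$ and $K_{2}=|F^{-1}(\epsilon)|+K_{0}\epsilon^{1-C_{0}}/(C_{0}-1)$.

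The only delicate point is the marginal case $C_{0}=1$, where the integral produces a $\log(\epsilon/t)$ that is not dominated by a constant. The clean workaround is to observe that the C-R condition with constant $C_{0}$ trivially implies it with any $C_{0}+\eta$ for $\eta>0$, so one replaces $C_{0}$ by $C_{0}+\eta$ throughout when needed; in applications this has no effect since the C-R condition is invoked together with the (W) tail condition which is monotone in $C_{0}$. The right-tail assertion is handled identically by applying the same argument to the random variable $-X$, whose CDF $1-F(-x)$ satisfies the mirrored C-R condition on $(0,\epsilon)$. The main obstacle in carrying this out is essentially bookkeeping of constants and of the $C_{0}=1$ borderline; the core differential inequality argument is straightforward.
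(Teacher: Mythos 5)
Your argument is essentially identical to the paper's: both integrate the identity $\tder{}{t}\log f(F^{-1}(t))=\frac{f'}{f^{2}}(F^{-1}(t))$ against the C-R bound $\frac{C_{0}}{t(1-t)}$, exponentiate to get \eqref{eq:CR-f-bound} with the same constant $K_{0}=\frac{(\epsilon/(1-\epsilon))^{C_{0}}}{f(F^{-1}(\epsilon))}$, and then integrate $1/f(F^{-1}(u))\leq K_{0}u^{-C_{0}}$ to get \eqref{eq:CR-F^-1-bound}. Your explicit treatment of the borderline case $C_{0}=1$ (where $\int_{t}^{\epsilon}s^{-C_{0}}ds$ is logarithmic rather than bounded by $t^{1-C_{0}}$) is a genuine refinement: the paper's one-line formula $\frac{K_{0}}{1-C_{0}}(\epsilon^{1-C_{0}}-t^{1-C_{0}})$ silently assumes $C_{0}\neq1$, and your remark that the C-R condition with $C_{0}$ implies it with $C_{0}+\eta$ is the right way to patch this.
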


\begin{proof}
By the C-R condition, $f'(x)\geq0$ for any $x<F^{-1}(\epsilon)$, and for any $t\in(0,\epsilon)$,
\begin{align*}
\int_{t}^{\epsilon}\frac{f'}{f^{2}}(F^{-1}(u))du & \leq\int_{t}^{\epsilon}\frac{C_{0}}{u(1-u)}du=C_{0}\log\frac{\epsilon}{1-\epsilon}-C_{0}\log\frac{t}{1-t}\leq C-C_{0}\log t
\end{align*}
where $C=C_{0}\log\frac{\epsilon}{1-\epsilon}$. Since
\begin{align*}
\frac{f'}{f^{2}}(F^{-1}(u)) & =\frac{\partial}{\partial u}\log f(F^{-1}(u)),
\end{align*}
then
\begin{align*}
\log\frac{f(F^{-1}(\epsilon))}{f(F^{-1}(t))} & \leq C-C_{0}\log t,
\end{align*}
which implies \eqref{eq:CR-f-bound} with $K_{0}=\frac{e^{C}}{f(F^{-1}(\epsilon))}$.
Next, \eqref{eq:CR-F^-1-bound} follows from the inequality below:
\begin{align*}
F^{-1}(\epsilon)-F^{-1}(t)=\int_{t}^{\epsilon}\frac{1}{f(F^{-1}(u))}du & \leq\int_{t}^{\epsilon}K_{0}u^{-C_{0}}du=\frac{K_{0}}{1-C_{0}}(\epsilon^{1-C_{0}}-t^{1-C_{0}}).
\end{align*}
\end{proof}

\section{Proof of Theorem \ref{thm:quantile-to-mean}}
\begin{proof}
First of all,
\begin{align*}
\int_{\epsilon_{n}}^{1-\epsilon_{n}}\big|(\hat{F}^{-1}(t)-F^{-1}(t))+\frac{\hat{F}(F^{-1}(t))-t}{f(F^{-1}(t))}\big|\cdot|dW(t)| & \leq\Delta_{n}\cdot\int_{\epsilon_{n}}^{1-\epsilon_{n}}\frac{|dW(t)|}{f(F^{-1}(t))}
\end{align*}
where
\begin{align*}
\Delta_{n} & =\sup_{0<t<1}\big|f(F^{-1}(t))(\hat{F}^{-1}(t)-F^{-1}(t))+(\hat{F}(F^{-1}(t))-t)\big|.
\end{align*}
Under the C-R condition, by Theorem 4 of \cite{Csorgo-Revesz-1978},
we have
\begin{align*}
\Delta_{n} & =o_{P}(n^{-3/4}\log n).
\end{align*}
Under the condition \eqref{eq:F-W-body}, 
\begin{align*}
\int_{\epsilon_{n}}^{1-\epsilon_{n}}\frac{|dW(t)|}{f(F^{-1}(t))} & =O_{P}(\frac{n^{1/4}}{\log n}).
\end{align*}
Thus
\begin{align*}
\int_{\epsilon_{n}}^{1-\epsilon_{n}}(\hat{F}^{-1}(t)-F^{-1}(t))dW(t) & =-\int_{\epsilon_{n}}^{1-\epsilon_{n}}\frac{\hat{F}(F^{-1}(t))-t}{f(F^{-1}(t))}dW(t)+o_{P}(n^{-1/2}).
\end{align*}

Next, since
\begin{align*}
E\big|\int_{0}^{\epsilon_{n}}\frac{\hat{F}(F^{-1}(t))-t}{f(F^{-1}(t))}dW(t)\big| & \leq2\int_{0}^{\epsilon_{n}}\frac{t}{f(F^{-1}(t))}|dW(t)|\\
 & \leq2K_{0}\int_{0}^{\epsilon_{n}}t^{1-C_{0}}|dW(t)|\\
 & =o(n^{-1/2})
\end{align*}
where the last inequality is due to Lemma \ref{lem:C-R-bounds} and
the last equality is due to \eqref{eq:F-W-tail}, then
\begin{align*}
\int_{0}^{\epsilon_{n}}\frac{\hat{F}(F^{-1}(t))-t}{f(F^{-1}(t))}dW(t) & =o_{P}(n^{-1/2}).
\end{align*}
Furthermore, by Lemma \ref{lem:C-R-bounds},
\begin{align*}
\left|\int_{0}^{\epsilon_{n}}F^{-1}(t)dW(t)\right| & \leq K_{1}\int_{0}^{\epsilon_{n}}t^{1-C_{0}}|dW(t)|+K_{2}\int_{0}^{\epsilon_{n}}|dW(t)|=o(n^{-1/2})
\end{align*}
where the last equality is due to \eqref{eq:F-W-tail}. Similarly,
we have
\begin{align*}
\int_{1-\epsilon_{n}}^{1}\frac{\hat{F}(F^{-1}(t))-t}{f(F^{-1}(t))}dW(t) & =o_{P}(n^{-1/2})
\end{align*}
and
\begin{align*}
\int_{1-\epsilon_{n}}^{1}F^{-1}(t)dW(t) & =o(n^{-1/2}).
\end{align*}

Hence, to prove that
\begin{align*}
\int_{0}^{1}(\hat{F}^{-1}(t)-F^{-1}(t))dW(t) & =-\int_{0}^{1}\frac{\hat{F}(F^{-1}(t))-t}{f(F^{-1}(t))}dW(t)+o_{P}(n^{-1/2}),
\end{align*}
it remains to show that 
\begin{align}
\left(\int_{0}^{\epsilon_{n}}+\int_{1-\epsilon_{n}}^{1}\right)\hat{F}^{-1}(t)dW(t) & =o_{P}(n^{-1/2}).\label{eq:hat-F-1-tail}
\end{align}

Let $X_{(1)}\leq\cdots\leq X_{(n)}$ be the order statistics and $U_{(i)}\equiv F(X_{(i)})$,
then $\hat{F}^{-1}(t)=F^{-1}(U_{(\lceil nt\rceil)})$.

If $C_{0}>1$, then by Lemma \ref{lem:C-R-bounds},
\begin{align*}
\left|\int_{0}^{\epsilon_{n}}\hat{F}^{-1}(t)dW(t)\right|=\left|\int_{0}^{\epsilon_{n}}F^{-1}(U_{(\lceil nt\rceil)})dW(t) \right| & \leq(K_{1}+K_{2})\int_{0}^{\epsilon_{n}}(U_{(\lceil nt\rceil)})^{1-C_{0}}|dW(t)|.
\end{align*}
Since $U_{(i)}$ follows the $\text{Beta}(i,n+1-i)$ distribution, one can
verify that
\begin{align*}
E(U_{(i)}^{a}) & =\frac{\Gamma(n+1)\Gamma(i+a)}{\Gamma(n+1+a)\Gamma(i)}.
\end{align*}
Using the bounds for gamma functions \citep{Batir2017}, i.e. for
$x\geq1,$
\begin{align*}
\sqrt{2\pi}x^{x}e^{-x}(x^{2}+\frac{x}{3}+0.0496)^{1/4} & <\Gamma(x+1)<\sqrt{2\pi}x^{x}e^{-x}(x^{2}+\frac{x}{3}+0.056)^{1/4},
\end{align*}
one can get for $i\geq2-a$,
\begin{align*}
C_{1}\cdot(i/n)^{a}\leq\mathbb{E}U_{(i)}^{a} & \leq C_{2}\cdot(i/n)^{a}
\end{align*}
where $C_{1},C_{2}>0$ are some constants. For $1\leq i<2-a$, $EU_{(i)}^{a}=\frac{\Gamma(i+a)}{\Gamma(i)}n^{-a}(1+o(1))$.
These imply that $EU_{(\lceil nt\rceil)})^{1-C_{0}}\leq C_{3}\cdot t^{1-C_{0}}$
for $C_{3}$ large enough, and hence
\begin{align*}
E\left(\left|\int_{0}^{\epsilon_{n}}\hat{F}^{-1}(t)dW(t)\right|\right) & \leq(K_{1}+K_{2})\int_{0}^{\epsilon_{n}}EU_{(\lceil nt\rceil)})^{1-C_{0}}\big|dW(t)\big|\\
 & \leq C_{3}(K_{1}+K_{2})\int_{0}^{\epsilon_{n}}t^{1-C_{0}}|dW(t)|.
\end{align*}
If $0<C_{0}\leq1$, then 
\begin{align*}
\mathbb{E}\left(\left|\int_{0}^{\epsilon_{n}}\hat{F}^{-1}(t)dW(t)\right|\right) & \leq(K_{1}+K_{2})\int_{0}^{\epsilon_{n}}|dW(t)|.
\end{align*}

Therefore, due to \eqref{eq:F-W-tail}, $\mathbb{E}\big|\int_{0}^{\epsilon_{n}}\hat{F}^{-1}(t)dW(t)\big|=o(n^{-1/2})$
for both $C_{0}>1$ and $C_{0}\in(0,1]$, which implies $\int_{0}^{\epsilon_{n}}\hat{F}^{-1}(t)dW(t)=o_{P}(n^{-1/2})$.
Similarly, it can be shown that $\int_{1-\epsilon_{n}}^{1}\hat{F}^{-1}(t)dW(t)=o_P(n^{-1/2})$.

Thus \eqref{eq:hat-F-1-tail} holds, which concludes the proof.
\end{proof}

\section{Proof of Theorem \ref{thm:efficiency}}

Our proof is based on the sample splitting argument, see \citep{klaassen1987consistent}.

\subsection{General conditions for the density estimation}

Let $X_{1},\cdots,X_{n}$ be i.i.d. $F$ with $f\equiv F'$ and $\hat{F}$
be the empirical distribution. Let $\hat{f}$ be an estimate of $f$
from data independent of $\{X_{1},\cdots,X_{n}\}$ such that $I(\hat{f})=I(f)+o_P(1)$ and the following
conditions hold for any $\delta_{n}=O_{P}(n^{-1/2})$:
\begin{align}\int\left(\frac{\hat{f}'}{\hat{f}}-\frac{f'}{f}\right)^{2}(x)dF(x) & =o_{P}(1)\label{eq:fhat-score-convergence}\\\frac{1}{n}\sum_{i=1}^{n}\left(\frac{\hat{f}'}{\hat{f}}\right)'(X_{i}+\delta_{n}) & =\frac{1}{n}\sum_{i=1}^{n}\left(\frac{f'}{f}\right)'(X_{i})+o_{P}(1)=-I(f)+o_{P}(1).\label{eq:fhat-info-convergence}\end{align}

\begin{remark}
These conditions are satisfied by the kernel density estimator under
mild conditions, see \citep{stone1975adaptive} and \citep{bickel1982adaptive}.
\end{remark}

Let
\begin{align*}
w_{f}(u) & =\frac{1}{I(f)}\left(-\frac{f'}{f}\right)'(F^{-1}(u))
\end{align*}
where the dependency on $F$ is suppressed for notational simplicity.

Let $\hat{f}$ and $\hat{F}$ be the estimates of $f$ and $F$ respectively,
and let
\begin{align*}
w_{\hat{f}}(u) & =\left(\frac{\hat{f}'}{\hat{f}}\right)'(\hat{F}^{-1}(u)) \bigg/ \int_0^1\left(\frac{\hat{f}'}{\hat{f}}\right)'(\hat{F}^{-1}(u))du 
\end{align*}
be the estimate of $w_{f}$. Then $\int_0^1 w_{\hat{f}}(u)du = 1$.

We assume that the following conditions hold for $w_{\hat{f}}$:
\begin{align}
\int_{0}^{1}\frac{\big|w_{\hat{f}}(u)\big|}{f(F^{-1}(u))}du & =O_{P}(\frac{n^{1/4}}{\log n}),\label{eq:wn-rate}
\end{align}
\begin{align}
\int_{0}^{1}\int_{0}^{1}\frac{\min(u,v)-uv}{f(F^{-1}(u))f(F^{-1}(v))}\cdot E\left((w_{\hat{f}}(u)-w_{f}(u))^{2}\right)dudv & =o(1).\label{eq:wn-convergence-condition}
\end{align}

\begin{remark}
Conditions 
\eqref{eq:wn-rate} can be achieved by kernel density estimation with proper tail truncation in the spirit of Lemma 6.1 in \citet{bickel1982adaptive}.
\end{remark}
\begin{remark}
If $X\sim F$ and $var(X)<\infty$, then $\int_{0}^{1}\int_{0}^{1}\frac{\min(u,v)-uv}{f(F^{-1}(u))f(F^{-1}(v))}dudv=var(X)$.
See Bickel (1967). This suggests that the requirement \eqref{eq:wn-convergence-condition}
is not strong.
\end{remark}

\begin{lemma}
\label{lem:linear-order-statistics}Let $X_{1},\cdots,X_{n}$ be i.i.d.
$F$ with $f\equiv F'$ and $\hat{F}$ the empirical distribution.
Let $w_{\hat{f}}$ be an estimate of $w_{f}$ independent of $\{X_{1},\cdots,X_{n}\}$
and satisfy the above conditions 
\eqref{eq:wn-rate} and \eqref{eq:wn-convergence-condition}. Then under the C-R condition,
\begin{align*}
\int_{0}^{1}(\hat{F}^{-1}(u)-F^{-1}(u))\cdot w_{\hat{f}}(u)du & =-\frac{1}{nI(f)}\sum_{i=1}^{n}\frac{f'}{f}(X_{i}) + o_{P}(n^{-1/2}).
\end{align*}
\end{lemma}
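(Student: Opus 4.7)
The strategy is to write
\[
\int_0^1 (\hat F^{-1}(u)-F^{-1}(u))\,w_{\hat f}(u)\,du
= \int_0^1 (\hat F^{-1}(u)-F^{-1}(u))\,w_f(u)\,du
\;+\; R_n,
\]
where
\[
R_n \;\equiv\; \int_0^1 (\hat F^{-1}(u)-F^{-1}(u))\,(w_{\hat f}(u)-w_f(u))\,du,
\]
then (i) evaluate the main term via Theorem A.1 and identify the influence function as $-\frac{1}{I(f)}\frac{f'}{f}$, and (ii) show that $R_n=o_P(n^{-1/2})$ by conditioning on $w_{\hat f}$ and invoking the independence hypothesis together with condition \eqref{eq:wn-convergence-condition}.

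\textbf{Main term.} The deterministic weight function $W_f(u)\equiv \int_0^u w_f(s)\,ds$ has $W_f(0)=0$ and $W_f(1)=1$ (the normalization follows from $I(f)=\int (-f'/f)'(x)f(x)\,dx$). I would verify that $W_f$ satisfies condition (W) of Theorem A.1 under the C-R condition—in particular \eqref{eq:wn-rate} for the deterministic $w_f$ is exactly of the same form as \eqref{eq:F-W-body} in that theorem. Theorem A.1 then yields
\[
\int_0^1 (\hat F^{-1}(u)-F^{-1}(u))\,w_f(u)\,du
= \frac{1}{n}\sum_{i=1}^n \psi_f(X_i) + o_P(n^{-1/2}),
\]
with $\psi_f$ given by \eqref{eq:W-to-psi}. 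Substituting $t=F(y)$ (so $dt=f(y)\,dy$) gives
\[
\psi_f(x) = -\int_x^\infty w_f(F(y))\,dy + \int_{-\infty}^\infty F(y)\,w_f(F(y))\,dy.
\]
Since $w_f(F(y)) = \frac{1}{I(f)}(-f'/f)'(y)$, the first integral equals $-\frac{1}{I(f)}\frac{f'}{f}(x)$ (using $f'/f\to 0$ at $+\infty$), and the second integral vanishes by integration by parts (using $F(y)(f'/f)(y)\to 0$ at both ends and $\int f'=0$). Hence $\psi_f(x)=-\frac{1}{I(f)}\frac{f'}{f}(x)$, which is exactly the desired leading term.

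\textbf{Remainder term.} Because $w_{\hat f}$ is independent of $X_1,\dots,X_n$, condition on $w_{\hat f}$. Applied conditionally to the (random but frozen) weight function $w_{\hat f}-w_f$, Theorem A.1 gives
\[
R_n = \frac{1}{n}\sum_{i=1}^n \tilde\psi(X_i) + o_P(n^{-1/2}),
\qquad
\tilde\psi(x)=-\int_0^1 \frac{1(F(x)\le t)-t}{f(F^{-1}(t))}(w_{\hat f}(t)-w_f(t))\,dt,
\]
so that the conditional variance of $\sqrt n\,R_n$ is (up to $o(1)$)
\[
\int_0^1\!\!\int_0^1 \frac{\min(u,v)-uv}{f(F^{-1}(u))f(F^{-1}(v))}\,(w_{\hat f}(u)-w_f(u))(w_{\hat f}(v)-w_f(v))\,du\,dv.
\]
Taking unconditional expectation and bounding $|ab|\le\tfrac12(a^2+b^2)$ inside the integrand (so that by symmetry the double integral reduces to one involving only $E[(w_{\hat f}(u)-w_f(u))^2]$) shows that this expected variance is $o(1)$ by exactly condition \eqref{eq:wn-convergence-condition}. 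Markov's inequality then gives $\sqrt n\,R_n = o_P(1)$, i.e.\ $R_n=o_P(n^{-1/2})$. Combining with the main term completes the proof.

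\textbf{Main obstacles.} The chief technical difficulty is that Theorem A.1 is stated for a deterministic weight $W$, whereas $w_{\hat f}$ is random; the independence assumption is what lets us reduce to the deterministic case by conditioning, and it is essential that the conditional Bahadur-Kiefer bound $\Delta_n = o_P(n^{-3/4}\log n)$ from \citet{Csorgo-Revesz-1978} continues to apply uniformly in the realization of $w_{\hat f}$ (since $\Delta_n$ is a property of the empirical quantile process, not of the weights). A secondary point is the verification of the tail condition \eqref{eq:F-W-tail} for $w_{\hat f}-w_f$; this is where the normalization and truncation built into the density-estimator definition of $w_{\hat f}$ (cf.\ the remark following \eqref{eq:wn-convergence-condition}) must be used so that $w_{\hat f}-w_f$ is tame enough in the tails for the argument above to go through.
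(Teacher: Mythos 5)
Your two key ingredients --- the Cs\"org\H{o}--R\'ev\'esz linearization of the quantile process and the second-moment argument that exploits the independence of $w_{\hat f}$ from the sample via \eqref{eq:wn-convergence-condition} --- are exactly the ones the paper uses, and your variance computation for the remainder is identical to the paper's. The difference is the order of operations, and it matters. You split the weight first, $w_{\hat f}=w_f+(w_{\hat f}-w_f)$, and then invoke Theorem \ref{thm:quantile-to-mean} separately for the deterministic piece and (conditionally) for the random difference. That forces you to verify condition (W) --- in particular the tail condition \eqref{eq:F-W-tail} --- for $w_f$ and for $w_{\hat f}-w_f$, neither of which is implied by the lemma's stated hypotheses (the C-R condition, \eqref{eq:wn-rate} and \eqref{eq:wn-convergence-condition}); you flag this yourself as a ``secondary point'' but do not discharge it, and the hypotheses of the lemma give you nothing with which to do so. This is the one genuine gap in your write-up.

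The paper's proof avoids the issue by linearizing \emph{before} splitting: it bounds
\begin{align*}
\Bigl|\int_{0}^{1}\Bigl(\hat{F}^{-1}(u)-F^{-1}(u)+\tfrac{\hat{F}(F^{-1}(u))-u}{f(F^{-1}(u))}\Bigr)w_{\hat{f}}(u)\,du\Bigr|
\;\leq\; \Delta_{n}\cdot\int_{0}^{1}\frac{\bigl|w_{\hat{f}}(u)\bigr|}{f(F^{-1}(u))}\,du
\;=\;o_{P}(n^{-3/4}\log n)\cdot O_{P}\Bigl(\tfrac{n^{1/4}}{\log n}\Bigr),
\end{align*}
using the global sup bound $\Delta_n$ from Theorem 4 of \citet{Csorgo-Revesz-1978} together with \eqref{eq:wn-rate} applied to the full random weight --- no tail-splitting, no appeal to condition (W). Only after this does it decompose the weight, and it does so inside the already-linearized term $\frac{\hat F(F^{-1}(u))-u}{f(F^{-1}(u))}$: against $w_f$ that integral equals $-\frac{1}{nI(f)}\sum_i \frac{f'}{f}(X_{i})$ exactly (the same integration-by-parts computation by which you identify $\psi_f$ from \eqref{eq:W-to-psi}), and against $w_{\hat f}-w_f$ it is $o_P(n^{-1/2})$ by precisely the conditional second-moment bound you wrote down, since the covariance kernel of $\hat F(F^{-1}(\cdot))$ is $(\min(u,v)-uv)/n$. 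If you reorganize your argument in this order, every step is covered by the stated hypotheses and the gap disappears.
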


\begin{proof}
Note that
\begin{align*}
 & \left|\int_{0}^{1}\left(\hat{F}^{-1}(u)-F^{-1}(u)+\frac{\hat{F}(F^{-1}(u))-u}{f(F^{-1}(u))}\right)\cdot w_{\hat{f}}(u)du\right|\\
\leq & \sup_{0<u<1}\big|f(F^{-1}(u))\cdot(\hat{F}^{-1}(u)-F^{-1}(u))+(\hat{F}(F^{-1}(u))-u)\big|\cdot\int_{0}^{1}\frac{\big|w_{\hat{f}}(u)\big|}{f(F^{-1}(u))}du\\
= & o_{P}(n^{-3/4}\log n)\cdot\int_{0}^{1}\frac{\big|w_{\hat{f}}(u)\big|}{f(F^{-1}(u))}du
\end{align*}
where the last equality is due to Theorem 4 of \citep{Csorgo-Revesz-1978}.
Thus under the further condition \eqref{eq:wn-rate}, we have
\begin{align*}
\int_{0}^{1}\left(\hat{F}^{-1}(u)-F^{-1}(u)+\frac{\hat{F}(F^{-1}(u))-u}{f(F^{-1}(u))}\right)\cdot w_{\hat{f}}(u)du & =o_{P}(n^{-1/2}).
\end{align*}
Note that
\begin{align*}
\int_{0}^{1}\frac{\hat{F}(F^{-1}(u))-u}{f(F^{-1}(u))}w_{f}(u)du & =-\frac{1}{nI(f)}\sum_{i=1}^{n}\frac{f'}{f}(X_{i}).
\end{align*}
Hence it is sufficient to prove that
\begin{align}
\int_{0}^{1}\frac{\hat{F}(F^{-1}(u))-u}{f(F^{-1}(u))}\cdot(w_{\hat{f}}(u)-w_{f}(u))du & =o_{P}(n^{-1/2}).\label{eq:wn-Fhat-o(rootn)}
\end{align}

Recall that $w_{\hat{f}}$ is independent of $\hat{F}$, so
\begin{align*}
 & E\left(\int_{0}^{1}\frac{\hat{F}(F^{-1}(u))-u}{f(F^{-1}(u))}\cdot(w_{\hat{f}}(u)-w_{f}(u))du\right)^{2}\\
= & E\int_{0}^{1}\int_{0}^{1}\frac{\hat{F}(F^{-1}(u))-u}{f(F^{-1}(u))}\cdot\frac{\hat{F}(F^{-1}(v))-v}{f(F^{-1}(v))}\cdot(w_{\hat{f}}-w_{f})(u)\cdot(w_{\hat{f}}-w_{f})(v)dudv\\
= & \frac{1}{n}\int_{0}^{1}\int_{0}^{1}\frac{\min(u,v)-uv}{f(F^{-1}(u))f(F^{-1}(v))}\cdot E\left((w_{\hat{f}}(u)-w_{f}(u))(w_{\hat{f}}(v)-w_{f}(v))\right)dudv\\
\leq & \frac{1}{n}\int_{0}^{1}\int_{0}^{1}\frac{\min(u,v)-uv}{f(F^{-1}(u))f(F^{-1}(v))}\cdot E\left((w_{\hat{f}}(u)-w_{f}(u))^{2}\right)dudv
\end{align*}
which is $o(\frac{1}{n})$ due to \eqref{eq:wn-convergence-condition}.
Therefore \eqref{eq:wn-Fhat-o(rootn)} holds, which concludes the
proof.
\end{proof}

\subsection{Proof of Theorem \ref{thm:efficiency}{\it (i)}}

The median difference, i.e. $\hat{F}_{1}^{_{-1}}(\frac{1}{2})-\hat{F}_{0}^{-1}(\frac{1}{2})$,
is a $\sqrt{n}$ consistent estimate of $\tau$, where $\hat{F}_{1}^{-1}$
and $\hat{F}_{0}^{-1}$ are the empirical quantile functions for $Y$
conditional on $Z=1$ and $Z=0$, respectively. This follows directly
from Theorem A.1.

\subsection{Proof of Theorem \ref{thm:efficiency}{\it (ii)}}

Let $\mathcal{D}_{1}=\{(Z_{i},Y_{i}):1\leq i\leq\frac{n}{2}\}$ and
$\mathcal{D}_{2}=\{(Z_{i},Y_{i}):\frac{n}{2}+1\leq i\leq n\}$ be
a sample split. For simplicity, assume that $\sum_{i=1}^{n/2}Z_{i}=\frac{1}{2}np$
and $\sum_{i=\frac{n}{2}+1}^{n}Z_{i}=\frac{1}{2}np$. Let $\hat{f}_{0(j)}$
be the estimate of $f_{0}\equiv F'_{0}$ based on $\mathcal{D}_{j}$,
$j\in\{1,2\}$, which satisfy the conditions \eqref{eq:fhat-score-convergence}
and \eqref{eq:fhat-info-convergence}.

By the first order Taylor expansion at $\tau$, there exists $\overline{\tau}$
between $\tilde{\tau}$ and $\tau$ such that
\begin{align*}
\hat{\tau}^{if}-\tau\equiv & \tilde{\tau}-\tau+\frac{1}{n}\left(\sum_{i=1}^{n/2}\psi_{\hat{f}_{0(2)}}(Z_{i},Y_{i};\tilde{\tau})+\sum_{i=\frac{n}{2}+1}^{n}\psi_{\hat{f}_{0(1)}}(Z_{i},Y_{i};\tilde{\tau})\right)\\
= & A+(\tilde{\tau}-\tau)(1+B)
\end{align*}
where
\begin{align*}
A & =\frac{1}{n}\left(\sum_{i=1}^{n/2}\psi_{\hat{f}_{0(2)}}(Z_{i},Y_{i};\tau)+\sum_{i=\frac{n}{2}+1}^{n}\psi_{\hat{f}_{0(1)}}(Z_{i},Y_{i};\tau)\right)\\
B & =\frac{1}{n}\cdot\left(\sum_{i=1}^{n/2}\frac{\partial}{\partial\tau}\psi_{\hat{f}_{0(2)}}(Z_{i},Y_{i};\overline{\tau})+\sum_{i=\frac{n}{2}+1}^{n}\frac{\partial}{\partial\tau}\psi_{\hat{f}_{0(1)}}(Z_{i},Y_{i};\overline{\tau})\right).
\end{align*}

Since $\hat{f}_{0(2)}$ is independent of $\{(Z_{i},Y_{i}):1\leq i\leq\frac{n}{2}\}$,
then 
\begin{align*}
 & E\left(\left.\left(\frac{1}{\sqrt{n}}\sum_{i=1}^{n/2}\left(I(\hat{f}_{0(2)})\cdot\psi_{\hat{f}_{0(2)}}(Z_{i},Y_{i};\tau)-I(f_{0})\cdot\psi_{f_{0}}(Z_{i},Y_{i};\tau)\right)\right)^{2}\right|\mathcal{D}_{2}\right)\\
= & \frac{1}{2}E\left(\left.\left(\frac{Z_{1}}{p}\cdot(\frac{\hat{f}_{0(2)}'}{\hat{f}_{0(2)}}-\frac{f_{0}'}{f_{0}})(Y_{1}-\tau)-\frac{1-Z_{1}}{1-p}\cdot(\frac{\hat{f}_{0(2)}'}{\hat{f}_{0(2)}}-\frac{f_{0}'}{f_{0}})(Y_{1})\right)^{2}\right|\mathcal{D}_{2}\right)\\
= & \frac{1}{2p(1-p)}\int\left(\frac{\hat{f}_{0(2)}'}{\hat{f}_{0(2)}}-\frac{f_{0}'}{f_{0}}\right)^{2}(y)dF_{0}(y)
\end{align*}
which is $o_{P}(1)$ according to \eqref{eq:fhat-score-convergence}.
Thus 
\begin{align*}
\frac{I(\hat{f}_{0(2)})}{\sqrt{n}}\cdot\sum_{i=1}^{n/2}\psi_{\hat{f}_{0(2)}}(Z_{i},Y_{i};\tau) & =\frac{I(f_{0})}{\sqrt{n}}\cdot\sum_{i=1}^{n/2}\psi_{f_{0}}(Z_{i},Y_{i};\tau)+o_{P}(1).
\end{align*}
Furthermore, $I(\hat{f}_{0(2)})=I(f_{0})+o_{P}(1)$, hence 
\begin{align*}
\frac{1}{\sqrt{n}}\sum_{i=1}^{n/2}\psi_{\hat{f}_{0(2)}}(Z_{i},Y_{i};\tau) & =\frac{1}{\sqrt{n}}\sum_{i=1}^{n/2}\psi_{f_{0}}(Z_{i},Y_{i};\tau)+o_{P}(1).
\end{align*}
Similarly, we have 
\begin{align*}
\frac{1}{\sqrt{n}}\sum_{i=1+n/2}^{n}\psi_{\hat{f}_{0(1)}}(Z_{i},Y_{i};\tau) & =\frac{1}{\sqrt{n}}\sum_{i=1+n/2}^{n}\psi_{f_{0}}(Z_{i},Y_{i};\tau)+o_{P}(1).
\end{align*}

Therefore, 
\begin{align*}
\sqrt{n}A & =\frac{1}{\sqrt{n}}\sum_{i=1}^{n}\psi_{f_{0}}(Z_{i},Y_{i};\tau)+o_{P}(n^{-1/2})\\
 & \stackrel{d}{\Rightarrow}\mathcal{N}\Bigl(0,\frac{1}{p(1-p)I(f_{0}))}\Bigr).
\end{align*}
From \eqref{eq:fhat-info-convergence}, 
\begin{align*}
B & =\frac{1}{I(f_{0})}\cdot\frac{1}{n}\sum_{i=1}^{n}\frac{Z_{i}}{p}(\frac{f_{0}'}{f_{0}})'(Y_{i}-\tau)+o_{P}(1)\\
 & =-1+o_{P}(1).
\end{align*}

By Slutsky's theorem,
\begin{align*}
\sqrt{n}(\hat{\tau}^{if}-\tau) & =\sqrt{n}A+\sqrt{n}(\tilde{\tau}-\tau)(1+B)\\
 & \stackrel{d}{\Rightarrow}\mathcal{N}\Bigl(0,\frac{1}{p(1-p)I(f_{0})}\Bigr)
\end{align*}
since $\sqrt{n}(\tilde{\tau}-\tau)=O_{P}(1)$ and $1+B=o_{P}(1)$.
This concludes the proof.

\subsection{Proof of Theorem \ref{thm:efficiency}{\it (iii)}}

Let $\mathcal{D}_j$, $j\in\{1,2\}$ be the sample split as above. Let $\hat{F}_{1(j)}$ and $\hat{F}_{0(j)}$
be the empirical distributions of $Y|Z=1$ and $Y|Z=0$ respectively
based on $\mathcal{D}_{j}$, for $j\in\{1,2\}$. Let $w_{\hat{f}_{j}}$
be as defined earlier and assume that $w_{\hat{f}_{j}}$ satisfy the
conditions of Lemma \ref{lem:linear-order-statistics}.

Then the $L$ estimate of $\tau$ can be written as:
\begin{align}
\hat{\tau}^{waq} = & \frac{1}{2}\int_{0}^{1}(\hat{F}_{1(1)}^{-1}(u)-\hat{F}_{0(1)}^{-1}(u))\cdot w_{\hat{f}_{0(2)}}(u)du\nonumber\\  
  & +\frac{1}{2}\int_{0}^{1}(\hat{F}_{1(2)}^{-1}(u)-\hat{F}_{0(2)}^{-1}(u))\cdot w_{\hat{f}_{0(1)}}(u)du.\label{eq:estimator-waq}
\end{align}

According to Lemma \ref{lem:linear-order-statistics}, 
\begin{align*}
\int_{0}^{1}(\hat{F}_{0(1)}^{-1}(u)-F_{0}^{-1}(u))\cdot w_{\hat{f}_{0(2)}}(u)du & =-\frac{2}{n(1-p)I(f_0)}\sum_{i=1}^{n/2}\frac{f_0'}{f_0}(Y_{i})\cdot(1-Z_{i})+o_{P}(n^{-1/2})\\
\int_{0}^{1}(\hat{F}_{1(1)}^{-1}(u)-F_{1}^{-1}(u))\cdot w_{\hat{f}_{0(2)}}(u)du & =-\frac{2}{npI(f_0)}\sum_{i=1}^{n/2}\frac{f_0'}{f_0}(Y_{i}-\tau)\cdot Z_{i}+o_{P}(n^{-1/2})
\end{align*}
where the second equality uses the fact that $F_{1}(\cdot)\equiv F_{0}(\cdot-\tau)$
and $F_{1}^{-1}(u)\equiv F_{0}^{-1}(u)+\tau$. Thus,
\begin{align*}
\int_{0}^{1}(\hat{F}_{1(1)}^{-1}(u)-\hat{F}_{0(1)}^{-1}(u))\cdot w_{\hat{f}_{0(2)}}(u)du= &  -\frac{2}{nI(f_0)}\sum_{i=1}^{n/2}\left(\frac{f_0'}{f_0}(Y_{i}-\tau)\cdot\frac{Z_{i}}{p}-\frac{f_0'}{f_0}(Y_{i})\cdot\frac{1-Z_{i}}{1-p}\right)\\
& +\tau + o_{P}(n^{-1/2}).
\end{align*}
A similar approximation holds for the other term, then by Slutsky's theorem,
\begin{align*}
\hat{\tau}^{waq} & =\tau-\frac{1}{n}\sum_{i=1}^{n}\frac{1}{I(f_0)}\cdot\left(\frac{f_0'}{f_0}(Y_{i}-\tau)\cdot\frac{Z_{i}}{p}-\frac{f_0'}{f_0}(Y_{i})\cdot\frac{1-Z_{i}}{1-p}\right)+o_{P}(n^{-1/2}).
\end{align*}
Hence $\sqrt{n}(\hat{\tau}^{waq}-\tau) \stackrel{d}{\Rightarrow} \mathcal{N}\Bigl(0,\frac{1}{p(1-p)I(f_0)}\Bigr)$.

\section{Proof of partial adaptation with \texorpdfstring{$(\alpha,\beta)$}{(alpha,beta)}-trimmed mean}

\subsection{Main results}
\begin{theorem}
Suppose that $f_0$ falls into the extended Huber family as defined
in Section \ref{section:partial}. If $\alpha$ and $\beta$ are chosen properly, then for the constant additive treatment effect model,
$\hat{\tau}_{\alpha,\beta}$ as defined in (\ref{taualphabeta}) is an efficient estimate of $\tau$.
\end{theorem}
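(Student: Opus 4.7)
\section*{Proof Plan for Theorem on Partial Adaptation}

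The plan is to extend Jaeckel's one-sample adaptivity argument to the two-sample asymmetric setting by (i) identifying the optimal $(\alpha,\beta)$ as functions of $(k_1,k_2)$ in the extended Huber family, (ii) verifying that the asymptotic variance of $\hat\tau_{\alpha,\beta}$ at this optimum coincides with the semiparametric efficiency bound $\frac{1}{p(1-p)I(f_0)}$, and (iii) showing that the data-driven choice $(\hat\alpha,\hat\beta)$ minimizing an empirical variance estimator converges to the optimum, so that $\hat\tau_{\hat\alpha,\hat\beta}$ has the same limit distribution as $\hat\tau_{\alpha^*,\beta^*}$.

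First I would apply the two-sample influence-function representation (2.6) to $\hat\tau_{\alpha,\beta}$, whose weight function is $W'(u)=\frac{1}{1-(\alpha+\beta)}\mathbb{1}_{[\alpha,1-\beta]}(u)$. Using (2.5), this yields a closed form for the influence function $\psi$ and an asymptotic variance $\sigma^2(F_0;\alpha,\beta)/[p(1-p)]$ where $\sigma^2(F_0;\alpha,\beta)$ equals the standard trimmed-mean variance expression derived above for $\hat\mu_\alpha$, but with asymmetric cut-off quantiles $F_0^{-1}(\alpha)$ and $F_0^{-1}(1-\beta)$. Next I would differentiate $\sigma^2(F_0;\alpha,\beta)$ with respect to $\alpha$ and $\beta$ and verify that the first-order conditions are solved by $\alpha^*=F_0(\eta-k_1)$ and $\beta^*=1-F_0(\eta+k_2)$, the quantiles at which the log-density kinks. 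The key calculation here mirrors Huber's classical computation: inside $[-k_1,k_2]$ the score $-f'/f$ is linear with slope $1$, and outside it is constant, so the trimmed mean with cut-offs placed exactly at the kinks produces an influence function proportional to Huber's $\psi_{k_1,k_2}(x)=\max(-k_1/2,\min(x,k_2/2))$ shifted by a centering constant. Computing the resulting variance and comparing against $I(f)^{-1}=1/\int_{-k_1}^{k_2} f(x)\,dx$ (since $(-f'/f)'=\mathbb{1}_{[-k_1,k_2]}$ away from the kinks) confirms that efficiency is achieved at $(\alpha^*,\beta^*)$.

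The second step is to show that the empirical variance estimator $\hat\sigma^2_{\alpha,\beta}$, built by plugging the pooled empirical distribution into the expression for $\sigma^2_{\alpha,\beta}$ (using both treated and control samples with appropriate centering for each), converges uniformly in $(\alpha,\beta)$ over a compact set $[\alpha_0,\alpha_1]\times[\beta_0,\beta_1]\subset(0,1/2)^2$ to its population analog, which is continuous in $(\alpha,\beta)$ with a unique interior minimum at $(\alpha^*,\beta^*)$ for any Huber member. Uniqueness follows because the second-derivative matrix of $\sigma^2$ at $(\alpha^*,\beta^*)$ is positive definite; this can be checked from explicit formulas once one uses the piecewise structure of $f$. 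Uniform convergence plus a standard argmin continuous mapping argument ({\it cf.}\ Jaeckel 1971b) then yields $(\hat\alpha,\hat\beta)\stackrel{P}{\to}(\alpha^*,\beta^*)$.

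Finally I would close the loop by showing stochastic equicontinuity of the map $(\alpha,\beta)\mapsto \sqrt n(\hat\tau_{\alpha,\beta}-\tau)$ in a neighborhood of $(\alpha^*,\beta^*)$, so that the data-driven $(\hat\alpha,\hat\beta)$ can be substituted without affecting the limit distribution. This uses the influence-function expansion (2.6) together with a modulus-of-continuity bound for the empirical quantile process on compact subsets of $(0,1)$, as in Theorem A.1. Combining these steps gives
\begin{equation*}
\sqrt n\bigl(\hat\tau_{\hat\alpha,\hat\beta}-\tau\bigr)\stackrel{d}{\Rightarrow}\mathcal N\!\left(0,\tfrac{1}{p(1-p)I(f_0)}\right),
\end{equation*}
which is the efficiency claim. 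The main obstacle I anticipate is step (ii): verifying uniqueness and computing the Hessian of $\sigma^2(F_0;\alpha,\beta)$ in the asymmetric case, since the closed-form expressions are considerably more delicate than in Jaeckel's symmetric setting, and one must carefully track the contribution of the kinks at $-k_1$ and $k_2$ to ensure the trimmed-mean weight (which places no point mass at the cut-offs) in fact matches the efficient score up to the $o_P(n^{-1/2})$ terms from the boundary.
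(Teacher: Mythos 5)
Your plan for the part of the statement that is actually at issue---fix $(\alpha,\beta)$ at the kink quantiles $\alpha^*=F_0(-k_1)$, $1-\beta^*=1-F_0(k_2)$, use the two-sample L-statistic expansion to get the asymptotic variance $\sigma^2(F_0;\alpha,\beta)/[p(1-p)]$, and verify that at $(\alpha^*,\beta^*)$ this equals the bound $1/[p(1-p)I(f_0)]$ with $I(f_0)=\int_{-k_1}^{k_2}f_0=1-\alpha^*-\beta^*$---is exactly the paper's strategy; the paper just carries out the verification by directly evaluating the double integral $\int\!\!\int\frac{\min(s,t)-st}{f_0(F_0^{-1}(s))f_0(F_0^{-1}(t))}\,ds\,dt$ rather than by matching influence functions. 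Note also that your steps on the data-driven $(\hat\alpha,\hat\beta)$ (uniform convergence of $\hat\sigma^2_{\alpha,\beta}$, argmin consistency, stochastic equicontinuity) are not needed for this statement, which concerns only the deterministic ``properly chosen'' $(\alpha,\beta)$; the adaptive choice is a separate result.

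The one step that would fail as written is your identification of the trimmed mean's influence function. By the formula (2.5) with $W'=\frac{1}{1-\alpha-\beta}1_{[\alpha,1-\beta]}$, the $(\alpha,\beta)$-trimmed mean has influence function $\frac{1}{1-\alpha-\beta}\bigl(\max(F_0^{-1}(\alpha),\min(x,F_0^{-1}(1-\beta)))-\mu(\alpha,\beta)\bigr)$, i.e.\ it winsorizes at the \emph{quantile values} $-k_1$ and $k_2$. The function $\max(-k_1/2,\min(x,k_2/2))$ you write down is the score $-f_0'/f_0$ of the family as literally defined (whose tails have slope $\mp k_j/2$), and it is \emph{not} proportional to the trimmed-mean influence function---the two clip at different points, so asserting proportionality and reading off the variance skips the entire content of the verification. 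Efficiency requires the trimmed-mean influence function to equal $-\tfrac{1}{I}f_0'/f_0$ exactly (up to centering), which forces the tail slopes of $\log f_0$ to be $k_1$ and $-k_2$ so that the score is constant at $-k_1$ and $k_2$ outside the kinks; this is precisely the identity $F_0(-k_1)=f_0(-k_1)/k_1$, $1-F_0(k_2)=f_0(k_2)/k_2$ that the paper's integral computation relies on. You need to either carry out the variance computation explicitly (as the paper does, via integration by parts using $f_0'=-xf_0$ on $(-k_1,k_2)$ and those boundary identities) or correctly match the clipped identity function at $\mp k_j$ to the normalized score; as stated, the claimed match at $\mp k_j/2$ is wrong and the ``resulting variance'' would not come out to $I^{-1}$.
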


\begin{proof}
According to Theorem 1, the information bound is given by $\frac{1}{p(1-p)}I^{-1}$. 
According to Theorem \ref{thm:quantile-to-mean}, we have
\begin{align*}
    \sqrt{n}(\hat{\tau}_{\alpha,\beta}-\tau) \stackrel{d}{\Rightarrow} \mathcal{N}\Bigl(0, p^{-1}\sigma_{f_1}^2 + (1-p)^{-1}\sigma_{f_0}^2\Bigr), \label{eq:var-atrimmedmean-diff}
\end{align*}
where $\sigma_{f_j}^2$ is defined by \eqref{eq:sigma_f^2} for $j\in \{0, 1\}$, i.e. 
\begin{align}
    \sigma_{f_j}^2 = \frac{1}{(1-\alpha-\beta)^2}\int_{\alpha}^{\beta}\int_{\alpha}^{\beta} \frac{\min(s,t) - st}{f_j(F_j^{-1}(s))f_j(F_j^{-1}(t))}dsdt.
\end{align}
Since $F_{1}^{-1}(t)-F_{0}^{-1}(t)\equiv\tau$, then $\sigma_{f_0}^2 = \sigma_{f_1}^2$. Therefore it is sufficient to prove that $\sigma_{f_0}^2 = I^{-1}$.

For notational simplicity, below we restrict the proof to $\mu=0$ and $\sigma=1$, that is $f_{0}$
is parameterized by $(k_{1},k_{2})$ as \eqref{eq:extended-huber} in Section 2.4 and $f_{1}(x)\equiv f_{0}(x-\tau)$.

Let $\alpha=F_{0}(-k_{1})$ and $1-\beta=F_{0}(k_{2})$.

Then $I=-\int(\frac{f_0'}{f_0})^{'}(x)dF_{0}(x)=\int_{-k_{1}}^{k_{2}}dF_{0}(x)=1-(\alpha+\beta).$

First of all, with integral change of variable,
\begin{align*}
 \sigma_{f_0}^2 = & \frac{1}{(1-\alpha-\beta)^{2}}\int_{-k_{1}}^{k_{2}}\int_{-k_{1}}^{k_{2}}(F_{0}(\min(x,y))-F_{0}(x)F_{0}(y))dxdy.
\end{align*}

Furthermore, it is not hard to verify that $F_{0}(-k_{1})=f_{0}(-k_{1})/k_{1}$
and $1-F_{0}(k_{2})=f_{0}(k_{2})/k_{2}$. Using the fact that $f_{0}'(x)=-xf_{0}(x)$
for $x\in(-k_{1},k_{2})$, then
\begin{align*}
\int_{-k_{1}}^{k_{2}}F_{0}(x)dx & =xF_{0}(x)\big|_{-k_{1}}^{k_{2}}-\int_{-k_{1}}^{k_{2}}xf_{0}(x)dx\\
 & =k_{2}\cdot(1-\beta)+k_{1}\alpha+\int_{-k_{1}}^{k_{2}}df_{0}(x)\\
 & =k_{2}\cdot(1-\beta)+k_{1}\alpha+f_{0}(k_{2})-f_{0}(-k_{1})\\
 & =k_{2}\cdot(1-\beta)+k_{1}\alpha+k_{2}\beta-k_{1}\alpha\\
 & =k_{2}
\end{align*}
and
\begin{align*}
 & \int_{-k_{1}}^{k_{2}}\int_{-k_{1}}^{k_{2}}(F_{0}(\min(x,y))dxdy\\
= & 2\int_{-k_{1}}^{k_{2}}\int_{-k_{1}}^{k_{2}}F_{0}(x)\cdot1(x\leq y)dxdy\\
= & 2\int_{-k_{1}}^{k_{2}}F_{0}(x)\cdot(k_{2}-x)dx\\
= & -\int_{-k_{1}}^{k_{2}}F(x)d(x-k_{2})^{2}\\
= & \int_{-k_{1}}^{k_{2}}(x-k_{2})^{2}f(x)dx+(k_{2}+k_{1})^{2}\alpha\\
= & \int_{-k_{1}}^{k_{2}}\left(k_{2}^{2}f(x)+2k_{2}f'(x)-xf'(x)\right)dx+(k_{2}+k_{1})^{2}\alpha\\
= & k_{2}^{2}(1-\alpha-\beta)+2k_{2}(k_{2}\beta-k_{1}\alpha)+(k_{1}+k_{2})^{2}\alpha-\int_{-k_{1}}^{k_{2}}xf'(x)dx\\
= & k_{2}^{2}+\alpha k_{1}^{2}+\beta k_{2}^{2}-\int_{-k_{1}}^{k_{2}}xdf(x)\\
= & k_{2}^{2}+\int_{-k_{1}}^{k_{2}}f(x)dx\\
= & k_{2}^{2}+1-(\alpha+\beta).
\end{align*}
Thus 
\begin{align*}
\sigma_{f_0}^2 & =\frac{1}{(1-(\alpha+\beta))^{2}}\left((k_{2}^{2}+1-(\alpha+\beta))-k_{2}^{2}\right)\\
               & = I^{-1}.
\end{align*}
which completes the proof.
\end{proof}

\begin{theorem}
Let
\begin{align*}
(\hat{\alpha},\hat{\beta}) & =\arg\min_{\alpha_{0}\leq\alpha,\beta\leq\alpha_{1}}(p^{-1}\hat{\sigma}_{1}^{2}(\alpha,\beta)+(1-p)^{-1}\hat{\sigma}_{0}^{2}(\alpha,\beta))
\end{align*}
where $p^{-1}\hat{\sigma}_{1}^{2}(\alpha,\beta)+(1-p)^{-1}\hat{\sigma}_{0}^{2}(\alpha,\beta)$ is the estimate of the asymptotic variance as derived in \eqref{eq:var-atrimmedmean-diff} and $\hat{\sigma}_j^2$ is the estimate of $\sigma_{f_j}^2$ as given in Proposition \ref{prop:alpha-beta-trim-rate}.
If $0<\alpha_{0},\alpha_{1}<1$ and $\alpha_{0}+\alpha_{1}<1$, and
$\sigma_{1}^{2}(\alpha,\beta)$ has a unique solution w.r.t. $(\alpha,\beta)$,
which falls into $(\alpha_{0},\alpha_{1})\times(\alpha_{0},\alpha_{1})$,
then $(\hat{\alpha},\hat{\beta})$ is a consistent estimate of the
optimal $(\alpha,\beta)$.
\end{theorem}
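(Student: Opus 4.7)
The plan is a standard M-estimation consistency argument over the compact parameter box $[\alpha_{0},\alpha_{1}]\times[\alpha_{0},\alpha_{1}]$. Let
\[
Q_{n}(\alpha,\beta) \equiv p^{-1}\hat{\sigma}_{1}^{2}(\alpha,\beta)+(1-p)^{-1}\hat{\sigma}_{0}^{2}(\alpha,\beta)
\]
denote the empirical criterion and
\[
Q(\alpha,\beta) \equiv p^{-1}\sigma_{f_{1}}^{2}(\alpha,\beta)+(1-p)^{-1}\sigma_{f_{0}}^{2}(\alpha,\beta)
\]
its population analog. The hypothesis that $\sigma_{1}^{2}(\alpha,\beta)$ has a unique minimizer $(\alpha^{*},\beta^{*})$ in the interior of the box is the identification condition. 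The task reduces to (i) uniform convergence $\sup_{(\alpha,\beta)}|Q_{n}-Q|\stackrel{P}{\rightarrow}0$ on the compact box and (ii) an appeal to the standard argmin continuity theorem.

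First I would establish the uniform convergence. Write $\hat{\sigma}_{j}^{2}(\alpha,\beta)$ as a smooth functional of the plug-in quantile endpoints $\hat{F}_{j}^{-1}(\alpha)$, $\hat{F}_{j}^{-1}(1-\beta)$ together with trimmed moments of the $Y$ observations lying between them and the trimmed mean $\bar{\mu}$ defined in Section 2.4. Because $\alpha_{0}>0$ and $\alpha_{1}<1$, continuity and strict positivity of $f_{j}$ on the interior of its support yield uniform convergence of the quantile process on $[\alpha_{0},\alpha_{1}]\cup[1-\alpha_{1},1-\alpha_{0}]$, so $\sup_{\alpha}|\hat{F}_{j}^{-1}(\alpha)-F_{j}^{-1}(\alpha)|\stackrel{P}{\rightarrow}0$. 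Combining this with a Glivenko-Cantelli-type uniform law for the indicator-weighted class $\{y\mapsto(y-m)^{2}\mathbf{1}\{a\le y\le b\}:a,b,m\in\R\}$ (finite uniform entropy, with a bounded envelope on any compact trimming band, so finite second moment is automatic) yields $\sup_{(\alpha,\beta)\in[\alpha_{0},\alpha_{1}]^{2}}|\hat{\sigma}_{j}^{2}-\sigma_{f_{j}}^{2}|\stackrel{P}{\rightarrow}0$ for $j\in\{0,1\}$, and hence the required uniform convergence of $Q_{n}$ to $Q$.

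Second, continuity of $(\alpha,\beta)\mapsto\sigma_{f_{j}}^{2}(\alpha,\beta)$ on the compact box follows from its integral representation: the integrand is continuous in $(s,t)$ and the limits of integration depend continuously on $(\alpha,\beta)$, while $f_{j}(F_{j}^{-1}(\cdot))$ is bounded away from zero on any compact subinterval of $(0,1)$ (the extended Huber family has a strictly positive density on $\R$). Continuity of $Q$ together with uniqueness of the interior minimizer implies the standard well-separation condition
\[
\inf\bigl\{Q(\alpha,\beta)-Q(\alpha^{*},\beta^{*}):\|(\alpha,\beta)-(\alpha^{*},\beta^{*})\|\ge\epsilon,\,(\alpha,\beta)\in[\alpha_{0},\alpha_{1}]^{2}\bigr\}>0
\]
for every $\epsilon>0$. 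The standard argmin theorem (e.g.\ van der Vaart, Theorem 5.7) then delivers $(\hat{\alpha},\hat{\beta})\stackrel{P}{\rightarrow}(\alpha^{*},\beta^{*})$; the assumption $(\alpha^{*},\beta^{*})\in(\alpha_{0},\alpha_{1})^{2}$ ensures the minimum is not an artifact of the search boundary.

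The main obstacle is step one: handling the data-dependent trimming endpoints uniformly in $(\alpha,\beta)$. The cleanest route is to decouple convergence of the trimmed empirical moments (as functionals indexed by deterministic $a,b$ and $m$) from convergence of the plug-in endpoints $\hat{F}_{j}^{-1}(\alpha)\to F_{j}^{-1}(\alpha)$, and then compose, rather than attempt to differentiate through the trimming. Everything else is essentially bookkeeping, and this is precisely the structure that makes the Jaeckel-style adaptive choice of trimming proportion consistent in our two-sample setting.
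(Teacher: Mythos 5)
Your proposal is correct and its outer shell --- uniform convergence of the criterion over the compact box $[\alpha_{0},\alpha_{1}]^{2}$ plus the standard argmin continuity theorem under a well-separated interior minimizer --- is exactly the structure of the paper's proof, which simply cites Proposition A.2 (the uniform convergence of $\hat{\sigma}_{j}^{2}$ to $\sigma_{f_{j}}^{2}$) and concludes. Where you genuinely diverge is in how that uniform convergence is obtained. You work in the observation domain: you decouple a Glivenko--Cantelli uniform law for the class $\{y\mapsto(y-m)^{2}\mathbf{1}\{a\le y\le b\}\}$ (with bounded envelope on the compact trimming band) from the convergence of the plug-in endpoints $\hat{F}_{j}^{-1}(\alpha)$, then compose. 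The paper instead stays entirely in the quantile domain: it writes $\sigma^{2}(\alpha,\beta)=\int_{0}^{1}\psi^{2}(F^{-1}(u))\,du$ with $\psi$ the clipped influence function of Lemma A.1, observes that $e(u)=\psi(F^{-1}(u))$ is Lipschitz (with constant $2/(1-\alpha_{0}-\alpha_{1})$) in the quantile function restricted to $[\alpha_{0},1-\alpha_{1}]$, and bounds $\sup_{\alpha,\beta}|\hat{\sigma}^{2}-\sigma^{2}|$ by $|\hat{e}-e|_{\infty}^{2}+2|e|_{\infty}|\hat{e}-e|_{\infty}$ with $|\hat{e}-e|_{\infty}=O_{P}(n^{-1/2})$ from the uniform convergence of the empirical quantile process on that compact band. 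The paper's route is shorter and delivers a uniform $O_{P}(n^{-1/2})$ rate rather than mere $o_{P}(1)$; your route requires no influence-function representation and the empirical-process machinery you invoke is standard, so both are sound --- yours is just somewhat heavier for this particular functional, whose quantile-domain form is already manifestly Lipschitz.
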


\begin{proof}
From Proposition \ref{prop:alpha-beta-trim-rate}, uniform convergence below holds
for $j\in\{0,1\}$:
\begin{align*}
\sup_{\alpha_{0}\leq(\alpha,\beta)\leq\alpha_{1}}|\hat{\sigma}_{j}^{2}(\alpha,\beta)-\sigma_{j}^{2}(\alpha,\beta)| & =o_{P}(1).
\end{align*}
The conclusion holds.
\end{proof}

\begin{remark}
By Proposition \ref{lem:convexity}, the limiting objective function $p^{-1}\sigma_{1}^{2}(\alpha,\beta)+(1-p)^{-1}\sigma_{0}^{2}(\alpha,\beta)$
is strictly convex at the local neighborhood of the optimal $(\alpha,\beta)$.
\end{remark}

\subsubsection{Supporting results}
\begin{lemma} \label{lem:var-atrimmedmean}
The influence function \eqref{eq:W-to-psi}, specialized for the $(\alpha,\beta)$-trimmed mean, can be simplified to
\begin{align*}
    \psi(x) = & \frac{1}{1-\alpha-\beta}(x - \mu(\alpha,\beta)), & \text{ if } x \in (F^{-1}(\alpha), F^{-1}(1-\beta)) \\
            = & \frac{1}{1-\alpha-\beta}(F^{-1}(1-\beta) - \mu(\alpha,\beta)), & \text{ if } x \geq F^{-1}(1-\beta) \\
            = & \frac{1}{1-\alpha-\beta}(F^{-1}(\alpha) - \mu(\alpha,\beta)), & \text{ if } x \leq F^{-1}(\alpha)
\end{align*}
where
\begin{align*}
    \mu(\alpha,\beta) & =\int_{\alpha}^{1-\beta}F^{-1}(u)du + \alpha F^{-1}(\alpha) + \beta F^{-1}(1-\beta).
\end{align*}
The asymptotic variance $\sigma_f^2$, denoted as $\sigma^2(\alpha,\beta)$ for the $(\alpha,\beta)$-trimmed mean, can be written as 
\begin{align*}
\sigma^{2}(\alpha,\beta) & = \int_0^1 \psi^2(F^{-1}(u))du. 
\end{align*}
\end{lemma}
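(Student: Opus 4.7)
My plan is to specialize the general influence-function formula \eqref{eq:W-to-psi} to the weight function $W$ whose Radon--Nikodym derivative is $w(u)=\frac{1}{1-\alpha-\beta}\mathbf{1}\{\alpha\le u\le 1-\beta\}$ (which has no atoms, and automatically satisfies $W(0)=0$, $W(1)=1$), and then simplify the resulting integrals. Write $a=F^{-1}(\alpha)$ and $b=F^{-1}(1-\beta)$ throughout.

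First I would change variables $u=F(y)$ inside $\int_x^{\infty} f(y)^{-1}\, dW(F(y))$. Since $dW(F(y))=w(F(y))f(y)\,dy$, the $f$'s cancel and the integrand reduces to $\frac{1}{1-\alpha-\beta}\mathbf{1}\{a\le y\le b\}$. Evaluating $\int_x^\infty$ piecewise in the three regions $x\le a$, $a<x<b$, $x\ge b$ yields the cumulative weight above $x$ divided by $1-\alpha-\beta$, namely $(b-a)/(1-\alpha-\beta)$, $(b-x)/(1-\alpha-\beta)$, and $0$ respectively.

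Next I would compute the centering constant $C=\int_{-\infty}^{\infty}\frac{F(y)}{f(y)}\,dW(F(y))$ by the same substitution, obtaining $C=\frac{1}{1-\alpha-\beta}\int_a^b F(y)\,dy$, and then integrate by parts:
\begin{equation*}
\int_a^b F(y)\,dy=(1-\beta)b-\alpha a-\int_a^b y f(y)\,dy.
\end{equation*}
Reverting the change of variable $\int_a^b y f(y)\,dy=\int_\alpha^{1-\beta}F^{-1}(u)\,du$ gives $(1-\alpha-\beta)\,C=(1-\beta)b-\alpha a-\int_\alpha^{1-\beta}F^{-1}(u)\,du$, which I would then rewrite as $b-\mu(\alpha,\beta)$ using the identity $\mu(\alpha,\beta)=\int_\alpha^{1-\beta}F^{-1}(u)\,du+\alpha a+\beta b$ (this is simply the definition of $\mu$). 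Adding $C$ to the piecewise values of $-\int_x^\infty(\cdot)$ and using the same identity to recognize $b-\mu(\alpha,\beta)$, $x-\mu(\alpha,\beta)$, $a-\mu(\alpha,\beta)$ in the three regions then produces exactly the three cases of $\psi(x)$ in the statement. The variance formula $\sigma^2(\alpha,\beta)=\int_0^1\psi(F^{-1}(u))^2\,du$ is then immediate from Theorem A.1 by the trivial change of variable $u=F(x)$ in $\int \psi^2(x)\,dF(x)$.

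The only nonroutine step is the integration-by-parts bookkeeping that identifies the centering constant $C$ with the expressions $a-\mu$, $b-\mu$ on the tails; once that identity is in hand, the three pieces glue together mechanically. I would also remark that continuity of $\psi$ at $a$ and $b$ (and $\int\psi\,dF=0$) serve as internal consistency checks on the algebra.
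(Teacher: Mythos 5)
Your proposal is correct, and it is exactly the ``direct calculation'' that the paper's proof (which is omitted) refers to: substitute $w(u)=\frac{1}{1-\alpha-\beta}\mathbf{1}\{\alpha\le u\le 1-\beta\}$ into the influence-function formula, change variables $t=F(y)$, integrate by parts to identify the centering constant with $(F^{-1}(1-\beta)-\mu(\alpha,\beta))/(1-\alpha-\beta)$, and read off the three pieces. The variance identity follows as you say from Theorem A.1 (equivalently, from $\mathbb{E}\,\psi^2(X_1)=\sigma_f^2$ via $\operatorname{Cov}(\mathbf{1}(U\le s),\mathbf{1}(U\le t))=\min(s,t)-st$), and your continuity and mean-zero checks confirm the algebra.
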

\begin{proof}
The proof follows from direct calculation and is omitted.
\end{proof}

\begin{proposition}
\label{prop:alpha-beta-trim-rate}Assume that $f$ is continuous and
that $f(x)>0$ for any $x\in \mathcal{R}$. Let $\hat{\mu}(\alpha, \beta)$ and $\hat{\sigma}^2(\alpha,\beta)$ be the estimate of $\mu(\alpha,\beta)$ and $\hat{\sigma}^2(\alpha,\beta)$ respectively as defined in Lemma \ref{lem:var-atrimmedmean}, by replacing $F$ with $\hat{F}$.
Assume that $\sigma^{2}(\alpha,\beta)$ has a unique minimum w.r.t
$(\alpha,\beta)$ which falls into $(\alpha_{0},\alpha_{1})\times(\alpha_{0},\alpha_{1})$,
where $(\alpha_{0},\alpha_{1})\in(0,1)\times(0,1)$ with $\alpha_{0}+\alpha_{1}<1$.
Let
\begin{align*}
(\hat{\alpha},\hat{\beta}) & =\arg\min_{0<\alpha_{0}\leq\alpha,\beta\leq\alpha_{1}<1}\hat{\sigma}^{2}(\alpha,\beta)
\end{align*}
be the estimate of the optimal $(\alpha,\beta)$. Then $(\hat{\alpha},\hat{\beta})$
is a consistent estimate of the optimal $(\alpha,\beta)$.
\end{proposition}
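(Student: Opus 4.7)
The plan is to apply the standard argmin consistency theorem for M-estimators: (a) show that $\sigma^2(\alpha,\beta)$ is continuous on the compact set $K \equiv [\alpha_0,\alpha_1]^2$, (b) establish uniform convergence $\sup_{(\alpha,\beta)\in K} |\hat{\sigma}^2(\alpha,\beta) - \sigma^2(\alpha,\beta)| = o_P(1)$, and (c) invoke the uniqueness of the minimizer to conclude. Steps (a) and (c) are standard; the substantive work is in (b).

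First I would verify continuity of $\sigma^2(\alpha,\beta)$ on $K$. Because $f$ is continuous and strictly positive, $F^{-1}$ is continuous on $(0,1)$, so $F^{-1}(\alpha)$, $F^{-1}(1-\beta)$ and the integral $\int_\alpha^{1-\beta} F^{-1}(u)du$ appearing in the formulas from Lemma~B.1 are continuous in $(\alpha,\beta)$ on $K$. Thus $\mu(\alpha,\beta)$ and then $\sigma^2(\alpha,\beta)$ are continuous on $K$. On the compact $K$ they are uniformly continuous, which will be used in step (b).

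The core step is uniform convergence. The key lemma to establish is
\[
\sup_{u \in [\alpha_0,\,1-\alpha_0]} \bigl|\hat{F}^{-1}(u) - F^{-1}(u)\bigr| = o_P(1).
\]
This follows from Glivenko--Cantelli together with the fact that $f$ is continuous and bounded below away from $0$ on the compact interval $[F^{-1}(\alpha_0/2),\, F^{-1}(1-\alpha_0/2)]$, which lets us convert uniform closeness of $\hat{F}$ and $F$ into uniform closeness of their inverses on $[\alpha_0,1-\alpha_0]$. Armed with this, I would bound
\[
\Bigl|\int_\alpha^{1-\beta} \hat{F}^{-1}(u)du - \int_\alpha^{1-\beta} F^{-1}(u)du\Bigr| \leq (1-\alpha-\beta)\sup_{u\in[\alpha_0,1-\alpha_0]}\bigl|\hat{F}^{-1}(u)-F^{-1}(u)\bigr|,
\]
uniformly in $(\alpha,\beta)\in K$, and similarly for the endpoint terms $\hat{F}^{-1}(\alpha)$, $\hat{F}^{-1}(1-\beta)$. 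This yields $\sup_K |\hat{\mu}(\alpha,\beta)-\mu(\alpha,\beta)| = o_P(1)$. Using the explicit representation of $\sigma^2(\alpha,\beta)$ as a continuous functional of $\mu(\alpha,\beta)$, $F^{-1}(\alpha)$, $F^{-1}(1-\beta)$ and the truncated second-moment integral $\int_\alpha^{1-\beta}(F^{-1}(u))^2 du$ (each of which is handled by the same device), the uniform convergence of $\hat{\sigma}^2$ to $\sigma^2$ on $K$ follows, where the factor $1/(1-\alpha-\beta)^2$ stays bounded because $\alpha_0+\alpha_1 < 1$.

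Finally, combine (a) and (b) with the assumption that $\sigma^2$ has a unique minimizer $(\alpha^*,\beta^*)$ in the interior of $K$. By the standard M-estimator consistency argument, uniqueness and continuity imply that for every neighborhood $N$ of $(\alpha^*,\beta^*)$ there exists $\delta>0$ with $\inf_{K\setminus N}\sigma^2 \geq \sigma^2(\alpha^*,\beta^*)+\delta$; the uniform convergence then forces $(\hat{\alpha},\hat{\beta})\in N$ with probability tending to one, giving $(\hat{\alpha},\hat{\beta})\stackrel{P}{\to}(\alpha^*,\beta^*)$. The main obstacle is the uniform convergence of $\hat{F}^{-1}$ near the interval endpoints when $(\alpha,\beta)$ ranges over $K$, but the assumption $\alpha_0>0$ and $\alpha_0+\alpha_1<1$ keeps all quantiles uniformly bounded away from $0$ and $1$, so no delicate tail argument is needed.
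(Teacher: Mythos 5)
Your proposal is correct and follows essentially the same route as the paper: both reduce the problem to uniform convergence of $\hat{\sigma}^2$ over the compact parameter set $[\alpha_0,\alpha_1]^2$, obtained from uniform convergence of the empirical quantile function $\hat{F}^{-1}$ on a closed subinterval of $(0,1)$ bounded away from the tails (where $f$ is bounded below), and then invoke the standard argmin-consistency argument using the uniqueness of the minimizer. The only cosmetic difference is that the paper organizes the bound through the influence function $e(u)=\psi(F^{-1}(u))$ and records the sharper rate $O_P(n^{-1/2})$ for the quantile sup, whereas you work directly with $\hat{\mu}$ and the truncated moments and settle for $o_P(1)$, which suffices for the stated conclusion.
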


\begin{proof}
Let $\psi(x)$ be the same as in Lemma \ref{lem:var-atrimmedmean}, and let $\hat{\psi}$ be an estimate of $\psi$  by replacing $F$ with $\hat{F}$. For $u\in (0,1)$, define
\begin{align*}
    e(u) = \psi(F^{-1}(u)) \text{ and }
    \hat{e}(u) =  \hat{\psi}(\hat{F}^{-1}(u)).
\end{align*}

Then
\begin{align*}
|\hat{\sigma}^{2}(\alpha,\beta)-\sigma^{2}| & =|\int_{0}^{1}\hat{e}^{2}(u)du-\int_{0}^{1}e^{2}(u)du|\\
 & =|\int_{0}^{1}(\hat{e}-e)^{2}(u)du+2\int_{0}^{1}e(u)(\hat{e}(u)-e(u))du|\\
 & \leq|\hat{e}-e|_{\infty}^{2}+2|e|_{\infty}\cdot|\hat{e}-e|_{\infty},
\end{align*}
where

\begin{align*}
|e|_{\infty} & =\sup_{\alpha_{0}\leq u\leq1-\alpha_{1}}|e(u)|<\infty\\
|\hat{e}-e|_{\infty} & =\sup_{\alpha_{0}\leq u\leq1-\alpha_{1}}|\hat{e}(u)-e(u)|\\
 & \leq \frac{2}{1-\alpha_0-\alpha_1}\sup_{\alpha_{0}\leq u\leq1-\alpha_{1}}|\hat{F}^{-1}(u)-F^{-1}(u)|\\
 & =O_{P}(n^{-1/2}).
\end{align*}
Therefore the uniform convergence holds:
\begin{align*}
\sup_{0<\alpha_{0}\leq\alpha,\beta\leq\alpha_{1}<1}|\hat{\sigma}^{2}(\alpha,\beta)-\sigma^{2}(\alpha,\beta)| & =o_{P}(1).
\end{align*}
The conclusion follows since the optimal $(\alpha,\beta)$ is unique.
\end{proof}

\begin{proposition}
\label{lem:convexity}Suppose that $f$ is from the extended Huber family as given by \eqref{eq:extended-huber} in Section 2.4. Then $\sigma^2(\alpha,\beta)$ is strictly convex
at the local neighborhood of $(\alpha,\beta)=(F(-k_{1}),1-F(k_{2}))$.
\end{proposition}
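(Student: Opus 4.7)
The plan is to show strict convexity by explicit computation of the Hessian $H$ of $\sigma^2(\alpha,\beta)$ at the critical point $(\alpha^*,\beta^*) = (F(-k_1),\, 1-F(k_2))$ and verification that $H$ is positive definite. Since $\sigma^2$ is $C^2$ in a neighborhood of the interior point $(\alpha^*,\beta^*)$, strict positive-definiteness of $H$ at this point, combined with continuity of second partials, yields strict convexity on a neighborhood.

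The starting point is the closed-form expression from Lemma A.1,
\begin{equation*}
\sigma^{2}(\alpha,\beta) \;=\; \frac{1}{\gamma^{2}}\Bigl[\int_{a}^{b}(x-\mu)^{2}f(x)\,dx \;+\; \alpha(a-\mu)^{2} \;+\; \beta(b-\mu)^{2}\Bigr],
\end{equation*}
with $a=F^{-1}(\alpha)$, $b=F^{-1}(1-\beta)$, $\gamma = 1-\alpha-\beta$ and $\mu=\mu(\alpha,\beta)$. I will first record the building-block identities: $\partial_{\alpha}a = 1/f(a)$, $\partial_{\beta}b = -1/f(b)$, and (differentiating the defining formula for $\mu$ and using the fundamental theorem of calculus) $\partial_{\alpha}\mu = \alpha/f(a)$, $\partial_{\beta}\mu = -\beta/f(b)$. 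Differentiating $\sigma^2$ then produces first-order expressions $\partial_\alpha\sigma^2$ and $\partial_\beta\sigma^2$ that are linear combinations of the quantities $a-\mu$, $b-\mu$, $\gamma$, and $1/f$ at the endpoints. By Theorem A.2, $(\alpha^*,\beta^*)$ is a global (interior) minimizer of $\sigma^2$, so both first partials vanish there; this gives two identities that will be used to eliminate lower-order terms in the second derivative.

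Next, I differentiate again to obtain $H_{\alpha\alpha}$, $H_{\beta\beta}$, and $H_{\alpha\beta}$. At the critical point I specialize using the Huber-family identities already exploited in the proof of Theorem A.2, namely $a^*=-k_1$, $b^*=k_2$, $f(-k_1)=k_1\alpha^*$, $f(k_2)=k_2\beta^*$, and $f'(x) = -x f(x)$ on $(-k_1,k_2)$ (so $\partial_\alpha f(a) = -a = k_1$ via $(f'/f)(a)\cdot(1/f(a))\cdot f(a)$-type identities, etc.). After these substitutions the entries of $H$ collapse to explicit rational functions of $\alpha^*$, $\beta^*$, $k_1$, $k_2$, with manifest $(\alpha,k_1)\leftrightarrow(\beta,k_2)$ symmetry between the diagonal entries, which halves the computational work. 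The diagonal entries are of the form $H_{\alpha\alpha} = c_1(k_1,\alpha^*)/(\gamma^*)^{2}$ with $c_1 > 0$ (coming principally from the $(1-\alpha^*)^2/f(-k_1)$ contribution that survives after the vanishing of first derivatives), and symmetrically for $H_{\beta\beta}$. The off-diagonal $H_{\alpha\beta}$ is smaller in magnitude because the cross-dependence enters only through $\mu$ and $\gamma$. Computing $\det(H) = H_{\alpha\alpha}H_{\beta\beta} - H_{\alpha\beta}^2$ explicitly and checking that it is strictly positive, together with $H_{\alpha\alpha}>0$, gives positive-definiteness.

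The principal obstacle is purely algebraic: carrying the second-derivative calculation through without losing the cancellations that the first-order conditions provide. A convenient organizational device is to change variables from $(\alpha,\beta)$ to $(a,b)$ (equivalently $(-k_1,k_2)$) before differentiating the second time, so that $\partial_a\sigma^2$ and $\partial_b\sigma^2$ are computed directly and the chain rule contributions $\partial a/\partial \alpha = 1/f(a)$, $\partial b/\partial\beta = -1/f(b)$ appear only once. In these variables the Hessian transforms as $H = J^\top\tilde H J$ with $J=\mathrm{diag}(1/f(a),-1/f(b))$, and positive-definiteness of $H$ is equivalent to positive-definiteness of $\tilde H$; the Huber identities make $\tilde H$ particularly clean at $(-k_1,k_2)$, since only in the interior slab is $f'(x)=-xf(x)$ and $\int_{-k_1}^{k_2}f = \gamma^*$, so all required moments reduce to elementary expressions in $k_1,k_2,\alpha^*,\beta^*$ whose positivity is direct.
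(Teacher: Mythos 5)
Your overall strategy coincides with the paper's: evaluate the Hessian of $\sigma^{2}(\alpha,\beta)$ at $(\alpha^{*},\beta^{*})=(F(-k_{1}),1-F(k_{2}))$ using the Huber-family identities $f'(x)=-xf(x)$ on $(-k_{1},k_{2})$, $f(-k_{1})=k_{1}\alpha^{*}$, $f(k_{2})=k_{2}\beta^{*}$, and show it is positive definite. The differences are organizational: the paper differentiates the double-integral representation $(1-\alpha-\beta)^{2}\sigma^{2}=\int\!\!\int\frac{\min(s,t)-st}{f(F^{-1}(s))f(F^{-1}(t))}\,ds\,dt$ and computes all first and second partials explicitly, whereas you start from the truncated-second-moment form of Lemma~A.1 and propose to use vanishing of the first partials (via interior global minimality from Theorem~A.2) to kill lower-order terms, plus a change of variables to $(a,b)$. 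Those choices are legitimate and would likely shorten the algebra; your building-block identities $\partial_{\alpha}a=1/f(a)$, $\partial_{\beta}b=-1/f(b)$, $\partial_{\alpha}\mu=\alpha/f(a)$, $\partial_{\beta}\mu=-\beta/f(b)$ are all correct.

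The genuine gap is that the decisive step --- positive definiteness --- is asserted rather than established, and the heuristics you offer in its place are not sound. You claim the off-diagonal entry is ``smaller in magnitude'' because the cross-dependence enters only through $\mu$ and $\gamma$, and that positivity of the diagonal and of $\det H$ is ``direct.'' In the paper's computation the relevant (rescaled) matrix has diagonal entries $\frac{1-\alpha}{k_{1}^{2}\alpha}-\frac{1}{1-\alpha-\beta}$ and off-diagonal entry $\frac{1}{k_{1}k_{2}}-\frac{1}{1-\alpha-\beta}$; neither the sign of the diagonal entries nor the dominance of the diagonal is apparent by inspection, and positive definiteness genuinely hinges on a determinant identity. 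The paper needs the Huber-specific moment computations $\int_{-k_{1}}^{k_{2}}F_{0}(x)\,dx=k_{2}$ and $\int\!\!\int F_{0}(\min(x,y))\,dx\,dy=k_{2}^{2}+1-\alpha-\beta$, together with $1-\alpha-\beta=k_{1}^{2}\alpha+k_{2}^{2}\beta+(1-\alpha-\beta)\mathbb{E}S^{2}$, to collapse the determinant to $\frac{1-\alpha-\beta}{k_{1}^{2}k_{2}^{2}\alpha\beta}\,\var(S)>0$, where $S$ has density $f/(1-\alpha-\beta)$ on $(-k_{1},k_{2})$. Until you carry out that (or an equivalent) computation, the proposition is not proved; the rest of your plan is sound scaffolding around this missing core. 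Two minor cautions: the identity $H=J^{\top}\tilde{H}J$ for the change of variables is only valid because you evaluate at a critical point (otherwise a gradient term appears), and you should state that explicitly; and concluding positive definiteness from $\det H>0$ requires also checking one diagonal entry is positive, which again is not free here.
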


\begin{proof}
Note that
\begin{align*}
(1-\alpha-\beta)^{2}\sigma^2(\alpha, \beta) & =\int_{\alpha}^{1-\beta}\int_{\alpha}^{1-\beta}\frac{\min(s,t)-st}{f(F^{-1}(s))f(F^{-1}(t))}dsdt\\
 & \equiv RHS
\end{align*}
then
\begin{align*}
\frac{\partial RHS}{\partial\alpha} & =-2\int_{\alpha}^{1-\beta}\frac{\alpha(1-s)}{f(F^{-1}(s))f(F^{-1}(\alpha))}ds\\
 & =-\frac{2\alpha}{f(F^{-1}(\alpha))}\int_{\alpha}^{1-\beta}\frac{1-s}{f(F^{-1}(s))}ds\\
\end{align*}
and 
\begin{align*}
\frac{\partial RHS}{\partial\beta} & =-\frac{2\beta}{f(F^{-1}(1-\beta))}\int_{\alpha}^{1-\beta}\frac{s}{f(F^{-1}(s))}ds.\\
\end{align*}
Hence
\begin{align*}
\frac{\partial^{2}RHS}{\partial\alpha^{2}} & =\frac{2\alpha}{f(F^{-1}(\alpha))}\frac{1-\alpha}{f(F^{-1}(\alpha))}-2\left(\frac{1-\alpha\frac{f'}{f^{2}}(F^{-1}(\alpha))}{f(F^{-1}(\alpha))}\right)\int_{\alpha}^{1-\beta}\frac{1-s}{f(F^{-1}(s))}ds\\
 & =\frac{2\alpha(1-\alpha)}{f^{2}(F^{-1}(\alpha))}-\frac{2(f-\alpha f'/f)}{f^{2}}(F^{-1}(\alpha))\cdot\int_{\alpha}^{1-\beta}\frac{1-s}{f(F^{-1}(s))}ds
\end{align*}
\begin{align*}
\frac{\partial^{2}RHS}{\partial\beta^{2}} & =\frac{2\beta(1-\beta)}{f^{2}(F^{-1}(1-\beta))}-2\left(\frac{f+\beta f'/f}{f^{2}}\right)(F^{-1}(1-\beta))\cdot\int_{\alpha}^{1-\beta}\frac{s}{f(F^{-1}(s))}ds\\
\frac{\partial RHS}{\partial\alpha\partial\beta} & =\frac{2\alpha\beta}{f(F^{-1}(\alpha))f(F^{-1}(1-\beta))}.
\end{align*}

Now to evaluate the values at
\begin{align*}
\alpha & =F(-k_{1})=f(-k_{1})k_{1}^{-1}\\
\beta & =1-F(k_{2})=f(k_{2})k_{2}^{-1},
\end{align*}
we have 
\begin{align*}
\frac{f'}{f}(F^{-1}(\alpha)) & =k_{1}\\
\frac{f'}{f}(F^{-1}(1-\beta)) & =-k_{2}\\
\alpha k_{1} & =f(-k_{1})\\
\beta k_{2} & =f(k_{2}).
\end{align*}
Thus
\begin{align*}
\int_{\alpha}^{1-\beta}\frac{s}{f(F^{-1}(s))}ds & =\int_{-k_{1}}^{k_{2}}F(x)dx=xF(x)|_{-k_{1}}^{k_{2}}-\int_{-k_{1}}^{k_{2}}xf(x)dx=k_{2}\\
\int_{\alpha}^{1-\beta}\frac{1-s}{f(F^{-1}(s))}ds & =(k_{2}+k_{1})-k_{2}=k_{1}
\end{align*}
and
\begin{align*}
RHS & =1-\alpha-\beta\\
\frac{\partial RHS}{\partial\alpha} & =-2\\
\frac{\partial RHS}{\partial\beta} & =-2
\end{align*}
and
\begin{align*}
\frac{\partial^{2}RHS}{\partial\alpha^{2}} & =\frac{2(1-\alpha)}{k_{1}^{2}\alpha}\\
\frac{\partial^{2}RHS}{\partial\beta^{2}} & =\frac{2(1-\beta)}{k_{2}^{2}\beta}\\
\frac{\partial RHS}{\partial\alpha\partial\beta} & =\frac{2}{k_{1}k_{2}}.
\end{align*}
Hence
\begin{align*}
\frac{\partial\sigma^2(\alpha, \beta)}{\partial\alpha} & =\frac{\partial}{\partial\alpha}\frac{RHS}{(1-\alpha-\beta)^{2}}\\
 & =\frac{-2}{(1-\alpha-\beta)^{2}}+RHS\cdot\frac{2}{(1-\alpha-\beta)^{3}}=0\\
\frac{\partial\sigma^2(\alpha, \beta)}{\partial\beta} & =\frac{\partial}{\partial\beta}\frac{RHS}{(1-\alpha-\beta)^{2}}=\frac{-2}{(1-\alpha-\beta)^{2}}+RHS\cdot\frac{2}{(1-\alpha-\beta)^{3}}=0
\end{align*}
and the Hessian matrix can be calculated as below:
\begin{align*}
2\cdot(1,1)^{T}(1,1)\cdot\sigma^2(\alpha,\beta)+(1-\alpha-\beta)^{2}\frac{\partial^{2}\sigma^2(\alpha, \beta)}{\partial(\alpha,\beta)\partial(\alpha,\beta)} & =\left(\begin{array}{cc}
\frac{2(1-\alpha)}{k_{1}^{2}\alpha} & \frac{2}{k_{1}k_{2}}\\
\frac{2}{k_{1}k_{2}} & \frac{2(1-\beta)}{k_{2}^{2}\beta}
\end{array}\right)\\
\end{align*}
so
\begin{align*}
\frac{\partial^{2}\sigma^2(\alpha, \beta)}{\partial(\alpha,\beta)\partial(\alpha,\beta)} & =\frac{2}{(1-\alpha-\beta)^{2}}\left(\begin{array}{cc}
\frac{(1-\alpha)}{k_{1}^{2}\alpha}-\frac{1}{1-\alpha-\beta} & \frac{1}{k_{1}k_{2}}-\frac{1}{1-\alpha-\beta}\\
\frac{1}{k_{1}k_{2}}-\frac{1}{1-\alpha-\beta} & \frac{(1-\beta)}{k_{2}^{2}\beta}-\frac{1}{1-\alpha-\beta}
\end{array}\right).\\
\end{align*}
Note that 
\begin{align*}
\int_{-k_{1}}^{k_{2}}x^{2}f(x)dx & =\int_{-k_{1}}^{k_{2}}-xf'(x)dx\\
 & =\int_{-k_{1}}^{k_{2}}f(x)dx-xf(x)|_{-k_{1}}^{k_{2}}\\
 & =1-\alpha-\beta-(k_{2}^{2}\beta+k_{1}^{2}\alpha)
\end{align*}
that is, 
\begin{align*}
1-\alpha-\beta & =k_{1}^{2}\alpha+k_{2}^{2}\beta+\int_{-k_{1}}^{k_{2}}x^{2}f(x)dx\\
 & =k_{1}^{2}\alpha+k_{2}^{2}\beta+(1-\alpha-\beta)ES^{2}
\end{align*}
where $S$ is defined on $x\in(-k_{1},k_{2})$ with density $\frac{f(x)}{1-\alpha-\beta}$.
Then 
\begin{align*}
ES & =(1-\alpha-\beta)^{-1}\int_{-k_{1}}^{k_{2}}xf(x)dx\\
 & =(1-\alpha-\beta)^{-1}\int_{-k_{1}}^{k_{2}}-f'(x)dx\\
 & =\frac{k_{1}\alpha-k_{2}\beta}{1-\alpha-\beta}
\end{align*}

So the determinant of $\frac{\partial^{2}\sigma^2(\alpha,\beta)}{\partial(\alpha,\beta)\partial(\alpha,\beta)}\times\frac{(1-\alpha-\beta)^{2}}{2}$
is
\begin{align*}
 & (\frac{(1-\alpha)}{k_{1}^{2}\alpha}-\frac{1}{1-\alpha-\beta})(\frac{(1-\beta)}{k_{2}^{2}\beta}-\frac{1}{1-\alpha-\beta})-(\frac{1}{k_{1}k_{2}}-\frac{1}{1-\alpha-\beta})^{2}\\
= & \frac{(1-\alpha)}{k_{1}^{2}\alpha}\frac{(1-\beta)}{k_{2}^{2}\beta}-\frac{1}{k_{1}^{2}k_{2}^{2}}+\frac{1}{1-\alpha-\beta}(\frac{2}{k_{1}k_{2}}-\frac{(1-\alpha)}{k_{1}^{2}\alpha}-\frac{(1-\beta)}{k_{2}^{2}\beta})\\
= & \frac{k_{1}^{2}\alpha+k_{2}^{2}\beta+(1-\alpha-\beta)ES^{2}}{k_{1}^{2}k_{2}^{2}\alpha\beta}-\frac{1}{1-\alpha-\beta}\cdot\frac{k_{1}^{2}\alpha(1-\beta)+k_{2}^{2}\beta(1-\alpha)-2k_{1}k_{2}\alpha\beta}{k_{1}^{2}k_{2}^{2}\alpha\beta}\\
= & \frac{(1-\alpha-\beta)^{2}ES^{2}-(k_{1}^{2}\alpha^{2}+k_{2}^{2}\beta^{2}-2k_{1}k_{2}\alpha\beta)}{(1-\alpha-\beta)k_{1}^{2}k_{2}^{2}\alpha\beta}\\
= & \frac{1-\alpha-\beta}{k_{1}^{2}k_{2}^{2}\alpha\beta}\cdot var(S)
\end{align*}
which is greater than 0 since $\alpha+\beta<1$. This completes the
proof.
\end{proof}

\section{Proof of Theorem \ref{thm:eif-est-general}}

\subsection{Proof of Lemma \ref{thm:eif-general}}

Let $g(y,\theta)=\frac{\partial}{\partial\theta}\log f_{1}(y,\theta)$,
where $f_{1}(y,\theta)$ is the same as in \eqref{eq:f1(y,theta)}.

In order to find the projection of $\dot{\ell}$ on the orthocomplement
of the $\dot{P}_{f}$, we need to find $\text{Q}$ which minimizes
\begin{align*}
\epsilon(Q)\equiv & E\left(\dot{\ell}(Y,Z,\theta)-(ZQ(h^{-1}(Y,\theta),\theta)+(1-Z)Q(Y,\theta))\right)^{2}.
\end{align*}

Since $Z\in\{0,1\}$ and $\dot{\ell}(y,z,\theta)=z\cdot g(y,\theta)$,
we have 
\begin{align*}
\epsilon(Q) & =E\left(Z\cdot(g(Y,\theta)-Q(h^{-1}(Y,\theta),\theta))^{2}\right)+E\left((1-Z)\cdot Q^{2}(Y,\theta)\right)\\
 & =p\cdot E\left((g(Y,\theta)-Q(h^{-1}(Y,\theta),\theta))^{2}|Z=1\right)+(1-p)\cdot E\left(Q^{2}(Y,\theta)|Z=0\right)\\
 & =p\cdot E\left((g(h(Y,\theta),\theta)-Q(Y,\theta))^{2}|Z=0\right)+(1-p)\cdot E\left(Q^{2}(Y,\theta)|Z=0\right)
\end{align*}
where the last equality is due to $P(Y\leq h(y,\theta)|Z=1)=P(Y\leq y|Z=0)$.
Therefore,
\begin{align*}
\epsilon(Q) & =E\left((Q(Y,\theta)-p\cdot g(h(Y,\theta),\theta))^{2}|Z=0\right)+const
\end{align*}
where $const$ does not depend on $Q$. This leads to
\begin{align*}
Q(y,\theta) & =p\cdot g(h(y,\theta),\theta).
\end{align*}
Note that this $Q$ satisfies $\int Q(y,\theta)f_{0}(y)dy=0$ since
\begin{align*}
    E\frac{\partial\ell}{\partial\theta}(Y,Z,\theta) &= E(Zg(Y,\theta))\\
    &= p\cdot E(g(Y,\theta)|Z=1) \\
    &= p\cdot E(g(h(Y,\theta),\theta)|Z=0)
\end{align*}
must be 0.
Thus
\begin{align*}
\dot{\ell}^*(y,z,\theta) & =z\cdot g(y,\theta)-(z\cdot p\cdot g(y,\theta)+(1-z)\cdot p\cdot g(h(y,\theta),\theta))\\
 & =(1-p)\cdot z\cdot g(y,\theta)-p\cdot(1-z)\cdot g(h(y,\theta),\theta).
\end{align*}
and
\begin{align*}
 E(\dot{\ell}^*(Y,Z,\theta))^{2} & =(1-p)^{2}E(Z\cdot g^{2}(Y,\theta))+p^{2}E((1-Z)\cdot g^{2}(h(Y,\theta),\theta))\\
 & =p(1-p)\int g^{2}(h(y,\theta),\theta)f_{0}(y)dy.
\end{align*}

\subsection{Proof of Theorem \ref{thm:eif-est-general}}

The proof is similar to the sample-splitting argument for Claim 3 of Theorem 1 and thus is omitted for conciseness.

\section{Details on the Empirical Implementation}\label{sec:empirical-implementation}

\subsection{Density Estimation}

Our estimators rely on estimates of the density and its first and second derivative.

Using data \(X\) (for instance half the sample of control observations for the estimator with sample splitting), we estimate \(f^{(k)}(x)\) at \(x \in \mathbb{X} \equiv \{X_{(\lceil n_{d} \frac{k}{1000} \rceil)} \; | \; k=1,\dots,999\}\) where \(n_{d}\) is the number of observations in the data \(X\) used for estimating densities.
Estimating the density only at points in the density-estimation sample avoids estimating densities of \(0\) in the tails where there are few observations especially when using sample-splitting.
We use adaptive kernel estimates (Silverman 1986, ch. 5.3) and a triweight kernel, which allows estimation of the first and second derivative of the density. An alternative based on direct estimation of the logarithm of the density and its derivatives is \citet{pinkse2021estimates}.

Density estimation proceeds in two steps.
First, we obtain an initial estimate of the density at each point \(x \in \mathbb{X}\) as
\begin{equation*}
\tilde{f}(x) = \frac{1}{n_{d} h} \sum_{i=1}^{n_{d}} K(\frac{x - X_{i}}{h})
\end{equation*}
where \(K(u)\) is the triweight kernel
\begin{equation*}
K(u) = \frac{35}{32}(1-u^{2})^{3} \cdot 1_{|u| \leq 1}
\end{equation*}
and \(h\) is the bandwidth, chosen as
\begin{equation*}
h = 3.15 \cdot \hat{\sigma}_{X} \cdot n_{d}^{-1/5}
\end{equation*}
where \(3.15\) is the Silverman rule-of-thumb constant for the triweight kernel and
\begin{equation*}
\hat{\sigma}_{X} = \frac{\text{quantile}(X,0.95) - \text{quantile}(X,0.05)}{2 \cdot 1.6449}
\end{equation*}
is an estimate of the standard deviation of \(X\) if \(X\) is normally distributed but less affected by outliers than the conventional unbiased estimator of the variance.

Second, we obtain our final estimate of the density by calculating an \emph{adaptive} bandwidth.
Intuitively, in regions of high density, we can choose a small bandwidth that lowers bias without incurring much of a cost in terms of variance.
In regions of low density, however, a larger bandwidth is needed to have sufficiently many data points near \(x\) to reign in the variance.
Following Silverman (1986, ch. 5.3.1), define the local bandwidth factors \(\lambda_{i}\) for \(X_{i}\) as
\begin{equation*}
\lambda_{i} = \Bigl(\frac{\tilde{f}(X_{i})}{g} \Bigr)^{-\alpha}
\end{equation*}
where \(g\) is the geometric mean of the \(\tilde{f}(X_{i})\):
\begin{equation*}
g = \Bigl(\prod_{i=1}^{n_{d}} \tilde{f}(X_{i})\Bigr)^{1/n_{d}}.
\end{equation*}
Then the estimate of the density is
\begin{equation*}
\hat{f}(x) = \frac{1}{n_{d}} \sum_{i=1}^{n_{d}} \frac{1}{h \lambda_{i}} K(\frac{x - X_{i}}{h \lambda_{i}}).
\end{equation*}

To estimate the first and second derivative of the density, we take \(\lambda_{i}\) as above, and estimate
\begin{equation*}
\begin{aligned}
\hat{f}'(x) & = \frac{1}{n_{d}} \sum_{i=1}^{n_{d}} \frac{1}{(h_{1} \lambda_{i})^{2}} K'(\frac{x - X_{i}}{h_{1} \lambda_{i}}) \\
\hat{f}''(x) & = \frac{1}{n_{d}} \sum_{i=1}^{n_{d}} \frac{1}{(h_{2} \lambda_{i})^{3}} K''(\frac{x - X_{i}}{h_{2} \lambda_{i}})
\end{aligned}
\end{equation*}
with \(h_{1}\) and \(h_{2}\) the Silverman rule-of-thumb bandwidth for estimating the derivatives of the density using a triweight kernel
\begin{equation*}
\begin{aligned}
h_{1} & = 2.83 \cdot \hat{\sigma}_{X} \cdot n_{d}^{-1/7} \\
h_{2} & = 2.70 \cdot \hat{\sigma}_{X} \cdot n_{d}^{-1/9} \\
\end{aligned}
\end{equation*}
and derivatives of the triweight kernel \(K\)
\begin{equation*}
\begin{aligned}
K'(u)  &= \frac{3 \cdot 35}{32}(1-u^{2})^{2} \cdot (-2 u) \cdot 1_{|u| \leq 1} \\
K''(u) &= \frac{2 \cdot 3 \cdot 35}{32}(1-u^{2}) \cdot (5 u^{2} - 1) \cdot 1_{|u| \leq 1} \\
\end{aligned}
\end{equation*}

Then our estimates of the derivatives of the \emph{log} density are, for \(x \in \mathbb{X}\),
\begin{equation*}
\begin{aligned}
\widehat{\frac{\partial \ln f(x)}{\partial x}} & = \frac{\hat{f}'(x)}{\hat{f}(x)} \\
\widehat{\frac{\partial^{2} \ln f(x)}{\partial x^{2}}} & = \frac{\hat{f}(x) \hat{f}''(x) - (\hat{f}'(x))^{2}}{(\hat{f}(x))^{2}} \\
\end{aligned}
\end{equation*}
When we need to evaluate the second derivative at some other point \(x \not\in \mathbb{X}\), we use linear interpolation and nearest-neighbor extrapolation of the estimated derivative of the log density at the points in \(\mathbb{X}\).
In practice, this piecewise-linear approximation causes negligible error because areas where \(\mathbb{X}\) is not tightly spaced have low density by construction and typically very little curvature.

\subsection{Efficient Influence Function Estimator}

The M-estimator given in equation \ref{eq:tau-if} of the main text, substituting equation \ref{eq:psi-additive} for \(\psi\) and without sample splitting for notational convenience, is:
\begin{equation*}
\hat{\tau}^{if} = \tilde{\tau}+\frac{1}{n}\sum_{i=1}^{n}
-\frac{1}{I(f_0)}\cdot\left(\frac{Z}{p}\cdot\frac{f_0'}{f_0}(Y-\tilde{\tau})-\frac{1-Z}{1-p}\cdot\frac{f_0'}{f_0}(Y)\right)
\end{equation*}
The densities can be estimated as described in the previous section.
Rather than this one-step estimate using a single initial estimate \(\tilde{\tau}\) (for instance the difference in medians), we implement the estimator that is the fixed point of the equation above; that is, \(\hat{\tau}^{if} = \tilde{\tau}\).
At this fixed point, the equation simplifies, such that the efficient influence function estimator is the solution to the equation:
\begin{equation}\label{eq:impl-theta-eif-0}
\frac{1}{n_{1}} \sum_{i=1}^{n_{1}}
    \widehat{\frac{\partial \ln f}{\partial x}} (Y_{i}^{(1)} - \hat{\theta}^{if})
-
\frac{1}{n_{0}} \sum_{i=1}^{n_{0}} 
    \widehat{\frac{\partial \ln f}{\partial x}} (Y_{i}^{(0)})
= 0
\end{equation}
where the estimates \(\widehat{\frac{\partial \ln f}{\partial x}}\) either come from the full sample or we use sample-splitting as described in Section 2.
Standard root-finding algorithms appear to work well for finding the solution \(\hat{\theta}^{if}\) to equation~\ref{eq:impl-theta-eif-0} when initialized at reasonable guesses such as the weighted average quantiles estimator (see below) or the difference in medians.

We estimate standard errors as follows:
First, calculate
\begin{equation*}
\hat{I} = -\frac{1}{n_{0}} \sum_{i=1}^{n_{0}} \widehat{\frac{\partial^{2} \ln f}{\partial x^{2}}} (Y_{i}^{(0)})
\end{equation*}
where \(Y_{i}^{(0)}\) for \(i = 0,\dots,n_{0}\) are the control observations.
Then an estimate of the approximate finite sample variance of \(\hat{\theta}^{if}\) is
\begin{equation*}
\widehat{\var(\hat{\theta}^{if})} = \frac{1}{p(1-p)} \frac{1}{\hat{I}} \frac{1}{n}
\end{equation*}
where as before \(p = \frac{n_{1}}{n_{0}+n_{1}}\) is the fraction of treated observations.
This estimator of the variance relies on the correct specification of the additive shift model.

\subsection{Weighted Average Quantile Estimator}

For the weighted average quantiles estimator, we calculate, assuming $n_0 \geq n_1$
\begin{equation}\label{eq:impl-theta-waq}
\hat{\theta}^{waq}
=\sum_{i=1}^{n_{0}} w\bigl(\frac{i}{n_{0}+1}\bigr) \bigl(Y_{(\lceil n_{1}\frac{i}{n_{0}+1} \rceil)}^{(1)} - Y_{(i)}^{(0)}\bigr)
\end{equation}
where for $u\in \{i/(n_0+1): i=1,\cdots,n_0\}$
\begin{equation*}
\begin{aligned}
\tilde{w}(u) & = \widehat{\frac{\partial^{2} \ln f}{\partial x^{2}}} \bigl(\text{quantile}(Y^{(0)},u)\bigr) \\
w(u) & =
\begin{cases}
c \cdot \tilde{w}(u) 
    & \text{if } \left| \frac{\tilde{w}(u)/\sum_{i=1}^{n_{0}} \tilde{w}\bigl(\frac{i}{n_{0}+1}\bigr)}
    {\hat{f}(\hat{F}^{-1}(u)) \cdot \text{MAD}(\hat{f})} \right| 
        < \log(\log(n)) / \log(n) \cdot n^{1/4} \\
0 & \text{otherwise}.
\end{cases}
\end{aligned}
\end{equation*}
Here $c$ is for normalization such that $\sum_{i=1}^{n_0}w(\frac{i}{n_0+1})=1$, and \(\text{quantile}(Y^{(0)},u)\) is the \(u\) quantile of all the control observations for the full-sample estimator.
\(\hat{f}(\hat{F}^{-1}(u))\) is an ad-hoc estimate of the density at the \(u\) quantile calculated as \(\hat{f}(\hat{F}^{-1}(i/(n_{0}+1))) = (1/n_{0}) / \max(\Delta Y_{(i)}^{(0)},\Delta Y_{(\lceil n_{1}\frac{i}{n_{0}+1} \rceil)}^{(1)})\) with \(\Delta\) the first difference operator. The factor $\text{MAD}(\hat{f})$, i.e. the median absolute deviation of the data (multiplied by $1.4826$ such that it estimates the standard deviation if the data are normally distributed), is used so that the truncation condition is both shift-invariant and scale-invariant. 
The truncation of the weights \(\tilde{w}\) addresses violations of condition \eqref{eq:wn-rate} in the tails by truncating observations with excessive (normalized) weight relative to the density at the quantile.
In the simulations, it improves the performance for the Cauchy distribution for some samples with extreme outliers, and otherwise has no effect.

For a sample-splitting estimator, we randomly split the samples of treated and control in half, use the first half to estimate \(w(u)\) (the densities and \(\text{quantile}(Y^{(0)},u)\)), and then evaluate equation \ref{eq:impl-theta-waq} for the second half of the sample given \(w\).
Reverse the roles of the sample splits and average the two estimators, as described in the main text.

Under correct specification, \(\hat{\theta}^{waq}\) and \(\hat{\theta}^{if}\) are asymptotically equivalent, so the standard errors based on \(\hat{I}\) are appropriate for \(\hat{\theta}^{waq}\) as well.

\section{Results With Sample Splitting}

Table \ref{tab:sample-splitting} reports results for the influence function-based and weighted average quantile estimators with and without sample splitting for the simulations shown in Tables~\ref{tab:simulations-parametric}, \ref{tab:house-prices-simulations}, and \ref{tab:health-simulations-log}.

\begin{table}
    \caption{\label{tab:sample-splitting}Results for the simulations with and without sample splitting.}
    \centering
\resizebox{1\linewidth}{!}{
\begin{tabular}{l rrrrr rr rr}
\toprule
& & & & & & \multicolumn{2}{c}{95\% C.I.} & \multicolumn{2}{c}{boot. var. C.I.} \\
\cmidrule(lr){7-8} \cmidrule(lr){9-10}
estimator & bias & \makecell{standard \\ deviation} & \makecell{relative \\ efficiency} & RMSE & MAD & coverage & \makecell{median \\ length} & coverage & \makecell{median \\ length}  \\
\midrule
\multicolumn{10}{l}{\emph{Parametric Distributions}}\\
\multicolumn{10}{l}{Normal Distribution:}\\
eif: full sample 
& 0.000 & 0.014 & 1.02 & 0.014 & 0.010 & 0.95 & 0.056 & 0.95 & 0.057 \\ 
eif: split sample & 0.000 & 0.014 & 1.02 & 0.014 & 0.010 & 0.95 & 0.056 & 0.95 & 0.057 \\ 
waq: full sample 
& 0.000 & 0.014 & 1.02 & 0.014 & 0.010 & 0.95 & 0.056 & 0.95 & 0.056 \\ 
waq: split sample & 0.000 & 0.014 & 1.02 & 0.014 & 0.010 & 0.95 & 0.056 & 0.95 & 0.057 \\ 
\multicolumn{10}{l}{Double Exponential Distribution:}\\
eif: full sample 
& -0.000 & 0.015 & 1.06 & 0.015 & 0.010 & 0.95 & 0.060 & 0.96 & 0.060 \\ 
eif: split sample & -0.000 & 0.015 & 1.07 & 0.015 & 0.010 & 0.95 & 0.060 & 0.96 & 0.061 \\ 
waq: full sample 
& -0.000 & 0.015 & 1.07 & 0.015 & 0.010 & 0.95 & 0.060 & 0.96 & 0.061 \\ 
waq: split sample & -0.000 & 0.015 & 1.08 & 0.015 & 0.010 & 0.95 & 0.060 & 0.96 & 0.062 \\ 
\multicolumn{10}{l}{Cauchy Distribution:}\\
eif: full sample 
& 0.000 & 0.020 & 1.01 & 0.020 & 0.014 & 0.97 & 0.085 & 0.97 & 0.088 \\ 
eif: split sample & 0.000 & 0.021 & 1.03 & 0.021 & 0.014 & 0.96 & 0.085 & 0.98 & 0.093 \\ 
waq: full sample 
& 0.000 & 0.021 & 1.05 & 0.021 & 0.014 & 0.96 & 0.085 & 0.98 & 0.099 \\ 
waq: split sample & 0.000 & 0.022 & 1.08 & 0.022 & 0.015 & 0.95 & 0.085 & 0.99 & 0.110 \\ 
\\
\multicolumn{10}{l}{\emph{House Price Data}}\\
\multicolumn{10}{l}{effect in levels based on additive model in levels}\\
eif: full sample 
& 13 & 941 & 0.97 & 941 & 628 & 0.95 & 3819 & 0.95 & 3768 \\
eif: split sample & 8 & 944 & 0.97 & 944 & 627 & 0.95 & 3819 & 0.95 & 3830 \\ 
waq: full sample 
& 13 & 946 & 0.97 & 946 & 626 & 0.95 & 3819 & 0.95 & 3859 \\
waq: split sample & 7 & 953 & 0.98 & 953 & 628 & 0.95 & 3819 & 0.96 & 3935 \\ 
\multicolumn{10}{l}{multiplicative parameter: additive model in logs}\\
eif: full sample 
& -0.0000 & 0.0067 & 1.08 & 0.0067 & 0.0045 & 0.95 & 0.026 & 0.95 & 0.026 \\ 
eif: split sample & -0.0000 & 0.0067 & 1.09 & 0.0067 & 0.0045 & 0.95 & 0.026 & 0.95 & 0.027 \\ 
waq: full sample 
& -0.0000 & 0.0067 & 1.08 & 0.0067 & 0.0044 & 0.95 & 0.026 & 0.95 & 0.027 \\ 
waq: split sample & -0.0000 & 0.0067 & 1.09 & 0.0067 & 0.0044 & 0.95 & 0.026 & 0.96 & 0.027 \\ 
\multicolumn{10}{l}{effect in levels based on additive model in logs}\\
eif: full sample & -5 & 1363 & 1.08 & 1363 & 914 & 0.95 & 5374 & 0.95 & 5415 \\ 
eif: split sample & -2 & 1367 & 1.09 & 1367 & 913 & 0.95 & 5374 & 0.95 & 5480 \\ 
waq: full sample & -3 & 1364 & 1.08 & 1364 & 910 & 0.95 & 5374 & 0.95 & 5462 \\ 
waq: split sample & -1 & 1369 & 1.09 & 1369 & 909 & 0.95 & 5374 & 0.96 & 5556 \\ 

\multicolumn{10}{l}{Medical Expenditures Data}\\
\multicolumn{10}{l}{effect in levels based on additive model in levels}\\
eif: full sample 
& 3928 & 682 & -- & 3987 & 3922 & 0.01 & 4878 & 0.00 & 3240 \\
eif: split sample & 3884 & 743 & -- & 3954 & 3884 & 0.02 & 4878 & 0.00 & 3670 \\
waq: full sample 
& 3846 & 792 & -- & 3927 & 3848 & 0.03 & 4878 & 0.01 & 4018 \\
waq: split sample & 3830 & 880 & -- & 3930 & 3820 & 0.05 & 4878 & 0.05 & 4713 \\
\multicolumn{10}{l}{multiplicative parameter: additive model in logs}\\
eif: full sample 
& 0.029 & 0.066 & -- & 0.072 & 0.049 & 0.95 & 0.27 & 0.94 & 0.26 \\
eif: split sample & 0.030 & 0.066 & -- & 0.073 & 0.048 & 0.95 & 0.27 & 0.94 & 0.27 \\
waq: full sample 
& 0.028 & 0.066 & -- & 0.071 & 0.047 & 0.95 & 0.27 & 0.95 & 0.27 \\
waq: split sample & 0.027 & 0.066 & -- & 0.072 & 0.048 & 0.95 & 0.27 & 0.95 & 0.27 \\
\multicolumn{10}{l}{effect in levels based on additive model in logs}\\
eif: full sample 
& 3532 & 2525 & -- & 4341 & 3481 & 0.78 & 10452 & 0.76 & 10014 \\ 
eif: split sample & 3565 & 2545 & -- & 4380 & 3511 & 0.78 & 10465 & 0.77 & 10149 \\ 
waq: full sample 
& 3464 & 2523 & -- & 4285 & 3413 & 0.79 & 10441 & 0.77 & 10147 \\ 
waq: split sample & 3437 & 2538 & -- & 4272 & 3384 & 0.79 & 10433 & 0.80 & 10323 \\
\bottomrule
\end{tabular}
}
\end{table}

\end{document}